\newcommand{\fattheta}{\boldsymbol{\theta}}
\newcommand{\dya}[1]{\ket{#1}\!\bra{#1}}
\newcommand{\tout}{{\text{out}}}
\newcommand{\tin}{{\text{in}}}
\newcommand{\DC}{\mathcal{D}}
\newcommand{\EC}{\mathcal{E}}
\newcommand{\LC}{\mathcal{L}}
\newcommand{\MC}{\mathcal{M}}
\newcommand{\NC}{\mathcal{N}}
\newcommand{\PC}{\mathcal{P}}
\newcommand{\RC}{\mathcal{R}}
\newcommand{\SC}{\mathcal{S}}
\newcommand{\YC}{\mathcal{Y}}
\renewcommand{\geq}{\geqslant}
\renewcommand{\leq}{\leqslant}
\renewcommand{\vec}[1]{\boldsymbol{#1}}  
\newcommand{\ad}{^\dagger}
\newcommand*{\id}{\openone}
\newcommand{\rhot}{\tilde{\rho}}
\newcommand{\tot}{\text{tot}}
\newcommand{\thv}{\vb*{\theta}}
\newcommand{\gindex}{x}
\newcommand{\gdimension}{d}
\newcommand{\indicatorvec}{\vb*{\delta}}
\newcommand{\hscopies}{m}
\newcommand{\polydegree}{D}
\newcommand{\learningrate}{\alpha}
\newtheorem{corollary}{Corollary}
\newtheorem{proposition}{Proposition}
\begin{document}

\preprint{APS/123-QED}

\title{Resource frugal optimizer for quantum machine learning}

\author{Charles Moussa}
\email{c.moussa@liacs.leidenuniv.nl}
\affiliation{Theoretical Division, Los Alamos National Laboratory, Los Alamos, New Mexico 87545, USA}%
\affiliation{LIACS, Leiden University, Niels Bohrweg 1, 2333 CA Leiden, Netherlands}%

\author{Max Hunter Gordon}
\affiliation{Theoretical Division, Los Alamos National Laboratory, Los Alamos, New Mexico 87545, USA}
\affiliation{Instituto de Física Teórica, UAM/CSIC, Universidad Autónoma de Madrid, Madrid 28049, Spain}

\author{Michal Baczyk}
\affiliation{Theoretical Division, Los Alamos National Laboratory, Los Alamos, New Mexico 87545, USA}%
\affiliation{Faculty of Physics, University of Warsaw, ulica Pasteura 5, 02-093 Warsaw, Poland}%

\author{M. Cerezo}
\affiliation{Information Sciences, Los Alamos National Laboratory, Los Alamos, NM 87545, USA}

\author{Lukasz Cincio}
\affiliation{Theoretical Division, Los Alamos National Laboratory, Los Alamos, New Mexico 87545, USA}

\author{Patrick J. Coles}
\affiliation{Theoretical Division, Los Alamos National Laboratory, Los Alamos, New Mexico 87545, USA}
\affiliation{Normal Computing Corporation, New York, New York, USA}


\begin{abstract}
Quantum-enhanced data science, also known as quantum machine learning (QML), is of growing interest as an application of near-term quantum computers. Variational QML algorithms have the potential to solve practical problems on real hardware, particularly when involving quantum data. However, training these algorithms can be challenging and calls for tailored optimization procedures. Specifically, QML applications can require a large shot-count overhead due to the large datasets involved. In this work, we advocate for simultaneous random sampling over both the dataset as well as the measurement operators that define the loss function. We consider a highly general loss function that encompasses many QML applications, and we show how to construct an unbiased estimator of its gradient. This allows us to propose a shot-frugal gradient descent optimizer called Refoqus (REsource Frugal Optimizer for QUantum Stochastic gradient descent).  Our numerics indicate that Refoqus can save several orders of magnitude in shot cost, even relative to optimizers that sample over measurement operators alone. 
\end{abstract}

\maketitle

\section{Introduction}

A new kind of data is emerging in recent times: quantum data. Tabletop quantum experiments and analog quantum simulators produce interesting sets of quantum states that must be characterized. Moreover, the rise of digital quantum computers is leading to the discovery of novel quantum circuits that can, once again, produce quantum states of interest. Quantum sensing, quantum phase diagrams, quantum error correction, and quantum dynamics are some of the areas that stand to benefit from quantum data analysis.

Classical machine learning was developed for the processing of classical data, but it is necessarily inefficient at processing quantum data. This issue has given rise to the field of quantum machine learning (QML)~\cite{biamonte2017quantum,schuld2015introduction}. QML has seen the proposal of parameterized quantum models, such as quantum neural networks~\cite{schuld2014quest,cong2019quantum,abbas2020power,nguyen2022atheory}, that could efficiently process quantum data. Variational QML, which involves classically training a parameterized quantum model, is indeed a leading candidate for implementing QML in the near term.

Variational QML, which we will henceforth refer to as QML for simplicity, has faced various sorts of trainability issues. Exponentially vanishing gradients, known as barren plateaus~\cite{mcclean2018barren,cerezo2020cost,holmes2020barren,holmes2021connecting,sharma2020trainability,marrero2020entanglement,uvarov2020barren,arrasmith2020effect,pesah2020absence}, as well as the prevalence of local minima~\cite{bittel2021training,anschuetz2022beyond} are two issues that can impact the complexity of the training process. Quantum hardware noise also impacts trainability~\cite{wang2020noise,franca2020limitations}. All of these issues contribute to increasing the number of shots and iterations required to minimize the QML loss function. Indeed, a detailed shot-cost analysis has painted a concerning picture~\cite{wecker2015progress}.

It is therefore clear that QML requires careful frugality in terms of the resources expended during the optimization process. Indeed, novel optimizers have been developed in response to these challenges. Quantum-aware optimizers aim to replace off-the-shelf classical optimizers with ones that are specifically tailored to the quantum setting~\cite{stokes2020quantum,koczor2019quantum,nakanishi2020sequential}. Shot-frugal optimizers~\cite{kubler2020adaptive,gu2021adaptive,sweke2020stochastic, tamiya2022stochastic} have been proposed in the context of variational quantum eigensolver (VQE), whereby one can sample over terms in the Hamiltonian instead of measuring every term~\cite{arrasmith2020operator}. While significant progress has been made on such optimizers, particularly for VQE, we argue that very little work has been done to specifically tailor optimizers to the QML setting. The cost functions in QML go well beyond those used in VQE and hence QML requires more general tools.

In this work, we generalize previous shot-frugal and iteration-frugal optimizers, such as those in Refs.~\cite{kubler2020adaptive,gu2021adaptive,arrasmith2020operator}, by extending them to the QML setting. Specifically, we allow for random, weighted sampling over both the input and the output of the loss function estimation circuit. In other words, and as shown in Fig.~\ref{fig_schematic}, we allow for sampling over the dataset as well as over the measurement operators used to define the loss function. Our sampling approach allows us to unlock the frugality (i.e., to achieve the full potential) of adaptive stochastic gradient descent optimizers, such as the recently developed iCANS~\cite{kubler2020adaptive} and gCANS~\cite{gu2021adaptive}.

We discuss how our approach applies to various QML applications such as perceptron-based quantum neural networks~\cite{beer2020training,sharma2020trainability}, quantum autoencoders~\cite{romero2017quantum}, variational quantum principal component analysis (PCA)~\cite{larose2019variational,cerezo2020variational}, and classifiers that employ the mean-squared-error loss function~\cite{schuld2020circuit,sweke2020stochastic}. Each of these applications can be unified under one umbrella by considering a generic loss function with a highly general form. Thus we state our main results for this generic loss function. We establish an unbiased estimator for this loss function and its gradient.  In turn, this allows us to provide convergence guarantees for certain optimization routines, like stochastic gradient descent. Furthermore, we show that for this general loss function one can use the form of the estimator to inform a strategy that distributes shots to obtain the best shot frugal estimates. We also show how to construct an unbiased estimator of the log likelihood loss function, which can be used in gradient-free shot frugal optimization.

Finally, we numerically investigate the performance of our new optimization approach, which we call Refoqus (REsource Frugal Optimizer for QUantum Stochastic gradient descent). For a quantum PCA task and a quantum autoencoder task, Refoqus significantly outperforms state-of-the-art optimizers in terms of the shot resources required. Refoqus even outperforms Rosalin~\cite{arrasmith2020operator} - a shot-frugal optimizer that samples only over measurement operators. Hence, Refoqus will be a crucial tool to minimize the number of shots and iterations required in near-term QML implementations. 

\section{Background}\label{sec_Background}

\subsection{Stochastic Gradient Descent}

One of the most popular optimization approaches is gradient descent, which involves the following update rule for the parameter vector:
\begin{equation}
    \thv^{(t+1)} = \thv^{(t)} - \alpha \grad \LC(\thv^{(t)})\,. \label{eq:gd-update}
\end{equation}
Here, $\LC$ is the loss function, $\alpha$ is the learning rate, and $\thv^{(t)}$ is the parameter vector at iteration $t$. 

Oftentimes one only has access to noisy estimates of the gradient $\grad \LC$, in which case the optimizer is called stochastic gradient descent. In the quantum setting, this situation arises due to shot noise or noise due to sampling from terms in some expansion of the gradient.

\subsection{Parameter Shift Rule}

Estimating the gradient is clearly an essential step in stochastic gradient descent. For this purpose, one useful tool that is often employed in the quantum case is the so-called parameter shift rule~\cite{mitarai2018quantum,schuld2019evaluating}. We emphasize that several assumptions go into this rule.  
Specifically, suppose we assume that the quantum circuit ansatz $U(\thv)$  can be expressed as $U(\thv)=\prod_x e^{-i \theta_x \sigma_x}W_x$ where $W_x$ are unparametrized unitaries, and where $\sigma_x$ are Pauli operators. Moreover, we consider the case when the loss function has the simple form $\LC(\thv) = \expval{U^\dagger(\thv) H U(\thv)}{0}$ for some Hermitian operator $H$. (Note that we will consider more complicated loss functions in this work, and hence we are just stating this special case for background information.) In this case, the parameter shift rule gives
\begin{equation}\label{eq:analytic_derivative}
    \partial_{\gindex} \LC(\thv) :=\frac{\partial  \LC(\thv)}{\partial \theta_{\gindex}}= \frac{\LC(\thv+\frac{\pi}{2} \indicatorvec_{\gindex})-\LC(\thv-\frac{\pi}{2} \indicatorvec_{\gindex})}{2}\,,
\end{equation}
where $\indicatorvec_{\gindex}$ is a unit vector with a one on the  $\gindex$-th component. Equation~\eqref{eq:analytic_derivative} allows one to estimate the gradient by estimating the loss function as specific points on the landscape. Hence it simplifies the procedure to estimate the gradient.

\begin{figure}[t]
\centering
\includegraphics[width=1\columnwidth]{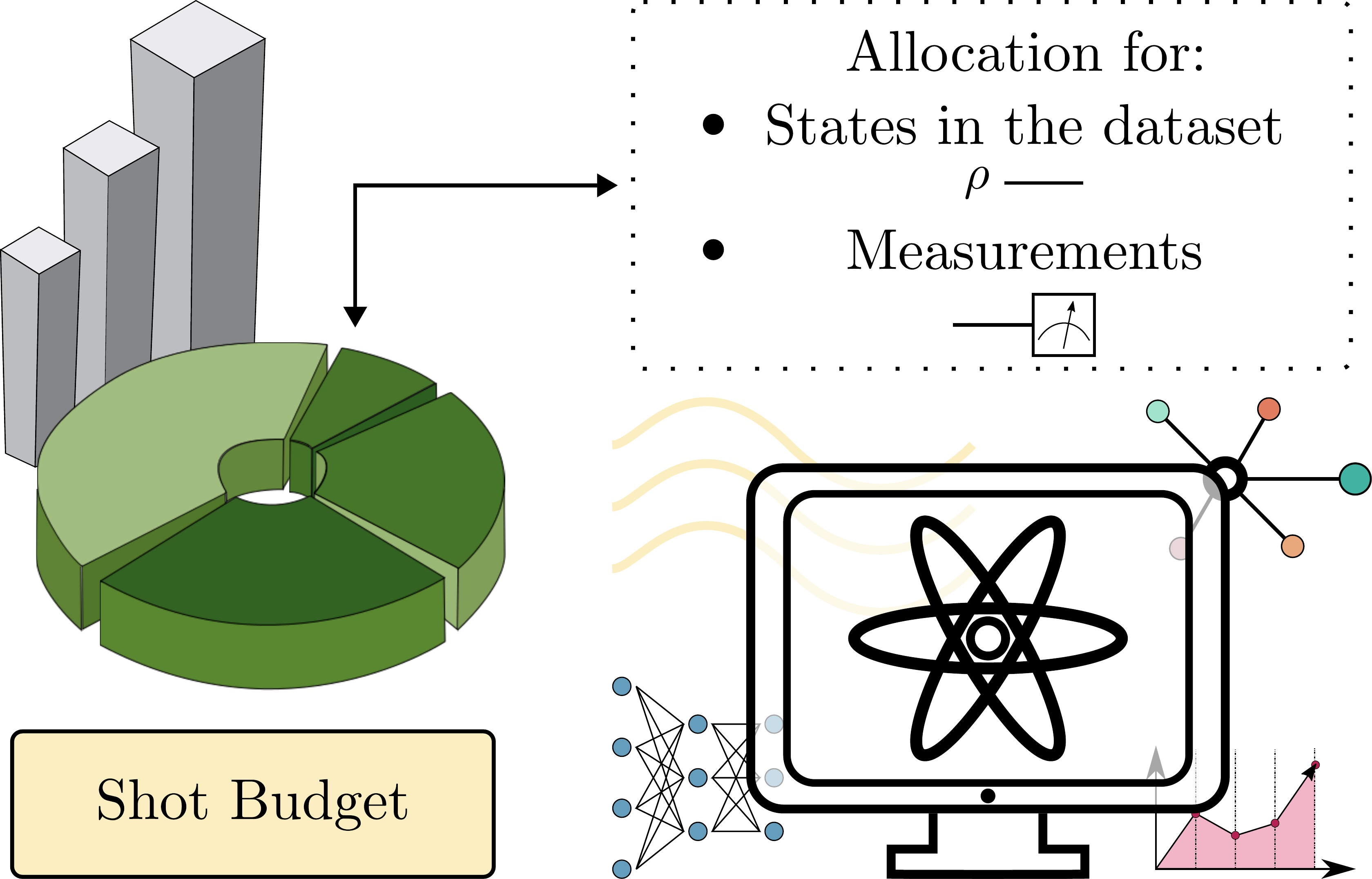}
\caption{\textbf{Schematic illustration of Refoqus.} Measurements, or  shots, are a precious (and expensive) resource in quantum computing. As such, they should be used sparingly and only when absolutely necessary. This is particularly important in variational QML methods where training a model requires continuously calling a  quantum device to estimate the loss function or its gradients. Refoqus provides a shot-frugal optimization paradigm where shots are allocated by sampling over the input data and the measurement operators. }
\label{fig_schematic}
\end{figure}

\subsection{Lipschitz continuity}

The loss function $\LC$ is called Lipschitz continuous if there is some Lipschitz constant $L \ge 0$ that satisfies the bound
\begin{equation}
    \norm{\grad{\LC}(\thv_a) - \grad{\LC}(\thv_b)} \le L \norm{\thv_a - \thv_b}\,, \label{eq:lip-cont}
\end{equation}
for all $\thv_a, \thv_b \in \text{dom}(\LC)$, where $\norm{\cdot}$ is the $\ell_2$ norm. This property provides a recipe for choosing an appropriate learning rate, $\alpha$. Specifically, if~\eqref{eq:lip-cont} holds and we have access to the exact gradient, then choosing $\alpha \le \flatfrac{2}{L}$ is sufficient to guarantee convergence using the update rule in~\eqref{eq:gd-update}.

\subsection{iCANS}

Inspired by an adaptive batch size optimizer used in classical machine learning~\cite{balles2017coupling}, the iCANS (individual coupled adaptive number of shots) optimizer~\cite{kubler2020adaptive} was introduced as an adaptive method for stochastic gradient in the context of the variational quantum eigensolver. It allows the number of shots per partial derivative (i.e., gradient component) to vary individually, hence the name iCANS. 

Consider the gain (i.e., the decrease in the loss function), denoted $\mathcal{G}_{\gindex}$, associated with updating the $\gindex$-th parameter $\theta_{\gindex}$. The goal of iCANS is to maximize the expected gain per shot. That is, for each individual partial derivative, we maximize the shot efficiency:
\begin{equation}
    \gamma_{\gindex} := \frac{\mathbb{E}[\mathcal{G}_{\gindex}]}{s_{\gindex}}; \quad \gindex=1,\ldots,\gdimension\,, \label{eqn:icans-fom}
\end{equation}
where $s_{\gindex}$ is the shot allocation for gradient component $\gindex$ and $\gdimension$ is the number of gradient components. Solving for the optimal shot allocation gives: 
\begin{equation}
    s_{\gindex} = \frac{2L \alpha}{2-L \alpha} \frac{\sigma_{\gindex}^2}{g_{\gindex}^2}.
\end{equation}
Here, $g_{\gindex}$ is an unbiased estimator for the $\gindex$-th gradient component, and $\sigma_{\gindex}$ is the standard deviation of a random variable $X_{\gindex}$ whose sample mean is $g_{\gindex}$. While iCANS often heuristically outperforms other methods, it can have instabilities.

\subsection{gCANS}

Recently, a potential improvement over iCANS was introduced called gCANS (global coupled adaptive number of shots)~\cite{gu2021adaptive}. gCANS considers the expected gain $\mathbb{E}[\mathcal{G}]$ over the entire gradient vector. Then the goal is to maximize the shot efficiency 
\begin{equation}
    \gamma := \frac{\mathbb{E}[\mathcal{G}]}{\sum_{\gindex = 1}^{\gdimension} s_{\gindex}}\,, \label{eqn:gcans-fom}
\end{equation}
where the sum $\sum_{\gindex=1}^{\gdimension} s_{\gindex}$ goes over all components of the gradient. Solving for the optimal shot count then gives:
\begin{equation}
    s_{\gindex} = \frac{2L \alpha}{2-L \alpha} \frac{\sigma_{\gindex} \sum_{\gindex ' = 1}^{\gdimension} \sigma_{\gindex '}}{\norm{\grad{\LC(\thv)}}^2}.
\end{equation}
(Note that an exponential moving average is used to estimate $\sigma_{\gindex}$ and $\norm{\grad{\LC(\thv)}}^2$ as their true value is not accessible.) It was proven that gCANS achieves geometric convergence to the optimum, often reducing the number of shots spent for comparable solutions against its predecessor iCANS.

\subsection{Rosalin}

Shot-frugal optimizers like iCANS and gCANS rely on having an unbiased estimator for the gradient or its components. However, this typically places a hard floor on how many shots must be allocated at each iteration, i.e., going below this floor could result in a biased estimator. This is because the measurement operator $H$ is typically composed of multiple non-commuting terms, each of which must be measured individually. Each of these terms must receive some shot allocation to avoid having a biased estimator. However, having this hard floor on the shot requirement is antithetical to the shot-frugal nature of iCANS and gCANS, and ultimately it handicaps these optimizers' ability to achieve shot frugality.
\par
This issue inspired a recent proposal called Rosalin (Random Operator Sampling for Adaptive Learning with Individual Number of shots)~\cite{arrasmith2020operator}. Rosalin employs weighted random sampling of operators in the measurement operator $H = \sum_j c_j H_j$, which allows one to achieve an unbiased estimator without a hard floor on the shot requirement. (Even a single shot, provided that it is randomly allocated according to an appropriate probability distribution, can lead to an unbiased estimator.) When combined with the shot allocation methods from iCANS or gCANS, the operator sampling methods in Rosalin were shown to be extremely powerful in the context of molecular chemistry problems, which often have a large number of terms in $H$.

We remark that Ref.~\cite{arrasmith2020operator} considered several sampling strategies. Given a budget of $s_{\text{tot}}$ shots and $N$ terms, a simple strategy is to distribute shots per term equally ($s_{j} = s_{\text{tot}} / N$) - referred as uniform deterministic sampling (UDS). Defining $M = \sum_{j} |c_{j}|$, one can also use weighted deterministic sampling (WDS) where the shots are proportionally distributed: $s_{j} = s_{\text{tot}} * \frac{|c_{j}|}{M}$. One can add randomness by using $p_{j} = \frac{|c_{j}|}{M}$ to define a (non-uniform) probability distribution to select which term should be measured. This is referred to as weighted random sampling (WRS). Finally, there exists a hybrid approach where one combines WDS with WRS - referred to as weighted hybrid sampling (WHS). Ref.~\cite{arrasmith2020operator} found that the WRS and WHS strategies performed similarly and they both significantly outperformed the UDS and WDS strategies on molecular ground state problems. Because of these results, we choose to focus on the WRS strategy in our work here.

While Rosalin was designed for chemistry problems, it was not designed for QML, where the number of terms in $H$ is not the only consideration. As discussed below, QML problems involve a (potentially large) dataset of input states. Each input state requires a separate quantum circuit, and hence we are back to the situation of having a hard floor on the shots required due to these multiple input states. This ultimately provides the motivation for our work, which can be viewed as a generalization of Rosalin to the setting of QML.

\section{Framework}
\label{section:framework}
\subsection{Generic Variational QML Framework}

\begin{figure*}[t]
\centering
\includegraphics[width=1\textwidth]{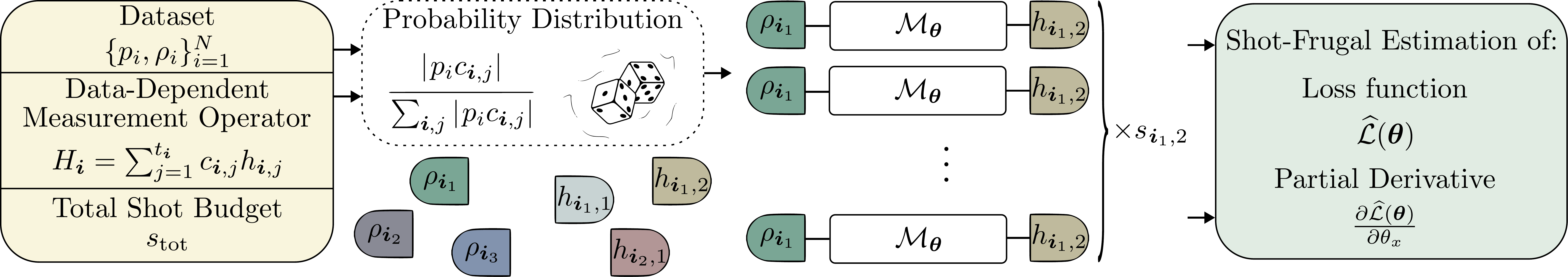}
\caption{\textbf{Illustration of a generic variational QML framework.} Given a dataset of quantum states $\SC$, the input states to the QML model are tensor product states of the form $\rho_{\vec{i}}=\rho_{i_1} \otimes ... \otimes \rho_{i_\hscopies}$. The number of samples $\hscopies$ depends on the QML task at hand. The parameterized quantum model denoted $\mathcal{M}_{\thv}$, acts on $\hscopies$ copies of the input Hilbert space. Finally, an operator $H_{\vec{i}}$ is measured to estimate a quantity to be used when evaluating a loss function $\LC(\thv)$. The latter evaluation is then inputted to a classical optimizer that proposes new parameters $\thv$ in order to minimize the loss. Hence, one can repeat the quantum-classical loop of evaluations and updates until the desired stopping criteria are satisfied.}
\label{fig_Framework}
\end{figure*}

Let us present our general framework for discussing (variational) QML methods; see Fig.~\ref{fig_Framework} for an illustration. This framework is meant to unify multiple literature QML algorithms under one umbrella. We discuss how specific literature algorithms are special cases of this framework in Section~\ref{sct_examples}.

In a generic QML setting, one has a training dataset composed of quantum states:
\begin{equation}
    \SC = \{\rho_i\}_{i=1}^N\,,
\end{equation}
where each $\rho_i$ is a trace-one positive semi-definite matrix, i.e., a density matrix. Each of these training states may come with an associated probability, with the associated probability distribution denoted as
\begin{equation}
    \PC = \{p_i\}_{i=1}^N\,.
\end{equation}

The latter distribution is generally uniform but does not need to be (i.e., datasets with non-uniform probability distributions are used in \cite{liu2018differentiable}).
In variational QML, one trains a parameterized quantum model, which we write as $\mathcal{M}_{\thv}$ for some set of parameters $\thv$. With a large degree of generality, we can assume that $\mathcal{M}_{\thv}$ is a linear, completely-positive map. In general, $\mathcal{M}_{\thv}$ could act on multiple copies ($\hscopies$ copies) of the input Hilbert space. (Multiple copies allow for non-linear operations on the dataset, which can be important in certain classification tasks.) Hence we allow for the output of this action to be of the form: 
\begin{equation}
    \mathcal{M}_{\thv} (\rho_{\vec{i}}) = \mathcal{M}_{\thv} (\rho_{i_1} \otimes ... \otimes \rho_{i_{\hscopies}}) = \mathcal{M}_{\thv}\left( \bigotimes_{\alpha=1}^{\hscopies} \rho_{i_{\alpha}}\right)\,,
\end{equation}
where we employ the notation $\rho_{\vec{i}} := \rho_{i_1} \otimes ... \otimes \rho_{i_{\hscopies}}$.  Given that we are allowing for multiple copies of the input space, one can define an effective dataset $\SC_{\hscopies}$ composed of the tensor product of $m$ states in $\SC$ and an effective probability distribution $\PC_{\hscopies}$.

A QML loss function is then defined in an operational manner so that it could be estimated on a quantum device. This involves considering the aforementioned mathematical objects as well as a measurement operator, or a set of measurement operators. We allow for the measurement operator to be tailored to the input state. Hence we write $H_{\vec{i}}$, with $\vec{i} = \{i_1, ..., i_{\hscopies}\}$, as the measurement operator when the input state on $\hscopies$ copies of the Hilbert space is $\rho_{\vec{i}}=\rho_{i_1} \otimes ... \otimes \rho_{i_{\hscopies}}$. Moreover, each measurement operator can be decomposed into a linear combination of Hermitian matrices that can be directly measured: 
\begin{equation}\label{measurement_decomposed}
    H_{\vec{i}} = \sum_{j=1}^{t_{\vec{i}}} c_{\vec{i},j} h_{\vec{i},j}\,.
\end{equation}

Generically, we could write the loss function as an average over training states chosen from the effective dataset $D_{\hscopies}$:
\begin{equation}\label{eqn_generalloss}
    \LC(\thv) = \sum_{\vec{i}} p_{\vec{i}} \ell (E_{\vec{i}}(\thv))\,.
\end{equation}
Here, $\ell$ is an application-dependent function whose input is a measurable expectation value $E_{\vec{i}}(\thv)$. Specifically, this expectation value is associated with the $\rho_{\vec{i}}$ input state, with the form
\begin{equation}
E_{\vec{i}}(\thv) = \Tr [\mathcal{M}_{\thv} (\rho_{\vec{i}}) H_{\vec{i}}]\,.
\end{equation}

There are many possible forms for the function $\ell$. However, there are multiple QML proposals in the literature that involve a simple linear form:
\begin{equation}\label{eqn_linearform}
\ell (E_{\vec{i}}(\thv)) = E_{\vec{i}}(\thv)\,,
\end{equation}
and in this case, we refer to the overall loss function as a ``linear loss function''. Alternatively, non-linear functions are also possible, and we also consider polynomial functions of the form:
\begin{equation}\label{eqn_polyform}
\ell (E_{\vec{i}}(\thv)) = \sum_{z=0}^{\polydegree} a_z [E_{\vec{i}}(\thv)]^z\,,
\end{equation}
where $\polydegree$ is the degree of the polynomial. In this case, we refer to the loss as a ``polynomial loss function''.

\subsection{Examples of QML loss functions}\label{sct_examples}

Now let us illustrate how various QML loss functions proposed in the literature fall under the previous framework. Crucially, as shown in Fig.~\eqref{fig:applications} these loss functions can be used for a wide range of QML tasks.

\subsubsection{Variational quantum error correction}

Variational quantum algorithms to learn device-tailored quantum error correction codes were discussed in Refs.~\cite{johnson2017qvector,cong2019quantum}. The loss function involves evaluating the input-output fidelity of the code, averaged over a set of input states. The input state is fed into the encoder $\EC_{\thv_1}$, then the noise channel $\NC$ acts, followed by the decoder $\DC_{\thv_2}$. The concatenation of these three channels can be viewed as the overall channel
\begin{equation}
    \MC_{\thv} = \DC_{\thv_2} \circ \NC \circ \EC_{\thv_1}\,,
\end{equation}
with parameter vector $\thv = (\thv_1, \thv_2)$. Then the loss function is given by:
\begin{equation}
    \LC(\thv) = \sum_{\ket{\psi_i} \in \SC } \frac{1}{|\SC|} \Tr [\dya{\psi_i} \MC_{\thv}(\dya{\psi_i})]\,,
\end{equation}
where $\SC$ is some appropriately chosen set of states. It is clear that this loss function is of the form in \eqref{eqn_generalloss} with $\ell$ having the linear form in \eqref{eqn_linearform}.

\begin{figure}[t]
\centering
\includegraphics[width=1\columnwidth]{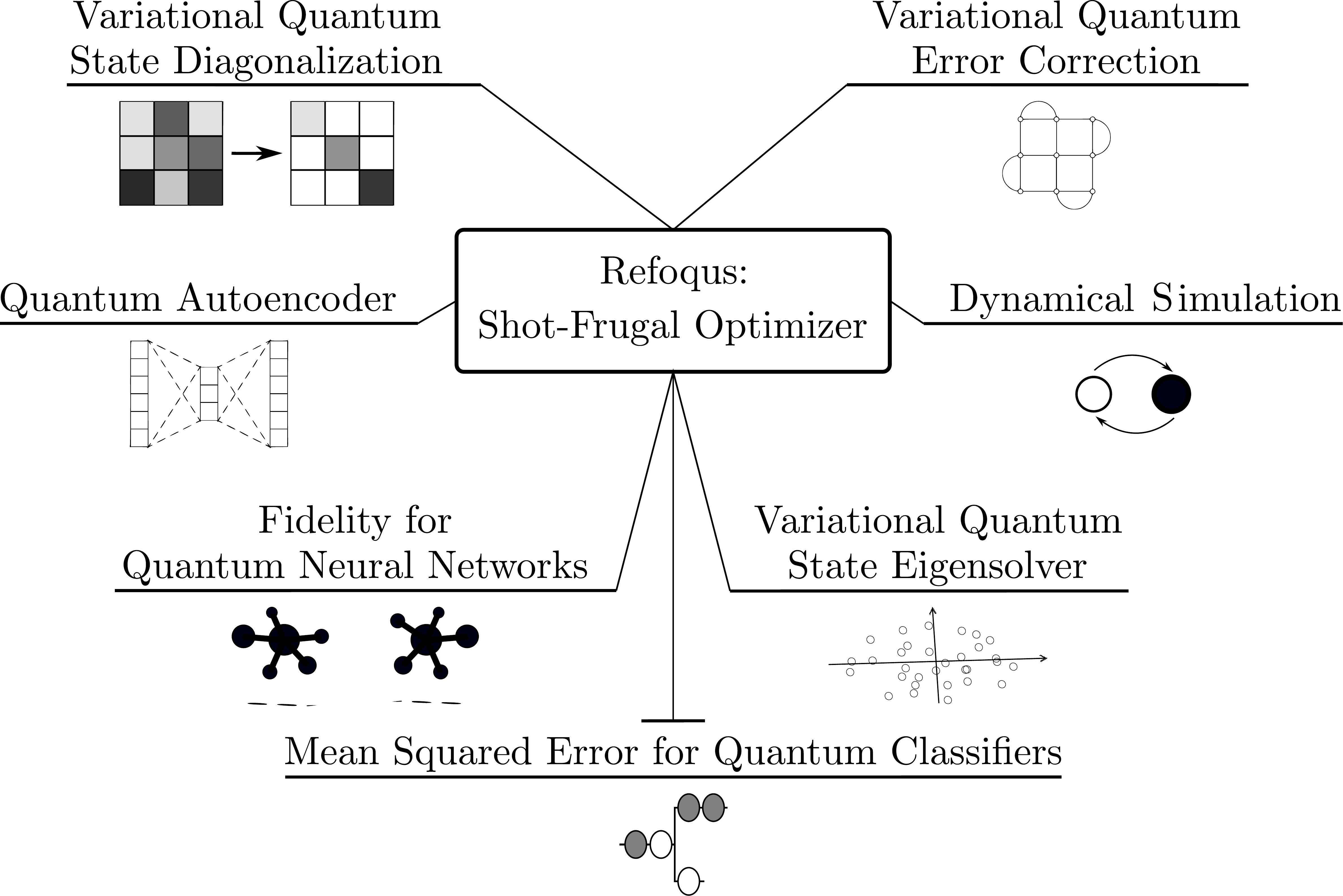}
\caption{\textbf{Applications benefiting from Refoqus}. Many variational quantum algorithms employ training data, and many QML models (which naturally analyze datasets) are variational in nature. We give examples of both of these cases in this figure. Our Refoqus optimizer is relevant in all cases. }
\label{fig:applications}
\end{figure}

\subsubsection{Quantum autoencoder}

Inspired by the success of classical autoencoders, quantum autoencoders~\cite{romero2017quantum, bondarenko2020quantumautoencoders} were proposed to compress quantum data by reducing the number of qubits needed to represent the dataset. Consider a bipartite quantum system $AB$ composed of $n_{\text{A}}$ and $n_{\text{B}}$ qubits, respectively, and let $\{p_i, \ket{\psi_i}\}$ be an ensemble of pure states on $AB$. The quantum autoencoder trains a gate sequence $U(\thv)$ to compress this ensemble into the $A$ subsystem, such that one can recover each state $\ket{\psi_i}$ with high fidelity from the information in subsystem $A$.  One can think of $B$ as the ``trash'' since it is discarded after the action of $U(\thv)$.

The original proposal~\cite{romero2017quantum} employed a loss function that quantified the overlap of the trash state with a fixed pure state:
\begin{align}\label{eq:CG-AE1}
    \LC_{G}(\thv) &=1- \Tr_B[\dya{\vec{0}}\rho_{\text{B}}^{\text{out}}] \\
\label{eq:CG-AE2}   & = \Tr_{AB}[H_{G} U(\thv)\rho_{\text{AB}}^{\text{in}}U(\thv)\ad]\\
&\label{eq:CG-AE3}
=\sum_i  p_i \Tr_{AB}[H_{G} U(\thv)\dya{\psi_i} U(\thv)\ad]\,.
\end{align}
Here, $\rho_{\text{AB}}^{\text{in}}=\sum_i p_i \dya{\psi_i}$ is the ensemble-average input state, $\rho_{\text{B}}^{\text{out}}= \Tr_A[U(\thv)\rho_{\text{AB}}^{\text{in}}U(\thv)\ad]$ is the ensemble-average trash state, and the measurement operator is $H_{G} = \openone_{AB} - \openone_{A} \otimes \ketbra{\vec{0}}$. 

Note that $H_{G}$ is a global measurement operator, meaning it acts non-trivially on all qubits, which can lead to barren plateaus in the training landscape~\cite{cerezo2020cost}. To remedy this issue, Ref.~\cite{cerezo2020cost} proposed a loss function with a local measurement operator acting non-trivially  only on a small number of qubits:
\begin{align}\label{eq:CL-AE}
    \LC_{L}(\thv) &= 1- \frac{1}{n_{\text{B}}}\sum_{j=1}^{n_{\text{B}}}\Tr_B\left[\left(\dya{0}_j\otimes\id_{\overline{j}}\right)\rho_{\text{B}}^{\text{out}}\right] \\
    \label{eq:CL-AE2}   & = \sum_i p_i \Tr_{AB}[H_{L} U(\thv)\dya{\psi_i}U(\thv)\ad]\,,
\end{align}
where $H_{L} = \id_{\text{AB}} - \frac{1}{n_{\text{B}}}\sum_{j=1}^{n_{\text{B}}} \id_{\text{A}}\otimes \dya{0}_j\otimes \id_{\overline{j}}$, and $\id_{\overline{j}}$ is the identity on all qubits in $B$ except the $j$-th qubit.

It is clear from \eqref{eq:CG-AE3} and \eqref{eq:CL-AE2} that both loss functions fall under our framework. Namely, they have the form in \eqref{eqn_generalloss} with $\ell$ having the linear form in \eqref{eqn_linearform}.

\subsubsection{Dynamical simulation}

Recently a QML-based algorithm was proposed for dynamical simulation~\cite{gibbs2022dynamical}. Here the idea is to variationally compile the short-time dynamics into a time-dependent quantum neural network~\cite{cirstoiu2020variational}. Then one uses the trained model to extrapolate to longer times. The training data for the compiling process can be taken to be product states, due to a generalization result from Ref.~\cite{caro2022outofdistribution}. 

Let $U_{\Delta t}$ be a unitary associated with the short-time dynamics. Let $\{\ket{\Psi^{P}_i}\}_{i=1}^N$ be a set of product states used for training, where $\ket{\Psi^{P}_i} = \bigotimes_{j=1}^n \ket{\psi_{i,j}}$, and where $n$ is the number of qubits.  Let $U(\thv)$ be the quantum neural network to be trained. Then the loss function is given by:
\begin{equation}
    \LC(\thv) = \sum_{i=1}^N \frac{1}{N}  \Tr[H_i V(\thv)\dya{\Psi^{P}_i} V(\thv)\ad]\,,
\end{equation}
where we have defined $V(\thv) := U(\thv)\ad U_{\Delta t}$. Here, the measurement operator is given by $H_i = \id - \frac{1}{n}\sum_{j=1}^n \dya{\psi_{i,j}}\otimes \id_{\overline{j}}$, where $\overline{j}$ is the set of all qubits excluding the $j$-th qubit.

Once again, this loss function clearly falls under our framework, having the form in \eqref{eqn_generalloss} with $\ell$ having the linear form in \eqref{eqn_linearform}.

\subsubsection{Fidelity for Quantum Neural Networks}

Dissipative perceptron-based quantum neural networks (DQNNs) were proposed in Ref.~\cite{beer2020training} and their trainability was analyzed in Ref.~ \cite{sharma2020trainability}. The loss function is based on the fidelity between the idealized output state and the actual output state of the DQNN. Specifically, we are given access to training data $\{ \ket{\phi_i^{\tin}}, \ket{\phi_i^{\tout}} \}_{i = 1}^N$, and the DQNN is trained to output a state close to $\ket{\phi_i^{\tout}}$ when the input is $\ket{\phi_i^{\tin}}$. 

For this application, a global loss function was considered with the form
\begin{equation}\label{eq:fidelity_global}
    \LC_G(\thv)=\sum_{i=1}^N \frac{1}{N} \Tr[H_i^G \rho^\tout_i]\,.
\end{equation}
Here, $\rho^\tout_i = \MC_{\thv}(\dya{\phi_i^{\tin}})$ is the output state of the DQNN, which is denoted by $\MC_{\thv}$. The measurement operator is the projector orthogonal to the ideal output state: $H_i^G = \id - \ketbra{\phi_i^{\tout}}{\phi_i^{\tout}}$.

To avoid the issue of barren plateaus, a local loss function was also considered in Ref.~\cite{sharma2020trainability}: 
\begin{equation}\label{eq:fidelity_local}
    \LC_L(\thv)=\sum_{i=1}^N \frac{1}{N} \Tr[H_i^L \rho^\tout_i]\,,
\end{equation}
where the measurement operator is
\begin{equation}\label{eq:localop}
    H_i^L=\id - \frac{1}{n_\tout}\sum_{j=1}^{n_\tout}\dya{\psi^{\tout}_{i,j}}\otimes \id_{\overline{j}}\,.
\end{equation}
This loss function is relevant whenever the ideal output states have a tensor-product form across the $n_\tout$ output qubits, i.e., of the form $\ket{\phi_i^{\tout}} = \ket{\psi^{\tout}_{i,1}} \otimes \cdots \otimes \ket{\psi^{\tout}_{i,n_\tout}} $.

Clearly, these two loss functions fall under our framework, having the form in \eqref{eqn_generalloss} with $\ell$ having the linear form in \eqref{eqn_linearform}.

\subsubsection{Variational quantum state eigensolver}

Near-term methods for quantum principal component analysis have recently been proposed, including the variational quantum state eigensolver (VQSE) \cite{cerezo2020variational} and the variational quantum state diagonalization (VQSD) algorithm. Let us first discuss VQSE.

The goals of VQSE are to estimate the $m$ largest eigenvalues of a density matrix $\rho$ and to find quantum circuits that prepare the associated eigenvectors. When combined with a method to prepare the covariance matrix as a density matrix, VQSE can be used for principal component analysis. Note that such a method was proposed in Ref.~\cite{gordon2022covariance}, where it was shown that choosing $\rho = \sum_i p_i \dya{\psi_i}$ prepares the covariance matrix for a given quantum dataset $\{p_i, \ket{\psi_i}\}$.

The VQSE loss function can be written as an energy:
\begin{align}
    \LC(\thv) &= \Tr [H U(\thv)\rho U(\thv)\ad]\label{eqn_vqse_1} \\
    &= \sum_i p_i \Tr [H U(\thv)\dya{\psi_i} U(\thv)\ad]\label{eqn_vqse_2}
\end{align}
where $U(\thv)$ is a parameterized unitary that is trained to approximately diagonalize $\rho$. Note that we inserted the formula $\rho = \sum_i p_i \dya{\psi_i}$ in order to arrive at \eqref{eqn_vqse_2}.

The measurement operator $H$ is chosen to be non-degenerate over its $m$-lowest energy levels. For example, one can choose a global version of this operator:
\begin{equation} \label{eq:VQSE:global}
    H = \id - \sum_{j=1}^m r_j \dya{e_j},\quad r_j > 0
\end{equation}
or a local version of this operator:
\begin{equation} \label{eq:VQSE:local}
    H = \id - \sum_{j=1}^n r_j Z_j,\quad r_j \in \mathbb{R}\,,
\end{equation} 
where $Z_j$ is the Pauli-$z$ operator on the $j$-th qubit, and with appropriately chosen real coefficients  $r_j$  to achieve non-degeneracy over the $m$-lowest energy levels. Regardless, it is clear that the loss function in \eqref{eqn_vqse_2} falls under our framework, having the form in \eqref{eqn_generalloss} with $\ell$ having the linear form in \eqref{eqn_linearform}.

\subsubsection{Variational quantum state diagonalization}

The goal of the VQSD algorithm~\cite{larose2019variational, kundu2023enhancing} 
is essentially the same as that of VQSE, i.e., to diagonalize a target quantum state $\rho$. Let us use:
\begin{equation} \label{eqn:def-of-rho-prime}
    \rhot := U(\thv) \rho U(\thv)^{\dag}
\end{equation}
to denote the state after the attempted diagonalization. Here we omit the $\thv$ dependency for simplicity of notation. 

In contrast to VQSE, the VQSD loss function depends quadratically on the quantum state. Specifically, global and local cost functions have been proposed, respectively given by
\begin{align}
    \LC_G(\thv) &= \Tr[\rho^2] - \Tr [\mathcal{Z}(\rhot)^2],\\
    \LC_L(\thv) &= \Tr[\rho^2] - \frac{1}{n} \sum_{j=1}^n \Tr [\mathcal{Z}_j(\rhot)^2]\,.
\end{align}
Here, $\mathcal{Z}$ and $\mathcal{Z}_j$ are quantum channels that dephase (i.e., destroy the off-diagonal elements) in the global standard basis and in the local standard basis on qubit $j$, respectively.

We can rewrite the terms in the global loss using:
\begin{align}
     \Tr[\rho^2] &= \Tr[(\rho \otimes \rho) \text{SWAP}]\,, \\   \Tr [\mathcal{Z}(\rhot)^2] &= \Tr[(\rho \otimes \rho) W_G\ad (\ketbra{\vec{0} } \otimes \id) W_G]\,,
\end{align}
where $SWAP$ denotes the swap operator, and 
where $W_G$ corresponds to the layers of CNOTs used in the so-called diagonalized inner product (DIP) test circuit~\cite{larose2019variational}. 
Hence, we obtain:
\begin{align}\label{eq:globalvqsd1}
    \LC_{G}(\thv) &= \Tr[(\rho \otimes \rho) H_{G}]\\
    & = \sum_{i,i'} p_i p_{i'}\Tr[(\dya{\psi_i} \otimes \dya{\psi_{i'}}) H_{G}]\label{eq:globalvqsd2}
\end{align}
where $H_G = \text{SWAP} - U_G\ad (\ketbra{\vec{0} } \otimes \id) U_G$. Note that we inserted the relation $\rho = \sum_i p_i \dya{\psi_i}$ in order to arrive at \eqref{eq:globalvqsd2}. Also note that one can think of the $q_{\vec{i}}:= p_i p_{i'} $ as defining a probability distribution over the index $\vec{i}=\{i,i'\}$. Hence it is clear that \eqref{eq:globalvqsd2} falls under our framework, with $m=2$ copies of the input Hilbert space, and with $\ell$ having the linear form in \eqref{eqn_linearform}.

Similarly, for the local loss, we can write
\begin{align}\label{eq:localvqsd1}
    \LC_{L}(\thv) &= \Tr[(\rho \otimes \rho) H_{L}]\\
    & = \sum_{i,i'} p_i p_{i'}\Tr[(\dya{\psi_i} \otimes \dya{\psi_{i'}}) H_{L}]\label{eq:localvqsd2}
\end{align}
where $H_{L} = \text{SWAP} - (1/n)\sum_{j=1}^n H_{L,j}$, and $ H_{L,j}$ is the Hermitian operator that is measured for the partial diagonalized inner product (PDIP) test~\cite{larose2019variational}. Once again, the local loss falls under our framework, with $\hscopies = 2$ copies of the input Hilbert space, and with $\ell$ having the linear form in \eqref{eqn_linearform}.

\subsubsection{Mean squared error for quantum classifiers}

The mean squared error (MSE)  loss function is widely employed in classical machine learning. Moreover, it has been used in the context of quantum neural networks, e.g., in Ref.~\cite{cong2019quantum}. Given a set of labels $y_i \in \mathbb{R}$ for a dataset $\{\rho_i\}_{i=1}^N$, and a set of predictions from a machine learning model denoted $\tilde{y}_i(\thv)$, the MSE loss is computed as:
\begin{equation}\label{mse_general}
    \LC(\thv) = \frac{1}{N}\sum_{i=1}^N (y_i - \tilde{y}_i(\thv))^2 \,.
\end{equation}

In the case of a quantum model, there is freedom in specifying how to compute the prediction $\tilde{y}_i(\thv)$. Typically, this will be estimated via an expectation value, such as $\tilde{y}_i(\thv) = \Tr[\MC_{\thv}(\rho_i) H_i] = E_i(\thv)$. In this case, the loss function in \eqref{mse_general} would be a quadratic function of expectation value $E_i(\thv)$. Hence, this falls under our framework, with $\ell$ having the polynomial form in \eqref{eqn_polyform} with degree $\polydegree = 2$. 

\subsection{Possible extensions to our framework}

Up until now, we have considered loss functions that have a linear or polynomial dependence on measurable expectation values, $E_i(\thv)$. As shown above, this form encompasses many loss functions used in the literature and we will show that cost functions of this form enable cheap gradient estimation and therefore shot frugal optimization. 

However, there are also more complicated loss functions that have a less trivial dependence on measurable expectation values. One such loss function is the log-likelihood which is commonly used in classification tasks \cite{abbas2020power,thanasilp2021subtleties}.

Let us assume one is provided with a dataset of the form $\SC=\{\rho_i,y_i\}_{i=1}^N$ where $\rho_i$ are quantum states, and $y_i$ are discrete real-valued labels. We aim to train a parametrized model  whose goal is to predict the correct labels. Here one cannot directly apply the techniques previously used, as discrete outputs mean we cannot simply evaluate gradients. Therefore we cannot directly employ the results in~\cite{ balles2017coupling,kubler2020adaptive}. Hence, a different shot-frugal method must be employed. 

Given $\SC$, the likelihood function is defined as 
\begin{equation}
    \Pr(\SC|\thv):=\prod_{i=1}^N\Pr(y_i|\rho_i;\thv)=\prod_{i=1}^Np_i\,,
\end{equation}
where we have defined $p_i=\Pr(y_i|\rho_i;\thv)$ and where it is assumed that the model's predictions only depend on the current input. From here, we can define the negative log-likelihood loss function as
\begin{equation}\label{eq:neg-log-loss}
    \LC(\thv)=-\log\left(\Pr(\SC|\thv)\right)=-\sum_{i=1}^N\log\left(p_i\right)\,.
\end{equation}

In Appendix~\ref{app:log_likelihood} we present a conceptually different approach for shot frugal optimization, which does not rely on the unbiased estimation of gradients. Rather, this approach relies on the construction of an unbiased estimator for the log-likelihood loss function. In particular, we follow the results in~\cite{van2020unbiased} and use inverse binomial sampling in the context of quantum machine learning. We note that this extension is of independent interest to gradient-free shot-frugal optimizers (as one cannot leverage gradients for problems with discrete outputs) but leave further exploration to future work.

\section{Unbiased estimators for gradients of QML losses}

Gradient-based optimizers are commonly used when optimizing QML models. In shot frugal versions of these approaches, one of the key aspects is the construction of an unbiased estimator of the gradient \cite{sweke2020stochastic, kubler2020adaptive,gu2021adaptive}. There are two types of loss functions we will consider in this work; those that have a linear dependence on expectation values and those that have a non-linear dependence given by a polynomial function. We will see that for these two types of losses one can construct unbiased estimators of the gradients and that it is also possible to define sampling strategies that massively reduce the number of shots needed to evaluate such an estimator. 

Previous work considered how to construct unbiased estimators of QML gradients. However, the shot frugal resource allocation strategies presented were sub-optimal \cite{sweke2020stochastic} and leave room for improvement. Furthermore, the more sophisticated shot allocation methods presented in~\cite{arrasmith2020operator} were purpose-built for VQE-type cost functions. Here we unify approaches from these two works and show how one can employ more sophisticated shot allocation strategies in a general QML setting. 

First, we consider the simpler case of linear loss functions where we show one can directly employ the parameter shift rule~\cite{mitarai2018quantum,schuld2019evaluating} to construct an unbiased estimator for the gradient. Then we turn our attention to polynomial loss functions where we present a general form for an unbiased estimator of the gradient. In both cases, we show that shots can be allocated according to the expansion coefficients in the expressions we derive. These in turn depend on the coefficients in the operators to be measured and the set of quantum states used in the QML data set. Such shot-allocation schemes are an important ingredient in the design of our QML shot-frugal optimizer in the next section. We use the U-statistic formalism~\cite{hoeffding1948class} to construct unbiased estimators for the loss function in each case. For an introduction to U-statistics, we refer the reader to Appendix~\ref{app:U-statistics}.

\subsection{Loss functions linear in the quantum circuit observables}
\subsubsection{Using the parameter shift rule}
Loss functions that have a linear dependence on expectation values of observables are straightforward to consider. As shown in Ref.~\cite{sweke2020stochastic}, the parameter shift rule can be used directly to construct unbiased estimators of the gradient of these loss functions. Previously, we wrote our general linear loss function as
\begin{equation}
    \LC(\thv) = \sum_{\vec{i}} p_{\vec{i}} \ell (E_{\vec{i}}(\thv))\,.
\end{equation}
Let us not consider the case where 
\begin{equation}
E_{\vec{i}}(\thv) = \Tr (\mathcal{M}_{\thv} (\rho_{\vec{i}}) H_{\vec{i}})\, \quad
\textnormal{and} \quad
\ell (E_{\vec{i}}(\thv)) = E_{\vec{i}}(\thv)\,.\nonumber
\end{equation}

By expanding the measurement operator as $H_{\vec{i}} = \sum_{j=1}^{t_{\vec{i}}} c_{\vec{i},j} h_{\vec{i},j}$, we can then write this loss function as follows
\begin{equation}
    \LC(\thv)= \sum_{\vec{i}, j} q_{\vec{i},j} \langle h_{\vec{i},j}(\thv) \rangle,
\end{equation}
where $q_{\vec{i}, j} = p_{\vec{i}} c_{\vec{i}, j} $ and $\Tr[ \mathcal{M}_{\thv}(\rho_{\vec{i}}) h_{\vec{i}, j}] =\langle h_{\vec{i},j} (\thv) \rangle$.

Consider the partial derivative of this loss function with respect to the parameter $\theta_{\gindex}$, \begin{equation}
    \frac{\partial \LC}{\partial \theta_{\gindex}} = \sum_{\vec{i}, j} q_{\vec{i},j} \frac{\partial \langle h_{\vec{i},j}(\thv)\rangle}{\partial \theta_{\gindex}}\,. 
\end{equation}
Suppose we assume that the quantum channel $\mathcal{M}_{\thv}(\rho_{i})$ is a unitary channel:
\begin{equation}
    \mathcal{M}_{\thv}(\rho_{i}) = U(\thv)\rho_{i} U(\thv)\ad\,,
\end{equation}
where $U(\thv)$ is a trainable quantum circuit whose parametrized gates are generated by Pauli operators. This assumption allows us to directly apply the  parameter shift rule (see Sec.~\ref{sec_Background}). This leads to
\begin{equation}
        \frac{\partial \LC}{\partial \theta_{\gindex}} = \frac{1}{2}\sum_{\vec{i}, j} q_{\vec{i},j} \big( \langle h_{\vec{i},j}(\thv+\indicatorvec_{\gindex}\frac{\pi}{2})\rangle -  \langle h_{\vec{i},j}(\thv-\indicatorvec_{\gindex}\frac{\pi}{2})\rangle\big)\,.
\end{equation}

Therefore, an unbiased estimator for the gradient can be obtained by combining two unbiased estimators for the loss function. Defining $\widehat{g}_{\gindex}(\thv)$ to be an unbiased estimator of the $\gindex$-th component of the gradient,
\begin{equation}\label{eq:gradient_estimator}
    \widehat{g}_{\gindex}(\thv) = \frac{1}{2}[\widehat{\LC}(\thv+\indicatorvec_{\gindex}\frac{\pi}{2}) - \widehat{\LC}(\thv+\indicatorvec_{\gindex}\frac{\pi}{2})]\,,
\end{equation}
where $\widehat{\LC}(\thv \pm \indicatorvec_{\gindex}\frac{\pi}{2})$ are unbiased estimators for the loss function at the different shifted parameter values needed when employing the parameter shift rule. This means that for loss functions that are linear in the expectation values recovered from the quantum circuit, one can then use an unbiased estimator for the cost evaluated at different parameter values to return an unbiased estimator for the gradient. 

We can therefore distribute the shots according to the coefficients $q_{\vec{i},j}$ when evaluating a single or multi-shot estimate of $\widehat{\LC}(\thv \pm \indicatorvec_{\gindex}\frac{\pi}{2})$. This can be done by constructing a probability distribution according to the probabilities $\epsilon_{\vec{i}, j} = |q_{\vec{i},j}|/ \sum_{\vec{i},j} |q_{\vec{i},j}|$. Note that this distribution strategy relies on the construction of two unbiased estimators of the loss function, which are then combined. In the next section, we show how one can construct such estimators. In practice, we will use the same total number of shots to evaluate both estimators as they have the same expansion weights and are therefore equally important. 

\subsubsection{Unbiased estimators for Linear loss functions}
\label{unbiasedestimatorslinearsection}
Let $s_{\text{tot}}$ represent the total number of shots. We denote $\widehat{\mathcal{E}}_{\vec{i},j} $ as the estimator for $\langle h_{\vec{i},j} (\thv) \rangle$, and $\widehat{\LC}(\thv)$ the estimator for the loss. That is,
\begin{equation}\label{eq:estimators}
    \widehat{\LC}(\thv)= \sum_{\vec{i}, j} q_{\vec{i},j}  \widehat{\mathcal{E}}_{\vec{i},j} \,,\quad\text{with}\quad
    \widehat{\mathcal{E}}_{\vec{i},j} =  \frac{1}{\mathbb{E}[s_{\vec{i},j}]}\sum_{k=1}^{s_{\vec{i},j}} r_{\vec{i},j,k}\,.
\end{equation}

Here, $s_{\vec{i}, j}$ is the number of shots allocated to the measurement of $\langle h_{\vec{i},j} (\thv) \rangle$. Note that $s_{\vec{i},j}$ may be a random variable. As we will work in terms of the total shot budget for the estimation, $s_{\text{tot}}$, we impose $\sum_{\vec{i}, j} s_{\vec{i}, j} = s_{\text{tot}} $. Also, each $r_{\vec{i},j,k}$ is an independent random variable associated with the $k$-th single-shot measurement of $\langle h_{\vec{i},j} (\thv) \rangle$. We will assume that $\mathbb{E}[s_{\vec{i},j}]>0$ for all $\vec{i},j$. Using these definitions we can show that this defines an unbiased estimator for the loss function in the following proposition.
\begin{proposition}
\label{prop-unbiased-mother-cost}
Let $\widehat{\LC}$ be the estimator defined in Eq.~\eqref{eq:estimators}. $\widehat{\LC}$ is an unbiased estimator for the cost function $\LC(\thv)$ defined in Eq.~\eqref{eqn_generalloss}.
\end{proposition}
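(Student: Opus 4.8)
The plan is to show that $\mathbb{E}[\widehat{\LC}(\thv)] = \LC(\thv)$ by computing the expectation directly from the definition in Eq.~\eqref{eq:estimators} and invoking linearity of expectation together with the defining property of each single-shot estimator. The target is
\begin{equation}
\LC(\thv) = \sum_{\vec{i},j} q_{\vec{i},j}\, \langle h_{\vec{i},j}(\thv)\rangle\,,\nonumber
\end{equation}
so the whole argument reduces to showing that each $\widehat{\mathcal{E}}_{\vec{i},j}$ is an unbiased estimator of $\langle h_{\vec{i},j}(\thv)\rangle$, after which linearity finishes the job since the $q_{\vec{i},j}$ are fixed (non-random) coefficients.

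**First I would** fix the single-shot random variables. Each $r_{\vec{i},j,k}$ is the outcome of one measurement of $h_{\vec{i},j}$ on the state $\mathcal{M}_{\thv}(\rho_{\vec{i}})$, so by construction $\mathbb{E}[r_{\vec{i},j,k}] = \Tr[\mathcal{M}_{\thv}(\rho_{\vec{i}}) h_{\vec{i},j}] = \langle h_{\vec{i},j}(\thv)\rangle$ for every shot index $k$, and the $r_{\vec{i},j,k}$ are independent and identically distributed across $k$. The subtlety is that the number of shots $s_{\vec{i},j}$ is itself a random variable, so the inner sum $\sum_{k=1}^{s_{\vec{i},j}} r_{\vec{i},j,k}$ is a sum with a random number of terms. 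I would therefore handle the expectation of $\widehat{\mathcal{E}}_{\vec{i},j}$ by conditioning on $s_{\vec{i},j}$ (a stopping-time / Wald-type argument): conditional on $s_{\vec{i},j}=s$, the expectation of the inner sum is $s\,\langle h_{\vec{i},j}(\thv)\rangle$, and then taking the expectation over $s_{\vec{i},j}$ gives $\mathbb{E}[s_{\vec{i},j}]\,\langle h_{\vec{i},j}(\thv)\rangle$. Dividing by the normalization $\mathbb{E}[s_{\vec{i},j}]$ (which is positive by assumption) yields exactly $\langle h_{\vec{i},j}(\thv)\rangle$.

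**The main obstacle** — and the only place any care is needed — is justifying the interchange of the sum over shots with the expectation when $s_{\vec{i},j}$ is random, and in particular the independence structure required for Wald's identity: the shot allocation $s_{\vec{i},j}$ must not depend on the future single-shot outcomes (equivalently, $s_{\vec{i},j}$ is determined before the measurements are drawn, or is a stopping time with respect to the $r_{\vec{i},j,k}$). Under the paper's setup the allocation is decided by the sampling strategy prior to measurement, so this independence holds and Wald's identity applies cleanly; I would state this assumption explicitly. Once $\mathbb{E}[\widehat{\mathcal{E}}_{\vec{i},j}] = \langle h_{\vec{i},j}(\thv)\rangle$ is established, I would conclude by
\begin{equation}
\mathbb{E}[\widehat{\LC}(\thv)] = \sum_{\vec{i},j} q_{\vec{i},j}\, \mathbb{E}[\widehat{\mathcal{E}}_{\vec{i},j}] = \sum_{\vec{i},j} q_{\vec{i},j}\, \langle h_{\vec{i},j}(\thv)\rangle = \LC(\thv)\,,\nonumber
\end{equation}
which is the claim. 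The remaining steps are purely mechanical, so the substance of the proof lives entirely in the unbiasedness of the individual $\widehat{\mathcal{E}}_{\vec{i},j}$.
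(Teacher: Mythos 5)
Your proposal is correct and follows essentially the same route as the paper: linearity of expectation over the fixed coefficients $q_{\vec{i},j}$, then Wald's equation to handle the random shot count $s_{\vec{i},j}$, with the same independence caveat (that the shot allocation does not depend on the measurement outcomes) that the paper also states explicitly. Your added remark about treating $s_{\vec{i},j}$ as a stopping time is a slightly more careful phrasing of the same hypothesis, but the substance of the argument is identical.
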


\begin{proof}
\begin{align}
    \mathbb{E}[\widehat{\LC}] = \mathbb{E}[\sum_{\vec{i}, j} q_{\vec{i},j}   \frac{1}{\mathbb{E}[s_{\vec{i},j}]}\sum_{k=1}^{s_{\vec{i},j}} r_{\vec{i},j,k}] \nonumber\\
    = \sum_{\vec{i}, j} q_{\vec{i},j}   \frac{1}{\mathbb{E}[s_{\vec{i},j}]} \mathbb{E}[\sum_{k=1}^{s_{\vec{i},j}} r_{\vec{i},j,k}].
\end{align}
Here it is useful to recall Wald's equation. Wald's equation states that the expectation value of the sum of $N$ real-valued, identically distributed, random variables, $X_{i}$, can be expressed as
\begin{equation}
    \mathbb{E}[\sum\limits_{i=1}^{N} X_{i}] = \mathbb{E}[N]\mathbb{E}[X_{1}],
\end{equation}
where $N$ is a random variable that does not depend on the terms of the sum. In our case, each shot is indeed independent and sampled from the same distribution. Furthermore, the total number of shots does not depend on the sequence of single-shot measurements. Therefore,
\begin{align}
    \mathbb{E}[\widehat{\LC}]= \sum_{\vec{i}, j} q_{\vec{i},j}   \frac{\mathbb{E}[s_{\vec{i},j}]}{\mathbb{E}[s_{\vec{i},j}]} \langle h_{\vec{i},j} (\thv) \rangle = \LC(\thv).
\end{align}
\end{proof}
Alternatively, one can see that the estimator defined in Eq.~\eqref{eq:estimators} has the form of a degree-$1$ U-statistic for $\widehat{\LC}$. From the above result, we arrive at the following corollary.
\begin{corollary}
Let $\widehat{g}_{x}(\thv)$ be the estimator defined in Eq.~\eqref{eq:gradient_estimator}. $\widehat{g}_{x}(\thv)$ is an unbiased estimator for the $\gindex$-th component of the gradient.
\end{corollary}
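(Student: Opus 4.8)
The plan is to reduce the claim entirely to Proposition~\ref{prop-unbiased-mother-cost} combined with linearity of expectation, so that essentially no new work is required. The crucial observation is that Proposition~\ref{prop-unbiased-mother-cost} contains no hypothesis singling out a particular point in parameter space: its proof (the Wald's-equation argument) goes through verbatim with $\thv$ replaced by any fixed argument. Hence, instantiating the loss estimator $\widehat{\LC}$ at each of the two shifted parameter values appearing in Eq.~\eqref{eq:gradient_estimator}, I immediately obtain that each is unbiased for the loss evaluated there, i.e.\ $\mathbb{E}[\widehat{\LC}(\thv \pm \indicatorvec_{\gindex}\frac{\pi}{2})] = \LC(\thv \pm \indicatorvec_{\gindex}\frac{\pi}{2})$. (I read the displayed Eq.~\eqref{eq:gradient_estimator} as carrying an evident sign typo, the intended estimator being one half the \emph{difference} of the $+\frac{\pi}{2}$ and $-\frac{\pi}{2}$ shifted loss estimators, as dictated by the parameter shift rule.)

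The next step is to push the expectation through the definition. Since $\widehat{g}_{\gindex}(\thv)$ is a fixed affine combination of the two shifted loss estimators, linearity of expectation gives
\begin{equation}
\mathbb{E}[\widehat{g}_{\gindex}(\thv)] = \frac{1}{2}\big(\mathbb{E}[\widehat{\LC}(\thv+\indicatorvec_{\gindex}\tfrac{\pi}{2})] - \mathbb{E}[\widehat{\LC}(\thv-\indicatorvec_{\gindex}\tfrac{\pi}{2})]\big) = \frac{1}{2}\big(\LC(\thv+\indicatorvec_{\gindex}\tfrac{\pi}{2}) - \LC(\thv-\indicatorvec_{\gindex}\tfrac{\pi}{2})\big)\,.
\end{equation}
The final step is purely one of recognition: the right-hand side is exactly the parameter-shift expression for $\partial_{\gindex}\LC(\thv)$ derived earlier in this subsection, which under the assumed unitary channel $\mathcal{M}_{\thv}(\rho_i)=U(\thv)\rho_i U(\thv)\ad$ with Pauli-generated gates equals the true $\gindex$-th partial derivative. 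This closes the argument.

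I do not anticipate a genuine obstacle; the entire content is the interchange of expectation with a finite linear combination, which is legitimate precisely because each term has finite expectation by Proposition~\ref{prop-unbiased-mother-cost}. The one point worth flagging explicitly is that the two shifted estimators need \emph{not} be statistically independent — in practice they may share or correlate their shot-allocation schemes — yet this is harmless, since linearity of expectation holds for arbitrary (possibly dependent) random variables. Independence would only be needed if one additionally wished to bound the variance of $\widehat{g}_{\gindex}(\thv)$, which is not part of the unbiasedness claim here.
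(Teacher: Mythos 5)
Your proof is correct and follows essentially the same route as the paper's own (one-line) argument: take expectations in Eq.~\eqref{eq:gradient_estimator}, apply Proposition~\ref{prop-unbiased-mother-cost} at the two shifted parameter values, and invoke the parameter shift rule; you merely make explicit the details the paper leaves implicit, including the valid observation that no independence between the shifted estimators is needed. You are also right that the displayed Eq.~\eqref{eq:gradient_estimator} carries a sign typo (both terms read $+\indicatorvec_{\gindex}\frac{\pi}{2}$), and your reading of the intended difference is the correct one.
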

\begin{proof}
The proof follows by taking the expectation values of Eq.~\eqref{eq:gradient_estimator} and employing the result from Prop.~\ref{prop-unbiased-mother-cost} and the parameter shift rule.
\end{proof}

We also obtained the variance of the estimator constructed above whose derivation can be found in Appendix~\ref{app:varlinear}. Such a quantity can be used to compare different strategies for allocating shots among Hamiltonian terms. Although we do not compare variances, we use elements of the proof in Appendix~\ref{app:proof-prop-unbiased-simple-mother-cost} for the following construction of an unbiased estimator of the MSE in Appendix~\ref{app:mse}.

Having now constructed an estimator for the loss function as outlined in the previous section combining two single or multi-shot estimates of this function evaluated at the required parameter values will lead to an unbiased estimator of the gradient. 

\subsection{Loss functions with polynomial dependence on the quantum circuit observables}
\subsubsection{Constructing an unbiased estimator of the gradient}
In the case of non-linear dependence on the observables produced by a quantum circuit, estimating the gradient is not as simple as simply applying to parameter shift rule. However, we can still derive estimators for the gradient when the non-linearity is described by a polynomial function.  We begin with the general expression for the loss functions we consider in this work
\begin{equation}
    \LC(\thv) = \sum_{\vec{i}} p_{\vec{i}} \ell (E_{\vec{i}}(\thv))\,.
\end{equation}
Now constructing a polynomial loss function of degree $\polydegree$ requires that
\begin{equation}
\ell (E_{\vec{i}}(\thv)) = \sum_{z=0}^{\polydegree} a_z [E_{\vec{i}}(\thv)]^z\,,
\end{equation}
which leads to the expression of the loss function, 
\begin{equation}
    \LC(\thv) = \sum_{\vec{i},z} p_{\vec{i},z}\big[\sum_{j} c_{\vec{i},j} \langle h_{\vec{i}, j} (\thv)\rangle\big]^{z}\,,
\end{equation}
where $p_{\vec{i},z} = p_{\vec{i}}a_{z}$. Taking the derivative with respect to $\theta_{\gindex}$ leads to
\begin{equation}
\frac{\partial \LC(\thv)}{\partial \theta_{\gindex}} = \sum_{\vec{i},z} p_{\vec{i},z} z \big(\sum_{j} c_{\vec{i},j} \langle h_{\vec{i}, j} (\thv)\rangle\big)^{z-1} \big(\sum_{j'} c_{\vec{i},j'} \frac{\partial\langle h_{\vec{i}, j'} (\thv)\rangle}{\partial \theta_{\gindex}}\big)\,.
\end{equation}
Using the multinomial theorem and given $J$ Hamiltonian terms, we can expand the second sum in the above expression,
\begin{align}\label{eq:gradient}
    \frac{\partial \LC(\thv)}{\partial \theta_{\gindex}} = &\sum_{\vec{i},z} p_{\vec{i},z} z \sum_{b_1+ b_2+ \cdots +b_J=z-1} \binom{z-1}{b_1, b_2, \cdots b_J} \\ \nonumber
    &\prod_{j} (c_{\vec{i},j} \langle h_{\vec{i}, j} (\thv)\rangle)^{b_j} \big(\sum_{j'} c_{\vec{i},j'} \frac{\partial\langle h_{\vec{i}, j'} (\thv)\rangle}{\partial \theta_{\gindex}}\big),
\end{align}
where $\binom{z-1}{b_1, b_2, \cdots b_J} = \frac{(z-1)!}{b_1! b_2! \cdots b_J!}$ and $b_j$ are non-negative integers. Therefore, we need to construct unbiased estimators of the terms  $\langle h_{\vec{i}, j} (\thv)\rangle^{b_{j}}$ and use the previously established gradient estimators with the parameter shift rule. Then we will need to consider how to distribute shots among this estimator. Rewriting Eq.~\eqref{eq:gradient} leads to
\begin{align}\label{eq:gradient2}
    \frac{\partial \LC(\thv)}{\partial \theta_{\gindex}} =  &\sum_{\vec{i},z}\sum_{b_1+ b_2+ \cdots +b_J=z-1} p_{\vec{i},z} z  \frac{(z-1)!}{b_1! b_2! \cdots b_J!} \\ \nonumber
    &\sum_{j'} \prod_j c_{\vec{i},j}^{b_j} c_{\vec{i},j'}  \bigg[\frac{\partial\langle h_{\vec{i}, j'} (\thv)\rangle}{\partial \theta_{\gindex}} \langle h_{\vec{i}, j} (\thv)\rangle^{b_j}\bigg]  \,.
\end{align}
Therefore we can distribute the shots according to the magnitude of the expansion terms $\prod_j c_{\vec{i},j}^{b_j}c_{\vec{i}, j'} p_{\vec{i},z} z  \frac{(z-1)!}{b_1! b_2! \cdots b_J!}$. Once again we can construct an unbiased estimator with the normalized magnitude of these terms defining the probabilities. We now explore how to construct an unbiased estimator for the term $\langle h_{\vec{i}, j} (\thv)\rangle^{b_{j}}$, which is essential to construct an unbiased estimator for the gradients of these kinds of loss functions.

\subsubsection{Constructing an unbiased estimator of polynomial terms}
In order to construct an unbiased estimator for the gradient we need an unbiased estimator for terms of the form
\begin{equation} \label{eq:grad_product_term}
    \frac{\partial\langle h_{\vec{i}, j} (\thv)\rangle}{\partial \theta_{\gindex}} \prod_j  \langle h_{\vec{i}, j} (\thv)\rangle^{b_j}\,.
\end{equation}

We can use the parameter shift rule for the term on the left-hand side. For the term in the product as each $j$ index corresponds to a different operator in the Hamiltonian, each term will be independent. Therefore, we need to construct estimators for terms of the form  $\langle h_{\vec{i}, j} (\thv)\rangle^{z}$. This leads us to the following proposition.

\begin{proposition}
Let  $\widehat{\xi}_{\vec{i},j}$ be an estimator defined as
\begin{equation}\label{eq:poly_unbiased_estimator}
     \widehat{\xi}_{\vec{i},j} = \frac{1}{\mathbb{E}[\binom{s_{\vec{i},j}}{z}]} \sum h^{*}(r_{\vec{i}, j, k_{\alpha_1}},...,r_{\vec{i}, j, k_{\alpha_z}}),
\end{equation}
where the summation is over all subscripts $1 \leq \alpha_{1} < \alpha_{2} < ... < \alpha_{z} \leq s_{\vec{i},j}$ and $h^{*}(r_{\vec{i}, j, k_{1}},...,r_{\vec{i}, j, k_{z}}) = \prod_{\beta=1}^{z} r_{\vec{i}, j, k_{\alpha_{\beta}}}$. $\widehat{\xi}_{\vec{i},j}$ is an unbiased estimator for the term  $\langle h_{\vec{i}, j} (\thv)\rangle^{z}$ estimated with $s_{\vec{i},j} $ shots where $s_{\vec{i},j} \geq z$.
\end{proposition}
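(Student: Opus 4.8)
The plan is to recognize $\widehat{\xi}_{\vec{i},j}$ as a randomly-stopped U-statistic of degree $z$ whose symmetric kernel is the product $h^*(x_1,\dots,x_z)=\prod_{\beta=1}^z x_\beta$, and then to compute its expectation by conditioning on the shot count $s_{\vec{i},j}$.

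First I would record the single-shot building block: under the same i.i.d.\ assumptions used in Prop.~\ref{prop-unbiased-mother-cost}, each outcome satisfies $\mathbb{E}[r_{\vec{i},j,k}]=\langle h_{\vec{i},j}(\thv)\rangle$, and distinct shots are independent. Because the summation in \eqref{eq:poly_unbiased_estimator} ranges only over strictly increasing tuples $\alpha_1<\cdots<\alpha_z$, every kernel evaluation is a product of $z$ \emph{distinct} outcomes; independence then factorizes the expectation, giving $\mathbb{E}[h^*(r_{\vec{i},j,k_{\alpha_1}},\dots,r_{\vec{i},j,k_{\alpha_z}})]=\langle h_{\vec{i},j}(\thv)\rangle^z$ for every admissible tuple. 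Crucially, this value is the same for all tuples, and no terms with repeated indices (which would break the factorization) ever appear.

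Next I would handle the random number of shots, which is the heart of the argument and the reason the normalization is $\mathbb{E}[\binom{s_{\vec{i},j}}{z}]$ rather than $\binom{s_{\vec{i},j}}{z}$. Conditioning on $s_{\vec{i},j}=s$, the inner sum contains exactly $\binom{s}{z}$ terms, each of conditional mean $\langle h_{\vec{i},j}(\thv)\rangle^z$, so the conditional expectation of the unnormalized sum equals $\binom{s}{z}\langle h_{\vec{i},j}(\thv)\rangle^z$. Taking the outer expectation over $s_{\vec{i},j}$ --- and invoking, exactly as in Prop.~\ref{prop-unbiased-mother-cost}, that the shot count is fixed independently of the individual outcomes (the same hypothesis that licenses the Wald-type step) --- yields $\mathbb{E}[\sum h^*]=\mathbb{E}[\binom{s_{\vec{i},j}}{z}]\,\langle h_{\vec{i},j}(\thv)\rangle^z$ for the unnormalized sum. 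Dividing by the normalization in \eqref{eq:poly_unbiased_estimator} then gives $\mathbb{E}[\widehat{\xi}_{\vec{i},j}]=\langle h_{\vec{i},j}(\thv)\rangle^z$, which is the claim; the assumption $s_{\vec{i},j}\geq z$ ensures the sum is nonempty and $\binom{s_{\vec{i},j}}{z}\geq 1$, so the normalization is well defined.

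I expect the main obstacle to lie not in the combinatorial counting but in justifying the interchange for the random shot count: one must verify that conditioning on $s_{\vec{i},j}$ leaves the outcomes i.i.d.\ with unchanged mean, which is precisely the independence-of-stopping-rule assumption underlying the Wald argument of the previous proof. Once that is granted, the $\binom{s_{\vec{i},j}}{z}$ factor produced by the counting cancels against the normalization, mirroring how the $\mathbb{E}[s_{\vec{i},j}]$ factor cancelled in Prop.~\ref{prop-unbiased-mother-cost}; the degree-$1$ case $z=1$ recovers that earlier result as a consistency check.
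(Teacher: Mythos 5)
Your proof is correct and takes essentially the same route as the paper: the paper's proof simply declares that the estimator has the form of a U-statistic with product kernel (citing its Appendix on U-statistics) and that taking the expectation yields the result, which is exactly the computation you carry out explicitly. The only difference is one of detail --- you spell out the conditioning on the random shot count, the factorization over distinct indices, and the cancellation of the $\mathbb{E}\left[\binom{s_{\vec{i},j}}{z}\right]$ normalization, all of which the paper compresses into a single sentence via the U-statistic formalism and its Wald-type independence assumption.
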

\begin{proof}
Eq.~\eqref{eq:poly_unbiased_estimator} is inspired by the form of a U-statistic for $\langle h_{\vec{i}, j} (\thv)\rangle^{b_{j}}$, see Appendix~\ref{app:U-statistics} for more details. Taking the expectation value and using the U-statistic formalism one arrives at the desired result.
\end{proof}

Bringing this all together we can formulate a proposition regarding an unbiased estimator for gradients of loss functions with polynomial dependence.
\begin{proposition}
\begin{align}
    \hat{g}_{\gindex}(\thv) &= \sum_{\vec{i},z}\sum_{b_1+ b_2+ \cdots +b_J=z-1} \\ \nonumber
    & p_{\vec{i},z} z  \frac{(z-1)!}{b_1! b_2! \cdots b_J!} \\ \nonumber
    &\sum_{j'} \prod_{j=1}^{J} c_{\vec{i},j}^{b_j} c_{\vec{i},j'}  \frac{1}{\mathbb{E}[s_{\vec{i},j, j', \vec{b}}]}\sum_{k=1}^{s_{\vec{i},j, j', \vec{b}}} \big( r^{+}_{\vec{i},j,k} - r^{-}_{\vec{i},j,k} \big)  \\ \nonumber
    &\frac{1}{  \mathbb{E}[ \binom{s_{\vec{i},j, j', \vec{b}}}{ b_{j}}]} \sum \prod_{\beta=1}^{z} r_{\vec{i}, j, k_{\alpha_{\beta}}}\,,
\end{align}
where 
\begin{equation}
    \mathbb{E}[r^{\pm}_{\vec{i},j,k}] = \langle h_{\vec{i},j}(\thv \pm \indicatorvec_{\gindex}\frac{\pi}{2})\rangle
\end{equation}
and $\vec{b} = (b_1, b_2, ..., b_J)$. The final sum is over all subscripts $1 \leq \alpha_{1} < \alpha_{2} < ... < \alpha_{z} \leq s_{\vec{i},j}$ and $h^{*}(r_{\vec{i}, j, k_{1}},...,r_{\vec{i}, j, k_{z}}) = \prod_{\beta=1}^{z} r_{\vec{i}, j, k_{\alpha_{\beta}}}$. $\hat{g}_{\gindex}(\thv)$ is a unbiased estimator for $\frac{\partial \LC(\thv)}{\partial \theta_{\gindex}}$
\end{proposition}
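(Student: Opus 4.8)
The plan is to verify that $\mathbb{E}[\hat{g}_\gindex(\thv)]$ reproduces the exact partial derivative in Eq.~\eqref{eq:gradient2} term by term. First I would observe that every prefactor appearing in the summand, namely $p_{\vec{i},z}\, z\, (z-1)!/(b_1!\cdots b_J!)$ and $\prod_j c_{\vec{i},j}^{b_j} c_{\vec{i},j'}$, is deterministic, so by linearity of expectation it suffices to show that the random part of each summand is an unbiased estimator of the bracketed quantity in Eq.~\eqref{eq:grad_product_term}, i.e.\ of $\frac{\partial \langle h_{\vec{i},j'}(\thv)\rangle}{\partial \theta_\gindex}\prod_j \langle h_{\vec{i},j}(\thv)\rangle^{b_j}$. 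Summing the verified summands over $\vec{i},z,\vec{b},j'$ then recovers the full gradient.

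Second, I would decompose that bracketed quantity into two independently measured pieces: the shifted-circuit difference supplied by the parameter shift rule, and the product of powers of expectation values. For the first piece, the factor $\frac{1}{\mathbb{E}[s_{\vec{i},j,j',\vec{b}}]}\sum_{k}(r^+_{\vec{i},j,k}-r^-_{\vec{i},j,k})$ is handled exactly as in the proof of Prop.~\ref{prop-unbiased-mother-cost}: since the shot count is a random variable independent of the single-shot outcomes and the outcomes are i.i.d., Wald's equation gives $\mathbb{E}[\cdot]=\langle h_{\vec{i},j'}(\thv+\tfrac{\pi}{2}\indicatorvec_\gindex)\rangle - \langle h_{\vec{i},j'}(\thv-\tfrac{\pi}{2}\indicatorvec_\gindex)\rangle$, which by the parameter shift rule is (up to the conventional factor of $1/2$) the derivative $\partial_\gindex\langle h_{\vec{i},j'}(\thv)\rangle$. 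For the second piece, each power $\langle h_{\vec{i},j}(\thv)\rangle^{b_j}$ is estimated by the U-statistic $\widehat{\xi}_{\vec{i},j}$ of the preceding proposition, whose unbiasedness follows from the defining property of U-statistics together with the $\mathbb{E}[\binom{s_{\vec{i},j,j',\vec{b}}}{b_j}]$ normalization; the constraint $s_{\vec{i},j,j',\vec{b}}\ge b_j$ is exactly what is needed for this estimator to be well defined.

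The key step is the \emph{independence} argument that lets the expectation of the product factorize into the product of expectations. Here I would argue that (i) the two shifted evaluations feeding the derivative come from circuit executions at parameters $\thv \pm \tfrac{\pi}{2}\indicatorvec_\gindex$, which are distinct from and run independently of the unshifted executions feeding the power estimators, and (ii) distinct Hamiltonian terms (distinct $j$) correspond to separate measurement settings whose shots are sampled independently. Consequently the derivative estimator and the U-statistic estimators $\{\widehat{\xi}_{\vec{i},j}\}$ are mutually independent random variables, so the expectation of their product equals the product of their expectations. Combining this with the two computations above yields, for each summand, $\mathbb{E}[\text{random part}]=\partial_\gindex\langle h_{\vec{i},j'}(\thv)\rangle \prod_j \langle h_{\vec{i},j}(\thv)\rangle^{b_j}$, as required.

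I expect the main obstacle to be making this independence airtight, in particular ensuring that the random shot allocations $s_{\vec{i},j,j',\vec{b}}$ do not secretly correlate the derivative sample with the U-statistic samples; the cleanest route is to stipulate that each estimator draws its own fresh batch of shots, so that all the relevant random variables live on a product probability space and the Wald-type argument applies factor by factor. The remaining work — expanding the multinomial, tracking the $\binom{z-1}{b_1,\dots,b_J}$ weights, and checking the well-definedness conditions on the shot counts — is routine bookkeeping rather than a conceptual difficulty.
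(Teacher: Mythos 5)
Your proposal is correct and follows essentially the same route as the paper, whose proof is a one-line sketch (``taking the expectation values and invoking the propositions previously established''): linearity of expectation over the deterministic coefficients, Wald's equation for the shifted-circuit difference, U-statistic unbiasedness for the power terms, and factorization of the product via independent fresh shot batches. Your two side observations are also sound and worth keeping: the independence/product-space stipulation makes explicit what the paper leaves implicit, and you correctly note the missing conventional factor of $\tfrac{1}{2}$ from the parameter shift rule in the estimator as stated.
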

\begin{proof}
This can be seen to be true by taking the expectation values and invoking the propositions previously established. 
\end{proof}

These same techniques can be used to construct unbiased estimators of the loss function, which we expand on in Appendix~\ref{app:unbiasedlosspolyD}. In order to provide a concrete example of using the above propositions, we consider the special case of the MSE loss function.

\subsubsection{Constructing an estimator for the gradient of the MSE loss function}

To clarify the notation used above, let us focus on the special case of the  MSE loss function. We consider a slightly more general form than the MSE cost introduced above in Eq.~\eqref{mse_general}. Consider a set of labels $y_{\vec{i}} \in \mathbb{R}$ for a dataset $\{\rho_{\vec{i}}\}_{\vec{i}=1}^N$, composed of the tensor product of $m$ states. The set of predictions from the quantum machine learning model are denoted $\tilde{y}_i(\thv) = \sum_j c_{\vec{i},j} \langle h_{\vec{i}, j} (\thv)\rangle $. Therefore, we can write the loss as
\begin{equation}\label{loss_mse_eg}
    \LC_{MSE}(\thv)= \sum_{\vec{i}} p_{\vec{i}} \ \big[y_{\vec{i}} - \sum_j c_{\vec{i},j} \langle h_{\vec{i}, j} (\thv)\rangle \big]^2\,.
\end{equation}
Expanding the previous equation leads to
\begin{align}
    \LC_{MSE}(\thv &)= \sum_{\vec{i}} p_{\vec{i}} \ \big[y_{\vec{i}}^{2} - y_{\vec{i}} \sum_j c_{\vec{i},j} \langle h_{\vec{i}, j} (\thv)\rangle \nonumber \\
    &\quad \quad \quad + \big(\sum_j c_{\vec{i},j} \langle h_{\vec{i}, j} (\thv)\rangle  \big)^{2} \big].
\end{align}
Evaluating the partial derivative with respect to $\theta_{\gindex}$ gives, 
\begin{align}\label{eq:gradthree}
    \frac{\partial \LC_{MSE}(\thv)}{\partial \theta_{\gindex}}& = \sum_{\vec{i}, j, j'}  - p_{\vec{i}} c_{\vec{i},j} \frac{\partial\langle h_{\vec{i}, j} (\thv)\rangle}{\partial \theta_{\gindex}} \\ \nonumber 
    &+ 2p_{\vec{i}}c_{\vec{i}, j'}c_{\vec{i},j}\langle h_{\vec{i}, j'} (\thv)\rangle \frac{\partial\langle h_{\vec{i}, j} (\thv)\rangle}{\partial \theta_{\gindex}} \,.
\end{align}
We can distribute the total number of shots $s_{tot}$ among these terms according to the relative magnitudes of the $p_{\vec{i}}c_{\vec{i},j'}$ and $2p_{\vec{i}}c_{\vec{i},j'}c_{\vec{i},j}$ coefficients. This can be achieved by sampling from a multinomial probability distribution where the normalized magnitude of these coefficients defines the probabilities. 

It is important to note that some estimators have a different minimum number of shots than others. For example, the estimator for the last term in the above equation, involving a gradient and a direct expectation value, requires a total of $3$ for different circuit evaluations. The estimator for this term can be written as 
\begin{equation}
        \widehat{\mathcal{D}}_{\vec{i},j,j'} =\frac{1}{2} \widehat{\mathcal{E}}_{\vec{i},j'}\big(\widehat{\mathcal{E}}^{+}_{\vec{i},j} - \widehat{\mathcal{E}}^{-}_{\vec{i},j}\big)\,.
\end{equation}
Therefore, in this case, the minimum number of shots given to any term can be set to $3$ to ensure every estimation of any term in  Eq.~\eqref{loss_mse_eg} will always be unbiased. We expand on this consideration below.

\subsection{Distributing the shots among estimator terms}
As previously mentioned the shots can be distributed according to a multinomial distribution with probabilities given by the magnitude of the constant factors that appear in the expression for the above estimators. We note that the shots assigned to a given term may be zero. In order to ensure the estimate of the above term using a given number of shots we need to ensure the estimate of each term is also unbiased. One needs at least $2$ shots to produce an unbiased estimate of the gradient. For terms of the form $\langle h_{\vec{i}, j} (\thv)\rangle^{b_{j}}$ one needs at least $b_{j}$ shots, corresponding to the degree of U-statistic. Therefore, care needs to be taken when distributing shots to ensure that each term measured has sufficiently many. One way to ensure this is the case is to distribute shots in multiples of the largest number of shots needed to evaluate any one term. Any leftover shots can then be distributed equally across the terms to be measured.

Each term itself may consist of a product of several expectation values, each with a different required minimum number of shots. The shots distribution with each term can be selected to correspond to this required minimum number of shots per term. To make this concrete consider a term of the form in Eq.~\eqref{eq:grad_product_term}. Estimating the gradient will take at least $2$ shots. Estimating the product term will take at least $\sum_{j=1}^{J} b_{j}$ shots. If we are given $s_{\vec{i},j}$ shots to use to estimate this term we can assign $\lfloor 2(s_{\vec{i},j}/(2+\sum_{j=1}^{J} b_{j}) \rfloor$ to the first term and $\lfloor \sum_{j=1}^{J} b_{j}(s_{\vec{i},j}/(2+\sum_{j=1}^{J} b_{j}) \rfloor$ to the product term, distributing any remaining shots equally among both.

\section{The Refoqus optimizer}

Now that we have defined unbiased estimators of the gradient, we have the tools needed to present our new optimizer. We call our optimizer Refoqus, which stands for REsource Frugal Optimizer for QUantum Stochastic gradient descent. This optimizer is tailored to QML tasks. QML tasks have the potential to incur large shot overheads because of large training datasets as well as measurement operators composed of a large number of terms. With Refoqus, we achieve shot frugality first by leveraging the gCANS~\cite{gu2021adaptive} rule for allocating shots at each iteration step and second by allocating these shots to individual terms in the loss function via random sampling over the training dataset and over measurement operators. This random sampling allows us to remove the shot floor imposed by deterministic strategies (while still achieving an unbiased estimator for the gradient), hence unlocking the full shot frugality of our optimizer.

The key insight needed for the construction of the Refoqus protocol is that cost functions that have a linear or polynomial dependence on measurable hermitian matrices, and their gradients can always be written in a form consisting of a summation of different measurable quantities with expansion coefficients. These coefficients can in turn be used to define a multinomial probability distribution to guide the allocation of the shots to each term when evaluating the loss function or its gradient.

We outline the Refoqus optimizer in Algorithm~\ref{alg:refoqus}. Given a number of shots to distribute among Hamiltonian terms, one evaluates the gradient components and the corresponding variances with the \textit{iEvaluate} subroutine (Line $3$). We follow the gCANS procedure to compute the shot budget for each iteration (Line $12$). We refer to \cite{gu2021adaptive} for more details on gCANS. The iterative process stops until the total shot budget has been used.

The hyperparameters that we employ are similar to those of Rosalin \cite{arrasmith2020operator}, and will come with similar recommendations on how to set them. For instance, the Lipshitz constant $L$ bounds the largest possible value of the derivative. Hence, it depends on the loss expression but can be set as $M = \sum_{\vec{i}, j} |q_{\vec{i},j}|$ according to \cite{kubler2020adaptive}. Moreover, a learning rate satisfying $0 < \alpha < 2 / L$ can be used.

\section{Convergence Guarantees}
\label{section:convergence}
The framework for Refoqus leverages the structure and the update rule of the gCANS optimizer presented in~\cite{gu2021adaptive}. Therefore, we can apply the same arguments introduced to show geometric convergence. We repeat the arguments and assumptions needed for this convergence result here for convenience.
\begin{proposition}
Provided the loss function satisfies the assumptions stated below the stochastic gradient descent routine used in Refoqus achieves geometric convergence to the optimal value of the cost function. That is,
\begin{equation}
    \mathbb{E}[\LC(\thv)^{(t)}] - \LC^{*} = \mathcal{O}(\gamma^{t})
\end{equation}
where $t$ labels the iteration, $\LC^{*}$ is the optimal value of the loss function, and $0<\gamma<1$.
\end{proposition}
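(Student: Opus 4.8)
The plan is to reduce the claim to the geometric-convergence theorem already established for gCANS in Ref.~\cite{gu2021adaptive}, by verifying that the Refoqus gradient estimator satisfies every hypothesis that argument requires. Since Refoqus differs from gCANS only in how single-shot samples are drawn (sampling jointly over the dataset indices $\vec{i}$ and the measurement terms $j$, rather than over Hamiltonian terms alone), the skeleton of the proof transfers once the estimator's first two moments are controlled. I would therefore first state the standing assumptions explicitly: (i) $\LC$ has an $L$-Lipschitz continuous gradient in the sense of Eq.~\eqref{eq:lip-cont}; (ii) $\LC$ is strongly convex with some parameter $\mu>0$ (equivalently, satisfies a Polyak--{\L}ojasiewicz inequality $\norm{\grad\LC(\thv)}^2 \geq 2\mu(\LC(\thv)-\LC^*)$), which is the ingredient that upgrades linear descent to a \emph{geometric} rate; and (iii) the learning rate obeys $0<\alpha<2/L$.

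The first substantive step is to record that the Refoqus estimator $\widehat{g}_x(\thv)$ is unbiased, $\mathbb{E}[\widehat{g}_x(\thv)] = \partial_x \LC(\thv)$, which follows from the Corollary to Prop.~\ref{prop-unbiased-mother-cost} in the linear case and from the polynomial-gradient proposition above in the polynomial case. Next I would invoke the descent lemma implied by $L$-smoothness: for the update $\thv^{(t+1)} = \thv^{(t)} - \alpha\,\widehat{g}(\thv^{(t)})$,
\begin{equation}
\LC(\thv^{(t+1)}) \leq \LC(\thv^{(t)}) - \alpha\,\grad\LC\cdot\widehat{g} + \frac{L\alpha^2}{2}\norm{\widehat{g}}^2. \nonumber
\end{equation}
Taking the expectation, using unbiasedness to replace $\mathbb{E}[\grad\LC\cdot\widehat{g}]$ by $\norm{\grad\LC}^2$, and writing $\mathbb{E}[\norm{\widehat{g}}^2]=\norm{\grad\LC}^2+\sum_x \Var[\widehat{g}_x]$, yields a one-step bound in which the variance term is the only quantity depending on the shot allocation.

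The heart of the argument is then to insert the gCANS shot rule $s_x = \frac{2L\alpha}{2-L\alpha}\,\frac{\sigma_x \sum_{x'}\sigma_{x'}}{\norm{\grad\LC}^2}$ and show it makes the net per-iteration change a strict contraction. Because each component estimator is a sample mean (or, for polynomial terms, a U-statistic) over independent single-shot variables, its variance scales as $\Var[\widehat{g}_x]=\sigma_x^2/s_x$ with the same $1/s_x$ dependence assumed in gCANS; substituting the allocation cancels the $\norm{\grad\LC}^{-2}$ factors and produces a bound of the form $\mathbb{E}[\LC(\thv^{(t+1)})]-\LC^* \leq \gamma\big(\mathbb{E}[\LC(\thv^{(t)})]-\LC^*\big)$, where the PL inequality converts the surviving $-c\,\norm{\grad\LC}^2$ term into $-2c\mu(\LC-\LC^*)$ and fixes $\gamma = 1-2c\mu \in (0,1)$ for suitable $\alpha$. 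Iterating this contraction gives $\mathbb{E}[\LC(\thv^{(t)})]-\LC^* = \mathcal{O}(\gamma^t)$.

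I expect the main obstacle to be verifying that the Refoqus estimator genuinely reproduces the \emph{per-component, $1/s_x$-scaling} variance structure that the gCANS analysis presupposes, despite the extra layer of dataset sampling and, for polynomial losses, the products of U-statistics appearing in the polynomial-gradient proposition. One must check that these estimators have finite variance and that it decomposes additively across components in the required way; the analogous variance computation is the content of Appendix~\ref{app:varlinear}, so I would lean on it. A secondary subtlety, inherited directly from gCANS, is that $\sigma_x$ and $\norm{\grad\LC}^2$ are not known exactly and are replaced by exponential-moving-average estimates, so strictly the contraction holds in the idealized-allocation regime and the same caveats stated in Ref.~\cite{gu2021adaptive} apply here; I would flag this rather than reprove the estimation-error control.
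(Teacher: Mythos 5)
Your proposal is correct and follows essentially the same route as the paper: the paper likewise proves this proposition by reducing it to the gCANS convergence theorem of Ref.~\cite{gu2021adaptive}, stating the same assumptions (unbiasedness, $\sigma_x^2/s_x$ variance scaling, $\mu$-strong convexity, $L$-smoothness, the learning-rate bound, and the idealized shot-allocation rule) and then verifying that the Refoqus estimators satisfy the first two while flagging the same caveats you do (moving-average estimates of $\sigma_x$ and $\norm{\grad\LC}^2$, and non-convexity of realistic QML landscapes). The only difference is one of detail, not of approach: you additionally unpack the internal descent-lemma/PL mechanics of the gCANS argument, which the paper leaves entirely to the citation.
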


The update rule used in this work and introduced in~\cite{gu2021adaptive} and the proof presented here can be directly applied. This update rule guarantees fast convergence in expectation to the optimal value of sufficiently smooth loss functions. The underlying assumption of this result is that the loss function is strongly convex and has Lipschitz-continuous gradients. In most realistic QML scenarios the optimization landscape is not convex, however, if the optimizer were to settle into a convex region then fast convergence is expected.

The exact assumptions needed in order to ensure the geometric convergence of Refoqus to the global minima are as follows:
\begin{enumerate}
    \item $\mathbb{E}[g_{\gindex}(\thv)] = \frac{\partial \LC(\thv)}{\partial \theta_{\gindex}}, \forall \gindex \in [\gdimension]$.
    \item Var$[g_{\gindex}(\thv)] = \frac{\text{Var}[X_{\gindex}]}{s_{\gindex}}$, where $X_{\gindex}$ is the sampling-based estimator of $g_{x}(\thv)$.
    \item $\LC(\thv)$ is $\mu$-strongly convex.
    \item $\LC(\thv)$ has $L$-Lipschitz continuous gradient.
    \item $\alpha$ is a constant learning rate satisfying $0 < \alpha < \text{min}\{1/L, 2/\mu\}$. 
    \item An ideal version of the update rule holds, that is:
    \begin{align}
        & s_{\gindex} = \frac{2L\alpha}{2 - L\alpha}\frac{\sigma_{\gindex} ( \sum_{\gindex ' = 1}^{\gdimension}\sigma_{\gindex '})}{||\grad{\LC(\thv)||^{2}}} \quad \forall \gindex \in [\gdimension],  \nonumber \\ 
        & \text{where} \quad \sigma_{\gindex} = \sqrt{\text{Var}[X_{\gindex}]} \,.\nonumber
    \end{align}
\end{enumerate}

In the previous sections, we have shown how to construct unbiased estimators for the gradient, satisfying the first assumption. The second assumption is satisfied as the estimate of the gradient is constructed by calculating the mean of several sampling-based estimates. Assumption 5 is satisfied by construction. Assumption 6 is an idealized version of the update rule used in Refoqus. However, the gradient magnitude and estimator variances cannot be exactly known in general, so these quantities are replaced by exponentially moving averages to predict the values of $\sigma_{\gindex}$ and $||\grad{\LC(\thv)}||^{2}$. Assumption 4 is also satisfied for all the loss functions we consider in this work. Finally, as previously noted assumption $5$ is not expected to hold in QML landscapes, which are non-convex in general. 

\begin{algorithm}[H]
\caption{The optimization loop used in \textit{Refoqus}. The function $iEvaluate(\fattheta, \vec{s}, \{f(p_{\vec{i}},c_{\vec{i},j})\})$ evaluates the
the gradient at $\thv$ returning, a vector of the gradient estimates $\vec{g}$ and their variances $\vec{S}$. The vectors are calculated using $s_{0}s_{\gindex}$ shots to estimate the $\gindex$-th component of the gradient and its variance, where $s_{0}$ is the minimum number of shots required to obtain an unbiased estimator. Note that shots for each component estimation are distributed in multiples of $s_{0}$, according to a multinomial distribution determined by the expansion coefficients that define the gradient estimator. These expansion coefficients are defined by the function $f(p_{\vec{i}},c_{\vec{i},j})$, which returns the probability of measuring the term corresponding to the coefficients $p_{\vec{i}}$, $c_{\vec{i},j}$. For the case of linear loss functions $f(p_{\vec{i}},c_{\vec{i},j}) = \frac{|p_{\vec{i}}c_{\vec{i},j}|}{\sum_{\vec{i},j}p_{\vec{i}}c_{\vec{i},j}}$.}\label{alg:refoqus}
\begin{algorithmic}[1]
\Statex \textbf{Input:} Learning rate $\learningrate$, starting point $\fattheta_0$, min number of shots per estimation $s_{\min}$, number of shots that can be used in total $s_{max}$, Lipschitz constant $L$, running average constant $\mu$, a vector of the least number of shots needed for each gradient estimate $s_{0}$, which is loss function dependent and $f(p_{\vec{i}},c_{\vec{i},j})$, which is also loss function dependent.
\State initialize: $\fattheta \gets \fattheta_0 $, $s_{\tot} \gets 0$, $\vec{g} \gets \ (0,...,0)^T $, $\vec{S} \gets \ (0,...,0)^T $,
$\vec{s} \gets (s_{\min} ,... ,s_{\min})^T$, $\vec{\chi}' \gets (0,...,0)^T$, $\vec{\chi} \gets (0,...,0)^T$, $\vec{\xi} \gets (0,...,0)^T$, $\vec{\xi}' \gets (0,...,0)^T$, $t\gets 0$
\While{$s_{\tot} < s_{max}$}
    \State $\vec{g}, \vec{S} \gets iEvaluate(\fattheta, \vec{s}, \{f(p_{\vec{i}},c_{\vec{i},j})\})$
    \State $s_{\tot} \gets s_{\tot} + \sum_{\gindex} s_{0} s_{\gindex}$
    \State $\vec{\chi'} \gets \mu \vec{\chi} + (1-\mu) \vec{g}$
    \State $\vec{\xi'} \gets \mu \vec{\xi} + (1-\mu) \vec{S}$
    \State $\vec{\xi} \gets \vec{\xi'}/(1-\mu^{t+1})$
    \State $\vec{\chi} \gets \vec{\chi'}/(1-\mu^{t+1})$
    \State $\vec{\theta} \gets \vec{\theta} - \learningrate \vec{g}$
\For{$ \gindex \in [1,..., \gdimension]$}
    \State $s_{\gindex} \gets \left\lceil\frac{2L\learningrate}{2-L\learningrate} \frac{\xi_{\gindex} \sum_{\gindex} \xi_{\gindex}}{||\vec{\chi}||^2}\right\rceil$
\EndFor
\State $t\gets t + 1$
\EndWhile
\end{algorithmic}
\end{algorithm}

\section{Numerical Results}

\subsection{Quantum PCA}

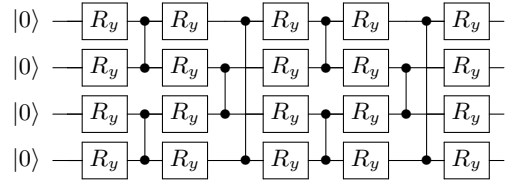
\begin{figure}[t]
\centering
\[
\Qcircuit @C=.6em @R=.4em {
\lstick{\ket{0}} & \qw & \gate{R_y} \qw & \control \qw    & \gate{R_y} \qw &         \qw    & \control \qw    & \gate{R_y} \qw&\control \qw    &\gate{R_y} \qw&         \qw    & \control \qw    & \gate{R_y} \qw&\qw \\
\lstick{\ket{0}}& \qw&\gate{R_y} \qw&\control \qw\qwx&\gate{R_y} \qw&\control \qw    &         \qw\qwx&\gate{R_y}  \qw&\control \qw\qwx&\gate{R_y} \qw&\control \qw    &         \qw\qwx&\gate{R_y} \qw &\qw  \\ 
\lstick{\ket{0}}& \qw&\gate{R_y} \qw&\control \qw    &\gate{R_y} \qw&\control \qw\qwx&         \qw\qwx&\gate{R_y}  \qw&\control \qw    &\gate{R_y} \qw&\control \qw\qwx&         \qw\qwx&\gate{R_y} \qw&\qw \\
\lstick{\ket{0}}& \qw&\gate{R_y} \qw&\control \qw\qwx&\gate{R_y} \qw&         \qw    &\control \qw\qwx&\gate{R_y} \qw&\control \qw\qwx&\gate{R_y} \qw&         \qw    &\control \qw\qwx&\gate{R_y} \qw&\qw\\
}
\]
\caption{\textbf{Hardware-efficient ansatz with $2$ layers used for VQSE on $4$ qubits.} Each $R_y$ rotation is independently parametrized according to $ R_y(\theta) = e^{-iY \theta / 2} $.}
\label{pqcdiagram}
\end{figure}

We benchmark the performance of several optimizers when applied to using VQSE~\cite{cerezo2020variational} for quantum PCA~\cite{lloyd2014quantum} on molecular ground states. We perform quantum PCA on $3$ quantum datasets: ground states of the $H_2$ molecule in the sto-3g basis ($4$ qubits), $H_2$ in the 6-31g basis ($8$ qubits) and $\text{BeH}_2$ in the sto-3g basis ($14$ qubits). We use $101$ circuits, per dataset. The corresponding covariance matrix can be expressed as $ \frac{1}{101} \sum_{i=0}^{100} \ket{\psi}_{i} \bra{\psi}_{i} $. We optimize using the local cost introduced in Eq.~\eqref{eq:VQSE:local} and repeated here for convenience:
\begin{equation} 
    H = \id - \sum_{j=1}^n r_j Z_j,\quad r_j \in \mathbb{R}\,
\end{equation} 
taking coefficients $r_j = 1.0 + 0.2 (j-1)$ and $p_i=\frac{1}{101}$ following the presentation in~\cite{cerezo2020variational}. Hence the latter coefficients form the $q_{i,j} = p_{i}r_{j}$ terms used for shot-allocation strategies, as outlined below.

When implementing Rosalin and Refoqus we use the weighted random sampling strategy for allocating shots as it demonstrated superior performance against other strategies in the numerical results of \cite{arrasmith2020operator}. Hence, the $ q_{i,j}$ coefficients that appear in the loss function (and the gradient estimators) are used to construct a multinomial probability distribution defined by the terms $\frac{|q_{i,j}|}{M}$, where $M = \sum_{i, j} |q_{i,j}|$. This distribution is used to probabilistically allocate the number of shots given to each term for each gradient estimation.

\begin{figure*}[!ht]
\centering
\subfloat[]{\includegraphics[width=0.67\columnwidth]{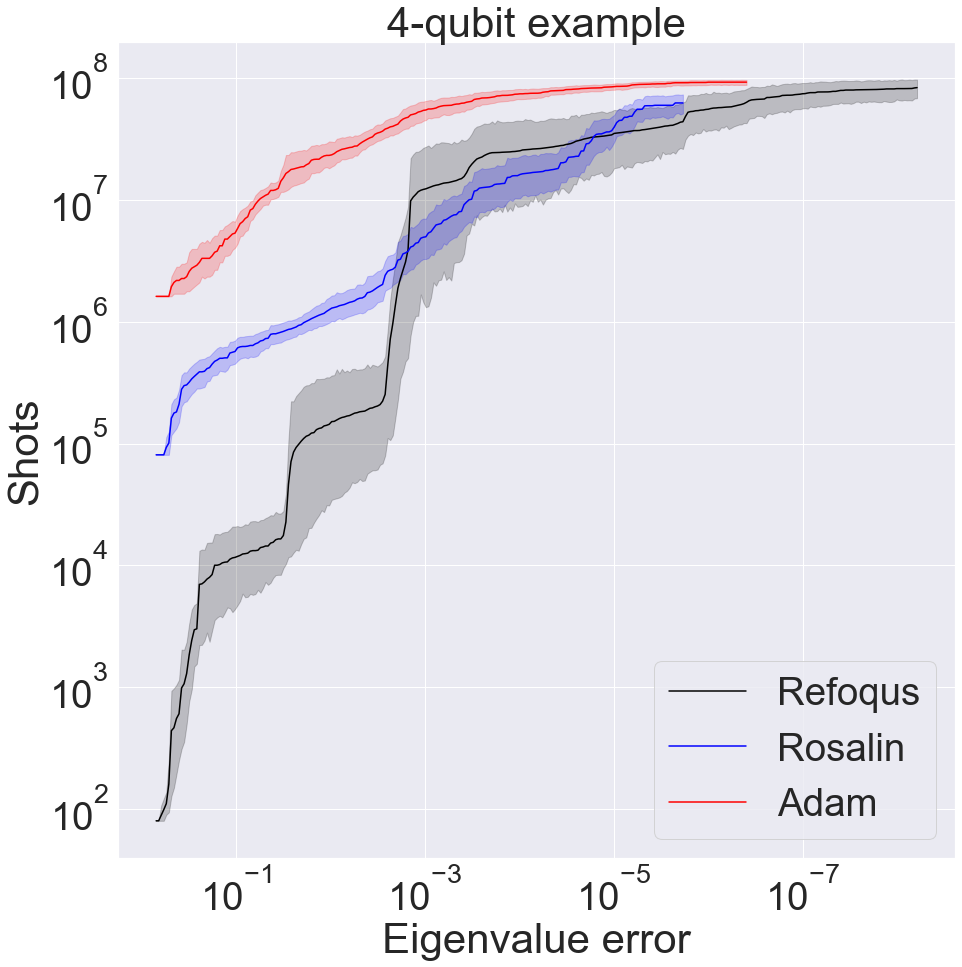}}
\subfloat[]{\includegraphics[width=0.67\columnwidth]{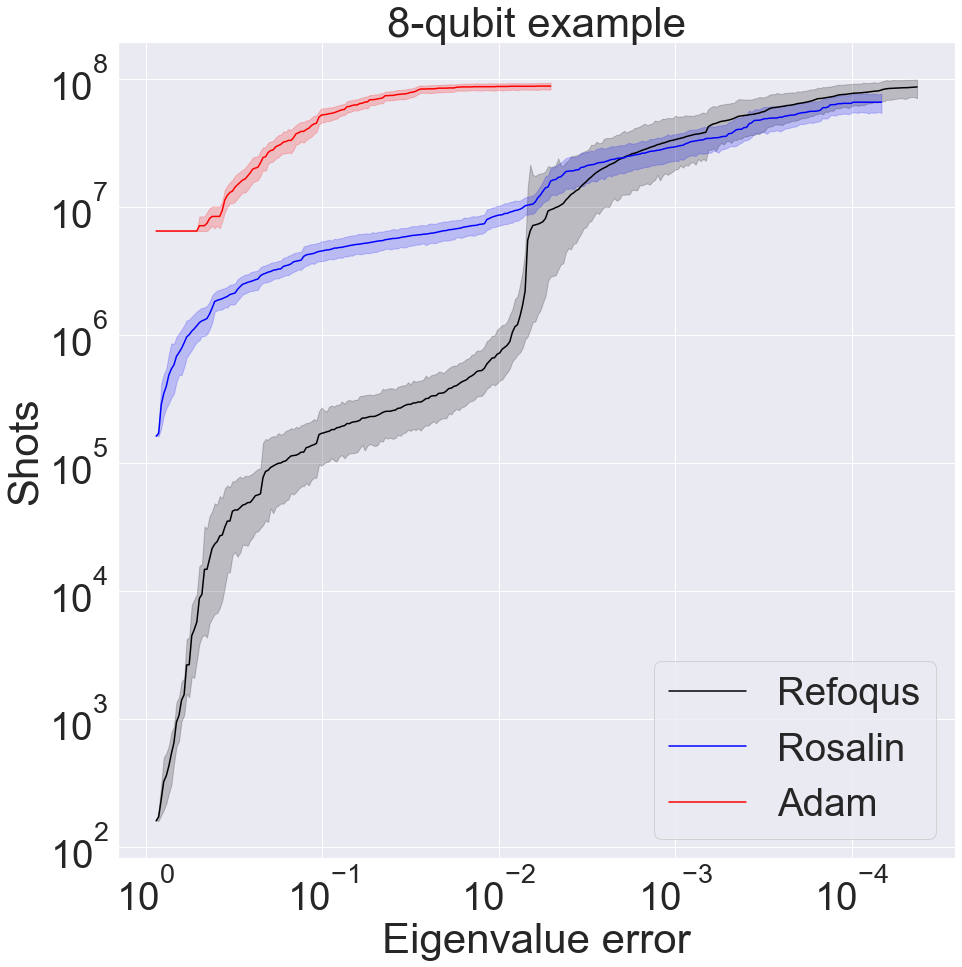}}
\subfloat[]{\includegraphics[width=0.67\columnwidth]{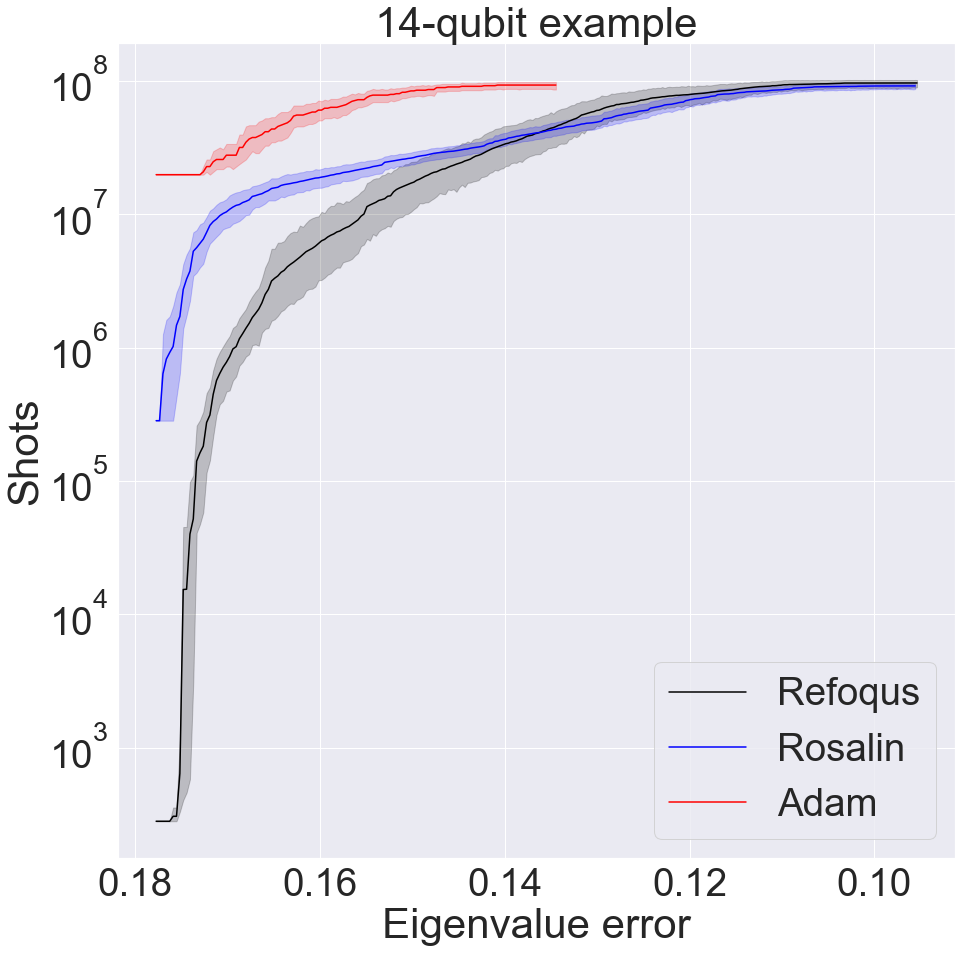}}
\\
\subfloat[]{\includegraphics[width=0.67\columnwidth]{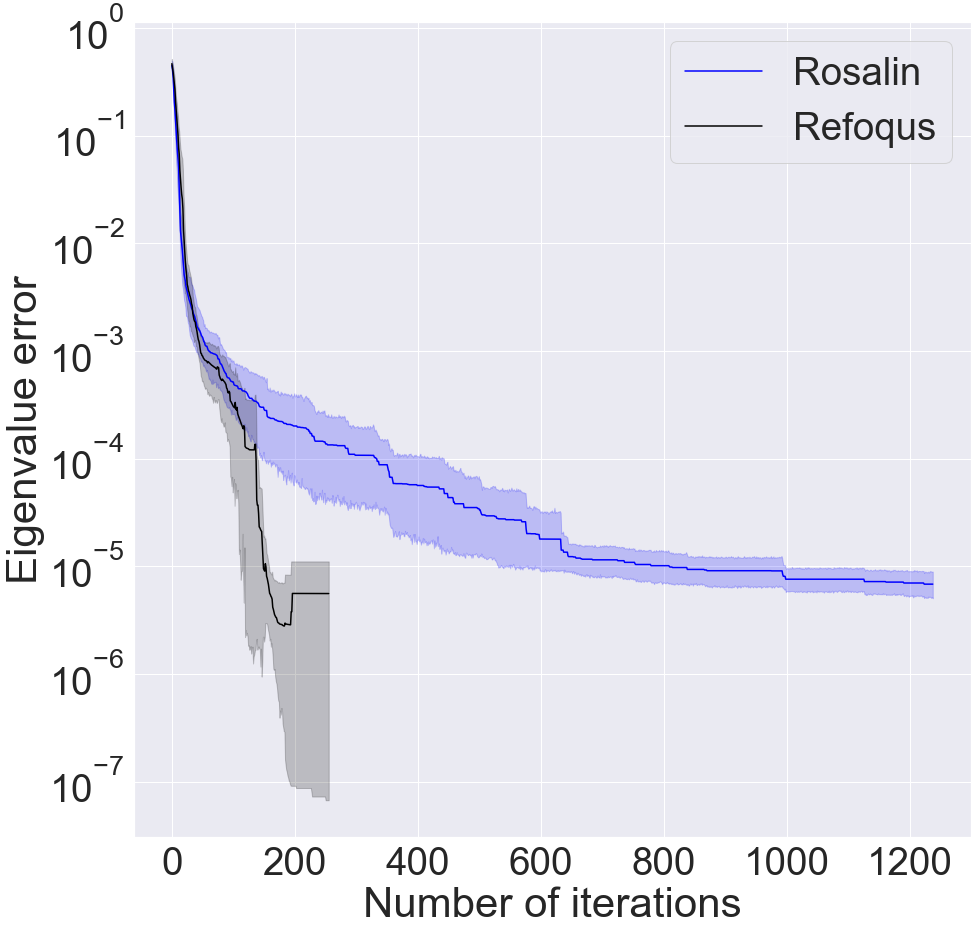}}
\subfloat[]{\includegraphics[width=0.67\columnwidth]{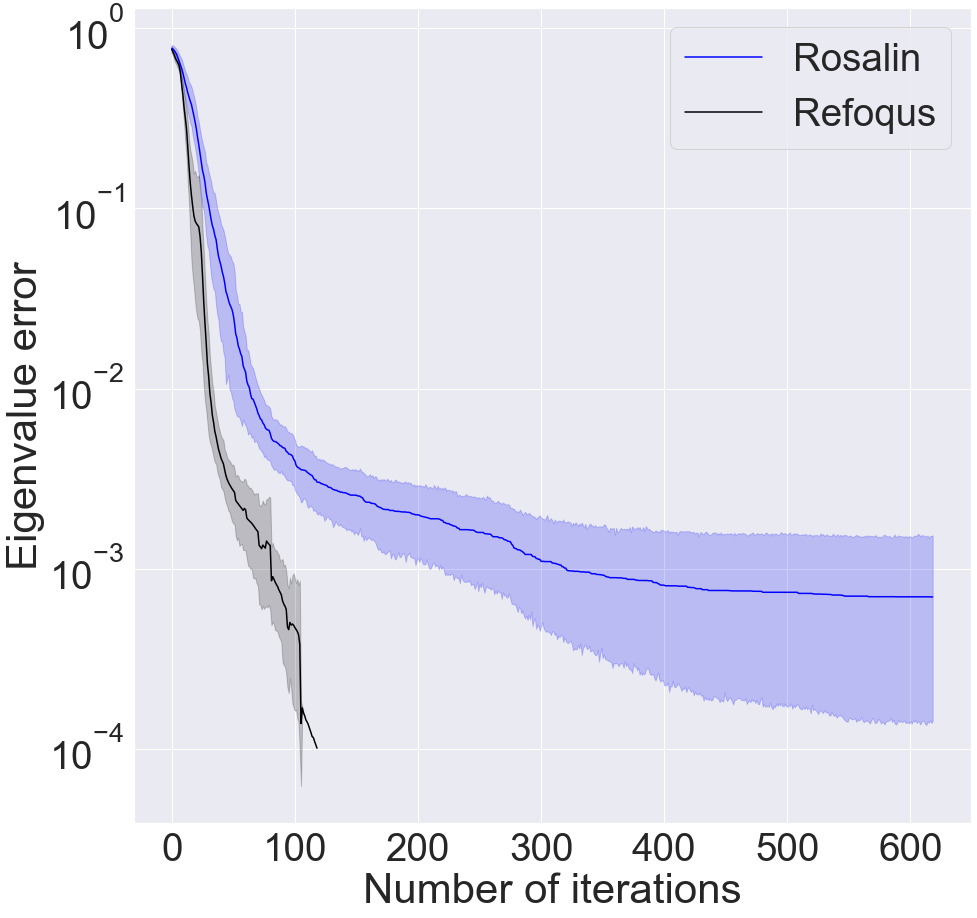}}
\subfloat[]{\includegraphics[width=0.67\columnwidth]{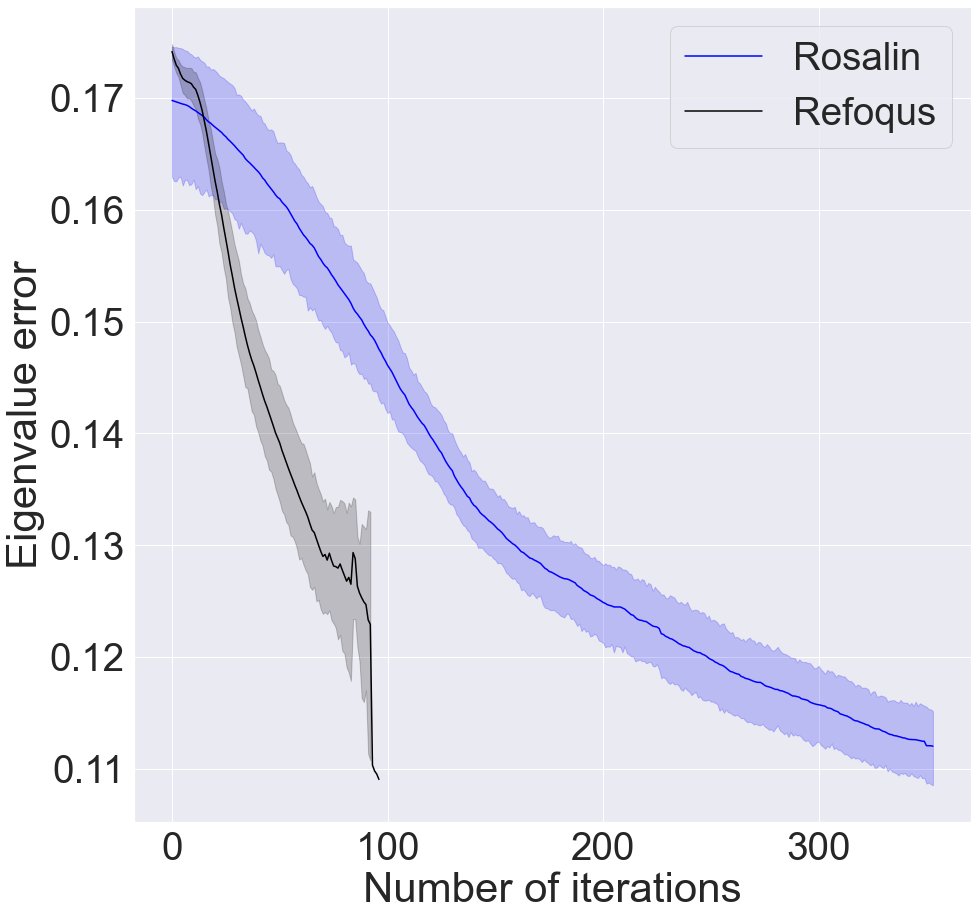}}
\caption{\textbf{VQSE for quantum PCA of $\text{H}_{2}$ molecular ground states (a, b, d, and e) and $\text{BeH}_2$ (c and f)}. The upper plots show the budget of shots spent against the best eigenvalue error achieved by the optimizers. The lower plots show the number of iterations used to spend the total shot budget. We display the results obtained from $20$ independent optimization runs on a data set of $101$ circuits representing molecular ground states calculated using the Adam, Rosalin, and Refoqus optimizers and compare their performance. We show the median (solid lines) and $95\%$ confidence intervals (shaded regions) over the $20$ different random initializations.}
\label{vqseperfs}
\end{figure*}

We use a circuit consisting of $2$ layers of a hardware-efficient ansatz with depth $4$, depicted in Fig.~\ref{pqcdiagram}, with $20$, $40$, and $70$ parameters for the $4$, $8$ and $16$ qubit problems respectively. These parameters are optimized in order to minimize the VQSE cost function using the Adam, Rosalin, and Refoqus optimizers\footnote{For Adam, we perform $100$ shots per circuit in our simulations. For Rosalin and Refoqus, we use a minimum of $2$ shots per circuit (this leads to $s_{\min} = 202$ and $s_{\min} = 2$ for Rosalin and Refoqus respectively) in conjunction with WRS.}. We use the absolute error in estimating the eigenvalues of the system to benchmark the overall performance of the output of the optimized circuit as this is desired output of the algorithm. Given the exact $16$ highest eigenvalues $\lambda_i$, and their estimation $\tilde{\lambda_i}$ given current parameters $\theta$, the latter is computed as $\epsilon_{\lambda} = \sum_{i=1}^{16} (\lambda_i - \tilde{\lambda_i})^2$.

Figure~\ref{vqseperfs} shows the results obtained when running each algorithm $20$ times with different random initialization of the variational ansatz, up to a total shot budget of $10^8$. In general, we see that Refoqus is the best-performing optimizer, achieving a best-case accuracy while using fewer shots for all shot budgets. In summary, for the $4$ qubit system, a median eigenvalue error of $ 6.8 \times 10^{-6}$, $5.15 \times 10^{-6}$ and $ 6.08 \times 10^{-7}$ was obtained using Adam, Rosalin and Refoqus respectively. Although we found better minimal value with Adam compared to Rosalin ($3.79 \times 10^{-7}$ and  $1.74 \times 10^{-6}$ respectively), Rosalin is more advantageous at lower shot budgets.  However, Refoqus achieved a minimal error value of $6.13 \times 10^{-9}$ and demonstrates a clear advantage over both Adam and Rosalin. On both the $8$ and $14$ qubit systems, we observe a  similar trend although eigenvalue errors are worse overall. This is due to the variational ansatz being kept at a fixed depth while increasing the size of the problem, which leads to worse performance. Nonetheless, Refoqus appears to clearly match or outperform both Adam and Rosalin in the number of shots required to reach a given accuracy. 

Additionally, Refoqus requires fewer parameter updates (iterations) when compared to Rosalin. Indeed, for the $4$, $8$ and $14$ qubits problems respectively, a median of $117, 89$, and $80$ iterations were used by Refoqus against $1217$, $618$, and $353$ for Rosalin. We note that this may be interpreted as another desirable feature of the optimizer, as this minimizes the number of iterations needed in the quantum-classical feedback loop to arrive at a solution of the same (or better) quality. Although the number of iterations is not a bottleneck in and of itself, in current hardware the quantum-classical feedback can prove restrictive meaning fewer iterations are favorable for real-device implementation. 

Finally, we note that the variance of the Refoqus results appears to be larger, suggesting that sampling over more terms can lead to a larger variety in the quality of the optimization obtained. Nevertheless, the advantage in shot frugality that arises from sampling over more terms is clear and despite this larger variance, Refoqus is clearly the best-performing optimizer. 

\subsection{Quantum Autoencoder}

In addition to the previous quantum PCA results, we also performed a quantum autoencoder task on the same molecular ground states as before. We optimize using the local cost introduced in Eq.~\eqref{eq:CL-AE} where half of the qubits are ``trashed''. For the $4$ qubit system, the ansatz used is inspired from~\cite{schuld2020circuit}. Its implementation is available in Pennylane under the name ``StronglyEntanglingLayers''~\cite{bergholm2018pennylane}, and we use $3$ layers. For the other systems, we use the same ansatz used for the quantum PCA task with $3$ layers too, as we obtained better cost values. Figure~\ref{qaperfs} shows the results obtained when running each algorithm $10$ times with different random initialization of the variational ansatz, up to a total shot budget of $10^6$ shots. We do not use Adam though as only two iterations would be done with the latter budget. 
In summary, for the $4$ qubit system, a median eigenvalue error of $8.4 \times 10^{-2}$ was obtained using Rosalin using a median of $1468$ iterations, and $2.0 \times 10^{-4}$ in $208$ iterations with Refoqus. The latter achieved a minimal cost value of $4.4 \times 10^{-5}$, and $1.7 \times 10^{-2}$ for Rosalin. On the $8$ qubit system, a median eigenvalue error of $3.3 \times 10^{-1}$ was obtained using Rosalin using a median of $342$ iterations, and $1.3 \times 10^{-3}$ in $65$ iterations with Refoqus. The latter achieved a minimal cost value of $4.3 \times 10^{-4}$, and $1.6 \times 10^{-1}$ for Rosalin. Finally, on the $14$ qubit system, a median eigenvalue error of $4.7 \times 10^{-1}$ was obtained using Rosalin using a median of $89$ iterations, and $1.5 \times 10^{-2}$ in $24$ iterations with Refoqus. The latter achieved a minimal cost value of $8.8 \times 10^{-3}$, and $3.2 \times 10^{-1}$ for Rosalin. Hence, we witness again that Refoqus outperforms Rosalin. 

\begin{figure*}[!ht]
\centering
\subfloat[]{\includegraphics[width=0.67\columnwidth]{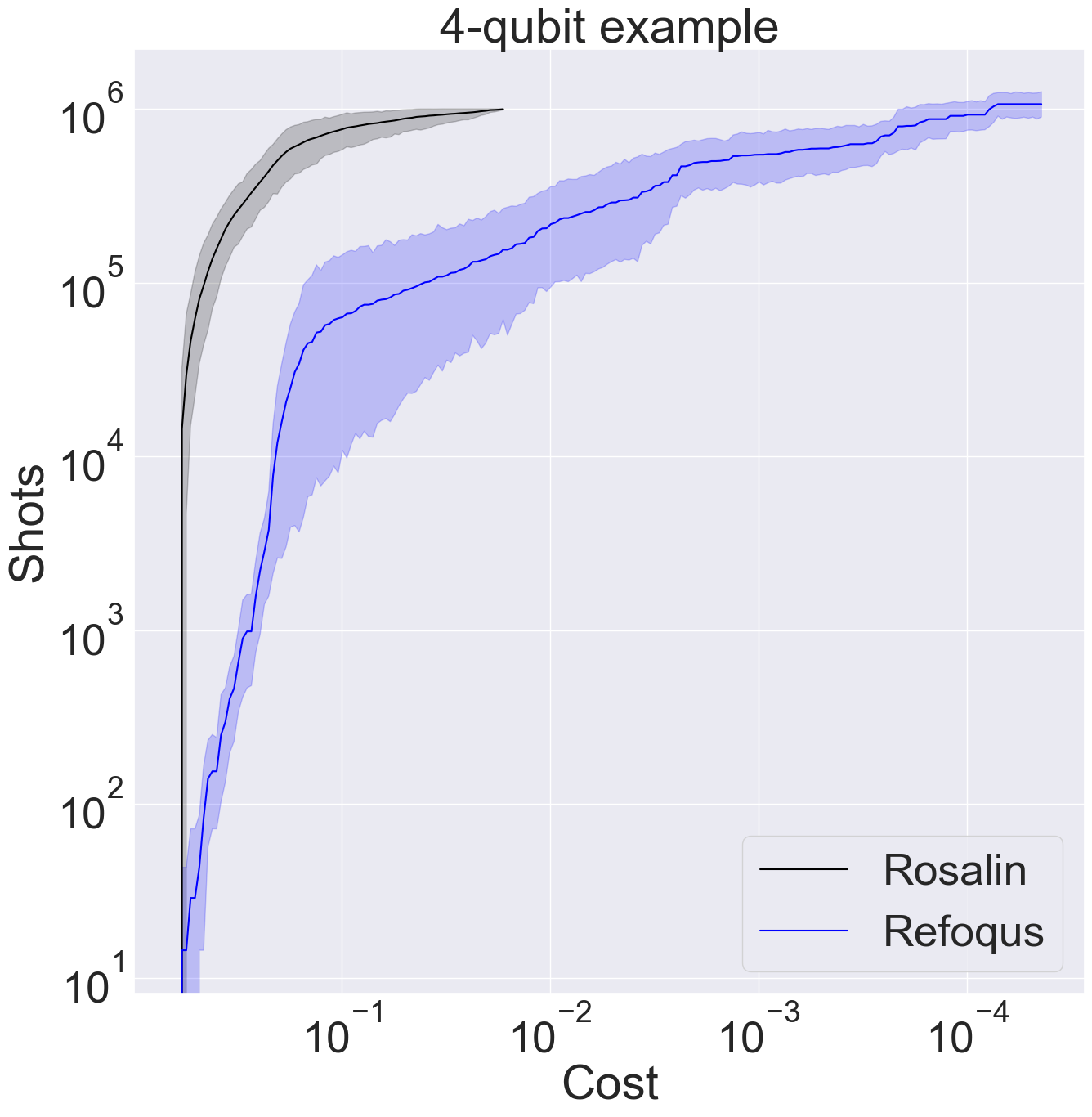}}
\subfloat[]{\includegraphics[width=0.67\columnwidth]{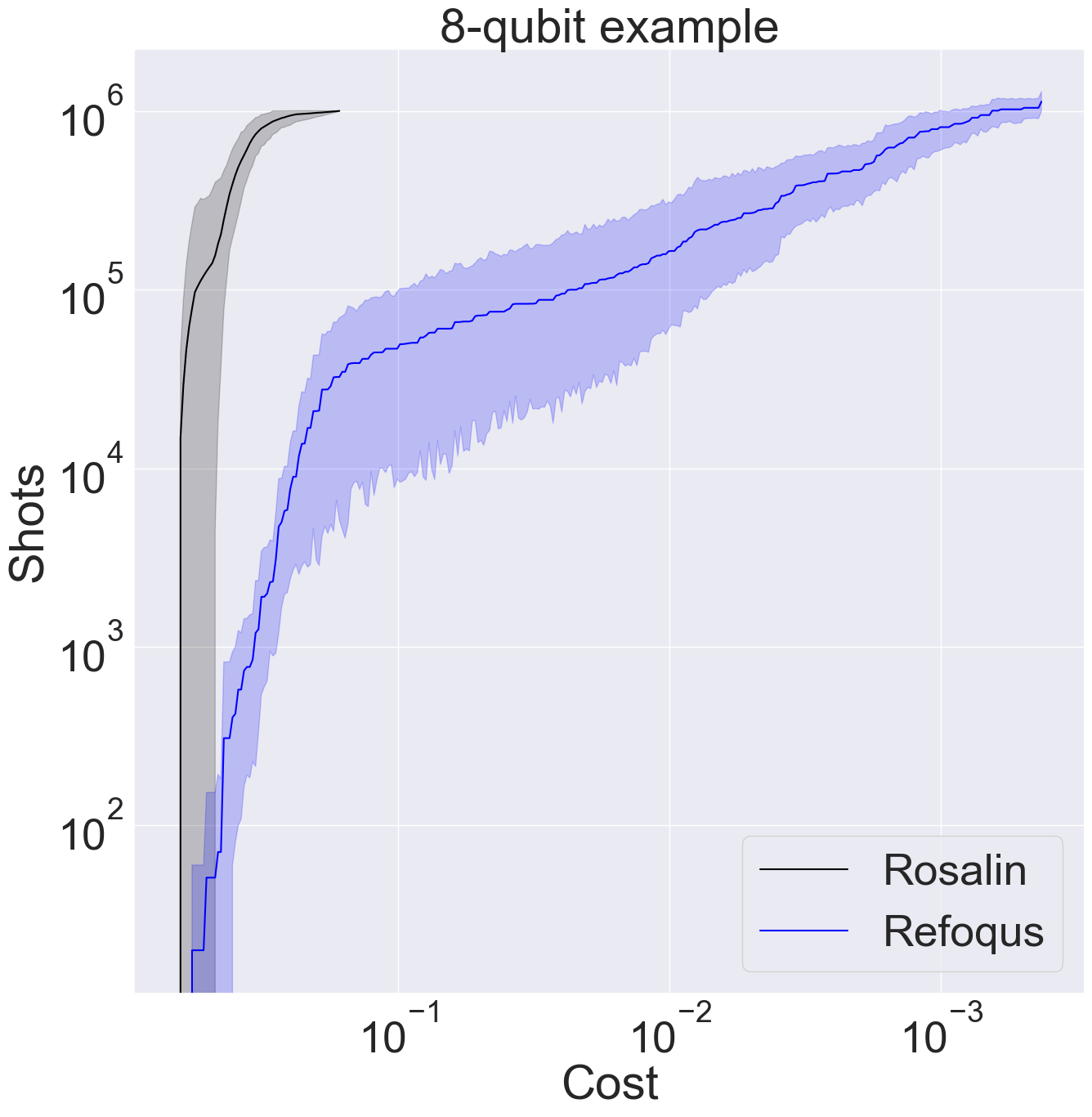}}
\subfloat[]{\includegraphics[width=0.67\columnwidth]{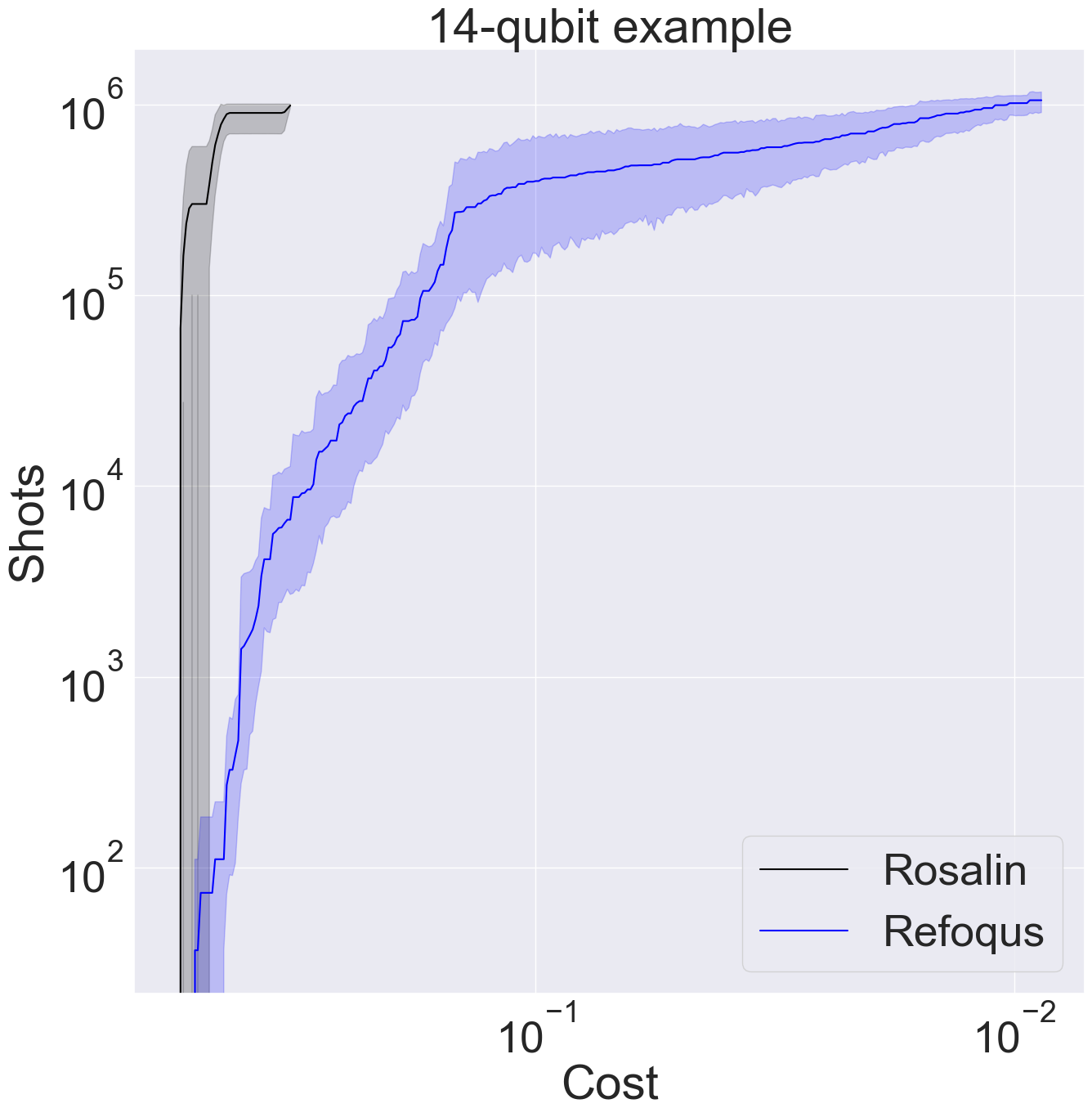}}
\\
\subfloat[]{\includegraphics[width=0.67\columnwidth]{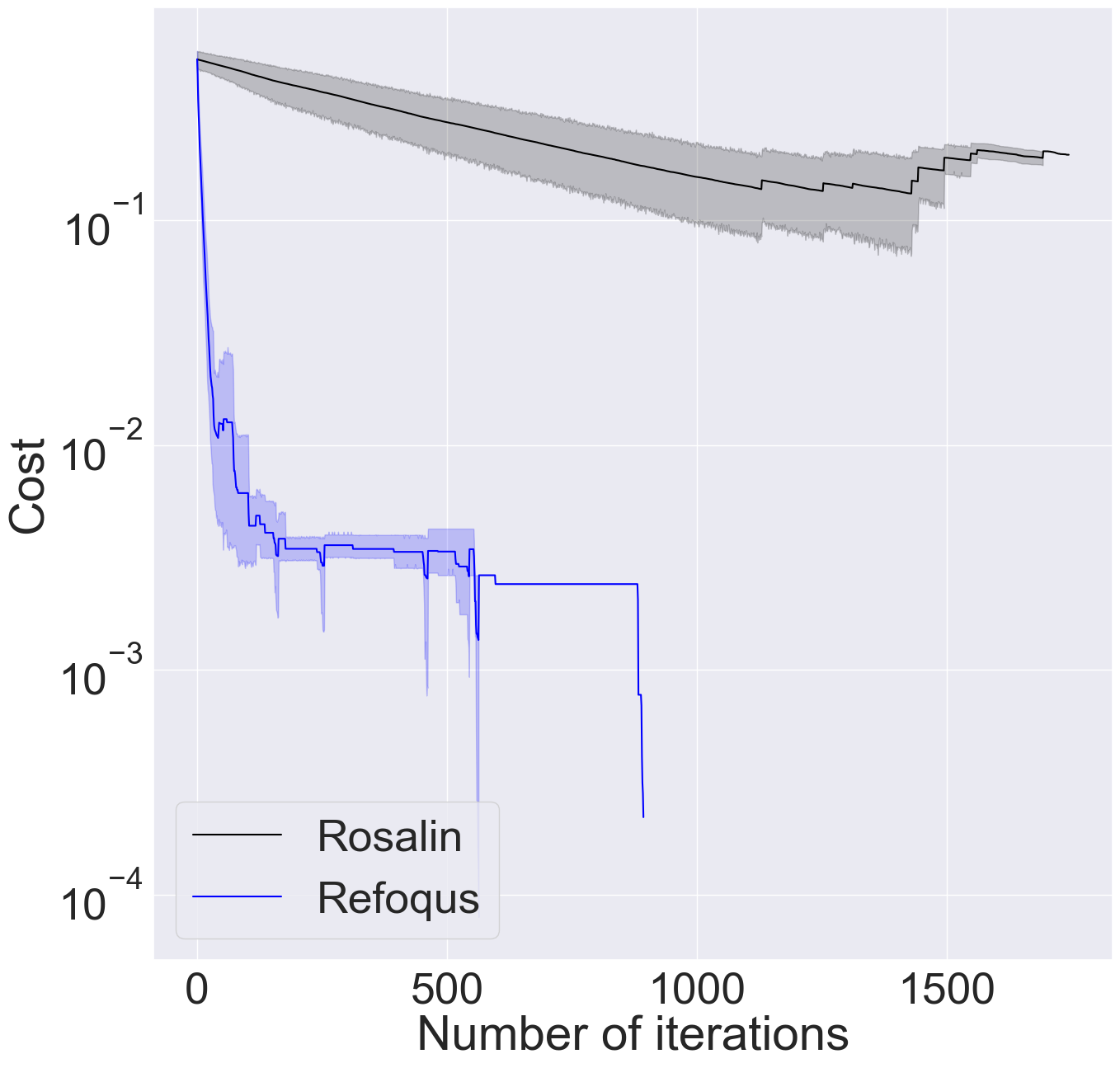}}
\subfloat[]{\includegraphics[width=0.67\columnwidth]{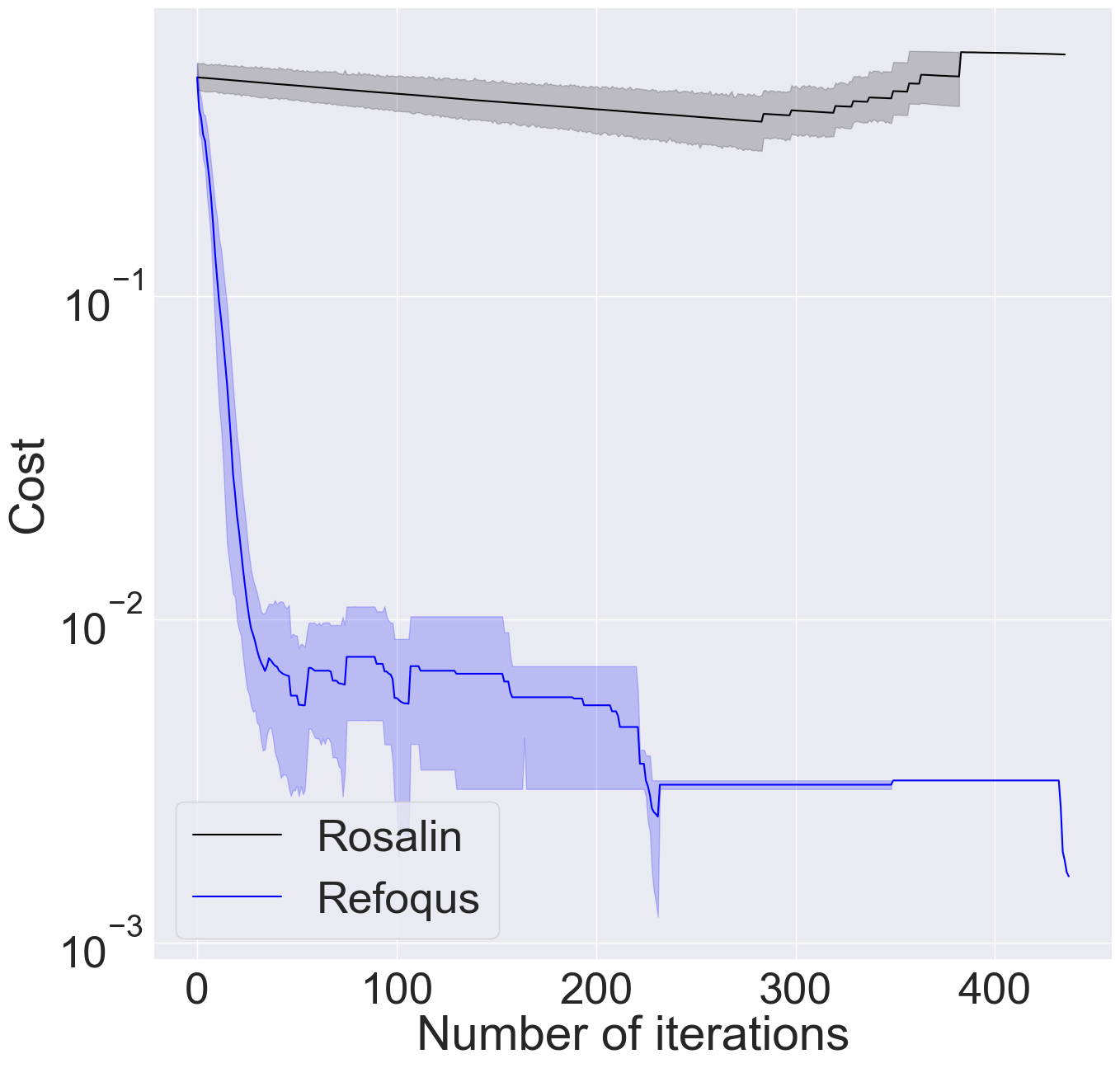}}
\subfloat[]{\includegraphics[width=0.67\columnwidth]{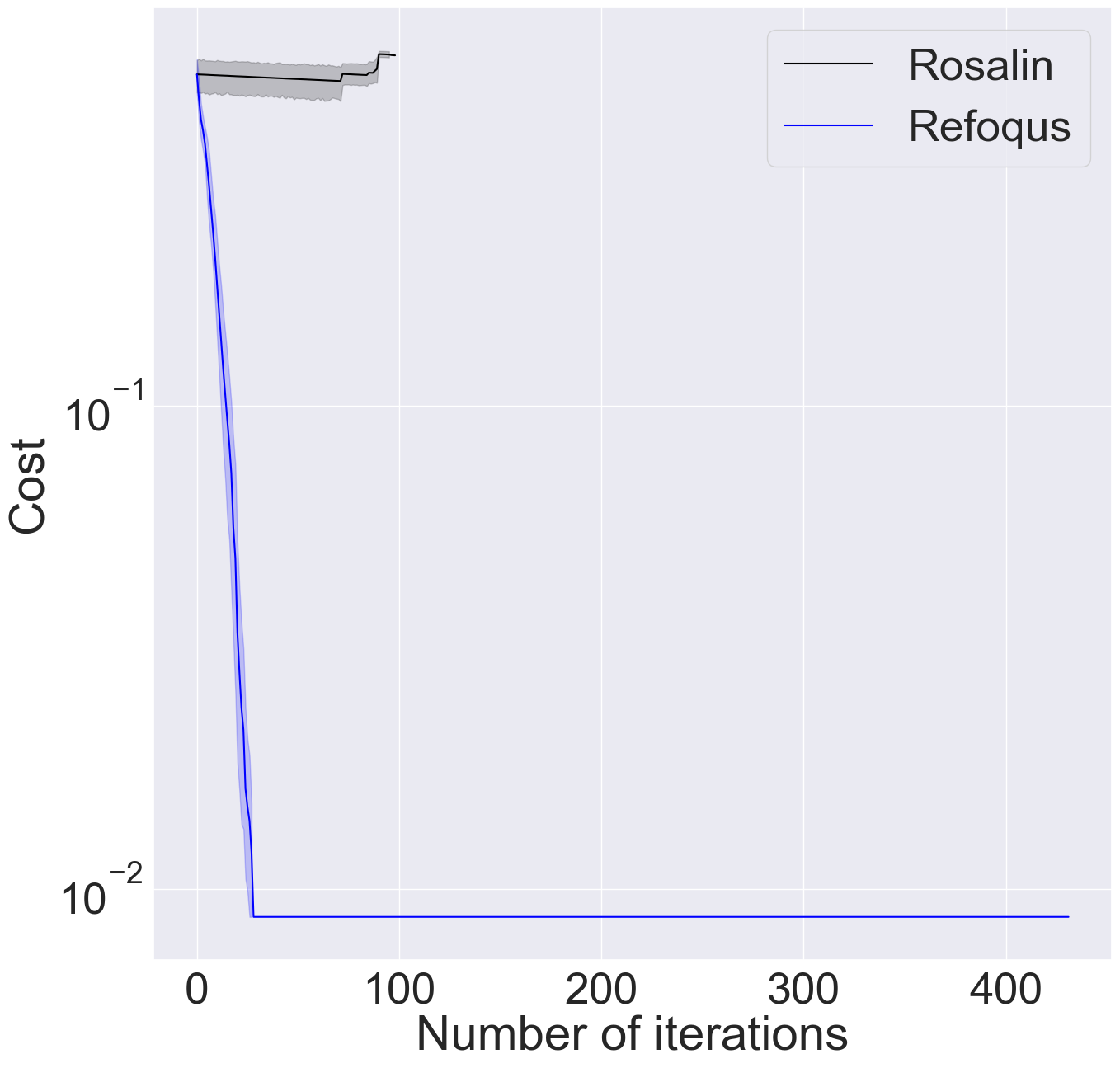}}
\caption{\textbf{Quantum autoencoder applied on $\text{H}_{2}$ molecular ground states (a, b, d, and e) and $\text{BeH}_2$ (c and f)}. The upper plots show the budget of shots spent against the quantum autoencoder cost achieved by the optimizers. The lower plots show the number of iterations used to spend the total shot budget. We display the results obtained from $10$ independent optimization runs on a data set of $101$ circuits representing molecular ground states calculated using the Rosalin, and Refoqus optimizers and compare their performance. We show the median (solid lines) and $95\%$ confidence intervals (shaded regions) over the $10$ different random initializations.}
\label{qaperfs}
\end{figure*}
 
\section{Discussion}

QML algorithms present a different paradigm for data processing which is particularly well suited to quantum data. VQAs are a key contender in giving a near-term useful quantum advantage. However, both VQAs and QML models require long run times and large resource overheads during training (as many iterations and shots to achieve respectable performances). To address these challenges, we propose Refoqus as a shot-frugal gradient-based optimizer based on state and operator sampling. We outline many cost functions that are of interest to the community and are easily captured within our framework for Refoqus. Our new optimizer leverages the loss function of the problem to allocate shots randomly when evaluating gradients. This randomness allows us to make resource-cheap gradient update steps, unlocking shot-frugality. We have shown that Refoqus comes with geometric convergence guarantees under specific assumptions, elaborated in Section~\ref{section:convergence}. Additionally, when applying our optimizer to a QML task, namely a quantum PCA task and a quantum autoencoder task, we obtain significantly better performance in terms of the number of shots and the number of iterations needed to obtain a given accuracy. Although we only applied Refoqus to a quantum PCA task and a quantum autoencoder task, we expect similar results to hold for other problems fitting our framework. Indeed, other cost functions follow a similar pattern and once there are many weighed terms, the sampling approach is expected to return similar results.

A potential future research direction is to extend our analysis to more complicated, non-linear loss functions such as the log-likelihood, exploring in more detail how to introduce shot frugality to gradient-free optimizers~\cite{zhang2022quark, bonet2023performance, kulshrestha2023learning}. Furthermore, applying our optimizer to a problem of interest on a real device is an interesting and logical next step. However, the quality of results can be significantly impacted in noisy settings \cite{wecker2015progress}. To address the latter problem, our optimizer can be combined with several noise-handling procedures when using it on a real device~\cite{endo2018practical, endo2021hybrid, huang2023near, cao2021nisq, bonet2018low, strikis2020learningbased, lowe2020unified, botelho2022error}. It would also be interesting to introduce different techniques to reduce the number of shots spent such as Bayesian optimization \cite{tamiya2022stochastic}.  Finally, other empirical studies related to algorithm selection and configuration~\cite{huang2023near, perezsalinas2023analyzing, bonet2023performance, moussa2020toquantum, moussa2022hyperparameter, moussa2022unsupervised, ito2023latency} can be applied to Refoqus on different datasets \cite{perrier2022qdataset, moussa2022hyperparameter, placidi2023mnisq}.

\acknowledgements 

This work was supported by the U.S. Department of Energy (DOE) through a quantum computing program sponsored by the Los Alamos National Laboratory (LANL) Information Science \& Technology Institute. M.H.G. was partially supported by the Laboratory Directed Research and Development (LDRD) program under project number 20210116DR. M.C. acknowledges support from the LDRD program of LANL under project number 20230049DR. L.C. was partially supported by the U.S. DOE, Office of Science, Office of Advanced Scientific Computing Research, under the Accelerated Research in Quantum Computing (ARQC) program. P.J.C. acknowledges support from the LANL ASC Beyond Moore’s Law project.

\bibliography{quantum}
\bibliographystyle{apsrev4-2}

\appendix

\section{Constructing unbiased estimators for the Log-Likelihood} \label{app:log_likelihood}
Here we review a conceptually different approach for a shot-frugal optimization of a problem utilizing a Log-likelihood loss function. In particular, we will follow the results in~\cite{van2020unbiased}. First, we assume that one is given a dataset of the form $\SC=\{\rho_i,y_i\}_{i=1}^N$ where $\rho_i$ are quantum states in some domain $\RC$, and $y_i$ are discrete real-valued labels from some domain $\YC$ (e.g., in binary classification $\YC=\{0,1\}$). Then, we assume that one has a trainable parametrized model $h_{\thv}:\RC\rightarrow \YC$ whose goal is to predict the correct labels. Note that here one cannot directly apply the techniques previously used, as discrete outputs preclude gradient calculations and the use of the results in~\cite{ balles2017coupling,kubler2020adaptive}. Hence, a different shot-frugal method must be employed. 

Given $\SC$ and $h_{\thv}$, we can define  the likelihood function 
\begin{equation}
    \Pr(\SC|\thv):=\prod_{i=1}^N\Pr(y_i|\rho_i;\thv)=\prod_{i=1}^Np_i\,,
\end{equation}
where we have defined $p_i=\Pr(y_i|\rho_i;\thv)$ and where assumed that the model's predictions only depend on the current input. From here, we can define the negative log-likelihood loss function as
\begin{equation}\label{eq:neg-log-loss-app}
    \LC(\thv)=-\log\left(\Pr(\SC|\thv)\right)=-\sum_{i=1}^N\log\left(p_i\right)\,.
\end{equation}
Interestingly, we can reduce the task of estimating the negative log-likelihood loss function to a Bernoulli sampling problem. Namely, we can assume without loss of generality that the model specifies a stochastic function $g$ that takes as input $\rho_i$, the set of parameters $\thv$ and outputs a prediction $y_i$. That is, $y_i\sim g(\rho,
\thv)$. Then, we remark that to estimate the loss function, we do not need to know the explicit form of $g$, one only needs to know the probability for a random sample $y$ from the model to match the true label $y_i$. Thus, we can convert each output from the model into a variable
\begin{equation}
    x=\begin{cases}1\quad \text{if $y=y_i$}\,\\
    0\quad \text{otherwise}
    \end{cases}\,.
\end{equation}
By construction, $x$ follows a Bernoulli distribution with probability $p_i$. Thus, we can estimate the log-likelihood loss function by drawing samples $(x_1,x_2,\ldots)$ from said Bernoulli distribution. In most common cases these samples can be thought of as single shot-estimates of the model's output, and our goal is to minimize this number of samples/shots.  

In~\cite{van2020unbiased} the authors study the task of constructing an unbiased estimator for the negative log-likelihood loss function of Eq.~\eqref{eq:neg-log-loss-app}, where they consider a fixed number of shots strategy (one uses $s$ fixed shots) and the inverse binomial sampling (IBS) strategy (where one keeps drawing samples until $x_k=1$). One can show that the fixed number of shots strategy always leads to a  biased estimator of $\log(p)$ (for the special case when one estimates $p=\sum_{k=1}^s x_k/s$, the estimator can even be infinitely biased if $x_k$ is always zero!). On the other hand, using IBS one can define the estimator 
\begin{equation}
    \widehat{\LC}(\thv)=\begin{cases}0\quad \text{for $K$=1}\,\\
    \sum_{k=1}^{K-1}\frac{1}{k}\quad \text{for $K>1$}
    \end{cases}\,,
\end{equation}
where $K$ is the number of shots needed to obtain a result $x_k=1$. Notably, it can be shown that $E[\widehat{\LC}(\thv)]=\LC(\thv)$, meaning that IBS provides an unbiased estimator for the negative log-likelihood loss function (see ~\cite{van2020unbiased}). The number of shots that IBS takes trial with probability $p_i$ is geometrically distributed with mean $1/p_i$. For instance, if $p_i\geq 0.5$, IBS can require only 1 or 2 shots, while it will assign more shots to cases of small $p_i$. We also remark that in Ref.~\cite{van2020unbiased} the authors derive the estimator variance, and show that is minimal and non-diverging if $p_i\rightarrow 0$. This method then provides a shot-frugal, minimum-variance, unbiased estimator for the negative log-likelihood loss function which can be combined with gradient-free optimizers to train the parameters in the mode. We further refer the reader to Ref.~\cite{van2020unbiased} for a discussion on how to extend these results to models $\ell_{\thv}$ with continuous outputs. 
\section{Using U-statistics to construct unbiased estimators}
\label{app:U-statistics}
The theory of U-statistics was initially introduced by Hoeffding in the late 1940s \cite{hoeffding1948class} and has a wide range of applications. U-statistics presents a methodology on how to use observations of estimable parameters to construct minimum variance unbiased estimates of more complex functions of the estimable parameters. In the cases we are interested in here, the estimable parameter is the expectation value of an observable used when constructing the cost function.

First, let $\mathcal{P}$ be a family of probability measures on an arbitrary measurable space. A probability measure is defined as the probability of some particular event occurring. Let $\Theta(P)$ be a real-valued function defined for $P \in \mathcal{P}$. We define $\Theta(P)$ to be an estimable parameter within $\mathcal{P}$ if there exists some integer $r$ for which an unbiased estimator of $\Theta(P)$ exists constructed from $r$ independent identically distributed (i.i.d) random variables according to $P$. In other words, there is a measurable, real-valued function $h(x_1,...,x_r)$ such that
\begin{equation}
    \mathbb{E}_{P}[(h)(x_1,...,x_r)] = \Theta(P) \quad \forall P \in \mathcal{P},
\end{equation}
where $x_1,...,x_m$ are the i.i.d with distribution $P$. The degree of $\Theta(P)$ is defined to be the smallest integer $r$ with this property. 

The function $h(x_1,..,x_r)$ is also referred to as the kernel. This function can be symmetrized by summing over $h$ with all possible permutations of the random variables $x_1,...,x_n$. That is, the symmetrized version of the kernel can be defined as follows
\begin{equation}
    h^{*}(x_1,...,x_r) = \frac{1}{r!} \sum_{\sigma \in S_r} h(x_{\sigma(1)},...,x_{\sigma(r)}),
\end{equation}
where the summation is over $S_r$, the group of all permutations of a set of $r$ indexes. With these definitions in hand, we can construct a U-statistic. 

Assume we are given $s>r$ samples from $\mathcal{P}$, the U-statistic with kernel $h$ is defined as,
\begin{equation}
    U_s = U(h) = \frac{1}{\binom{s}{r}} \sum_{C^{s}_{r}} h^{*}(x_{i_{1}},...,x_{i_{r}}),
\end{equation}
where the summation is over the set $C^{s}_{r}$ of $ \binom{s}{r} = \frac{s!}{r!(s-r)!}$ combinations of $r$ integers chosen from $(1,2,...,s)$.

As the U-statistic defined above is simply an average of unbiased estimates for $\Theta(P)$ and is therefore also an unbiased estimator for $\Theta(P)$. Indeed, it can be proved that the U-statistic is the best-unbiased estimate for $\Theta(P)$ in that it has the smallest variance. 

\subsection{Examples of U-statistics}
Now that we have introduced U-statistics we can build intuition with some simple examples. Firstly, let us consider how to construct an unbiased estimate for the mean $\mu$. Recall we first need to construct a kernel function $h$. In this case, the kernel will have degree $1$ and can simply be defined as $h(x_i) = x_i$. Note this has the required property that $\mathbb{E}[h(x_{i})] = \mathbb{E}[x_{i}] = \mu$. Furthermore, in this case $h^{*}(x_{i}) = h(x_{i})$, i.e. the kernel is already symmetric. Therefore, the U-statistic for $\mu$ constructed from $s$ samples can be written as,
\begin{align}
    U_s(h) &=  \frac{1}{\binom{s}{1}} \sum_{C^{s}_{1}} x_i \\
    &=\frac{1}{s}\sum_{i=1}^{s}x_i,
\end{align}
which is the well-known formula of the sample average, and which is used to estimate the mean. 

Similarly, we can consider how to construct a U-statistic for the variance. Consider the kernel $h(x_i, x_j) = x_{i}^{2} - x_ix_j$. This kernel has degree $2$. Note once again, this has the required property that $\mathbb{E}[h(x_i, x_j)] = \mathbb{E}(x_{i}^{2}) - \mathbb{E}(x_ix_j) = \mathbb{E}(x_{i}^{2}) - \mu^{2}$, which is the definition of the variance. Then this needs to be symmetrized, $h^{*} = 1/2 (x_{i}^{2} - x_ix_j + x_{j}^{2} - x_jx_i) = 1/2( x_{i}^{2} + x_{j}^{2} - 2x_ix_j) = 1/2(x_i - x_j)^{2}$. Therefore, the U-statistic for the variance constructed with the kernel $h$ using $s$ samples is,
\begin{align}
    U_s(h) &=  \frac{1}{\binom{s}{2}} \sum_{C^{s}_{2}} \frac{(x_i - x_j)^{2}}{2} \\
    &= \frac{1}{{s(s-1)}} \sum_{C^{s}_{2}} (x_i - x_j)^{2},
\end{align}
which corresponds to the sample variance.

As a final example let's consider constructing an estimator of the function $\Theta(P) = P^{k}$. Lets define the kernel as $h(x_1,..,x_k) = \prod_{i} x_{i}$, which clearly fulfills the properties we require. This kernel is already symmetric such that $h(x_1,...,x_k) = h^{*}(x_1,...,x_m)$. Therefore, the U-statistic can be defined as,
\begin{align}
    U_s(h) &=  \frac{1}{\binom{s}{k}} \sum_{C^{s}_{k}} \prod_{i} x_{i} \,.
\end{align}
This is exactly the form of estimators that we require in the main text. For our purposes, the number of samples $s$ is a random variable that is constructed probabilistically.

The framework of U-statistics is perfectly suited to our purposes. Indeed, it provides a recipe to construct unbiased estimators for polynomial functions of variables with some probability distribution. Furthermore, these estimators have the smallest possible variance. Finally, the framework shows the minimum number of shots necessary to estimate a given term through the degree of the kernel. 

\section{Variance of estimator for linear loss functions}
\label{app:varlinear}

In section~\ref{unbiasedestimatorslinearsection}, we constructed an unbiased estimator for the loss function. Denoting the latter $\widehat{\LC}$, we proved in Prop.~\ref{prop-unbiased-mother-cost}  that $\mathbb{E}[\widehat{\LC}] = \LC(\thv)$. We also obtained the variance of such an estimator, which can be used to compare different shot-allocation strategies, similar to \cite{arrasmith2020operator} for the Rosalin optimizer designed for VQE. However, we use elements in the variance calculation in Appendix \ref{app:mse} to construct an unbiased estimator for the MSE loss function. 

\begin{proposition}
\label{prop-var-mother-cost}
The variance of $\widehat{\LC}$ is
\begin{align}
    {\rm Var}(\widehat{\LC}) &= \sum_{\vec{i}, j}  \frac{q_{\vec{i},j}^2}{\mathbb{E}[s_{\vec{i},j}]} \sigma_{\vec{i},j}^2 \\ \nonumber
    &+ \sum_{\vec{i}, j, \vec{i'}, j'}  \frac{q_{\vec{i},j} q_{\vec{i'},j'}}{\mathbb{E}[s_{\vec{i},j}] \mathbb{E}[s_{\vec{i'},j'}]} \\ \nonumber
  & \quad \big[\langle h_{\vec{i},j} (\thv)\rangle \langle h_{\vec{i'},j'} (\thv) \rangle {\rm Cov}[ s_{\vec{i},j}, s_{\vec{i'},j'} \big],
 \label{eq:genvariance}
\end{align}
where $\sigma_{\vec{i},j}^2 = \langle h^{2}_{\vec{i},j} (\thv) \rangle - (\langle h_{\vec{i},j} (\thv) \rangle)^2$ and

The covariance is defined as
\begin{equation}
  {\rm Cov}[ s_{\vec{i},j}, s_{\vec{i'},j'}] = \mathbb{E}[ s_{\vec{i},j} s_{\vec{i'},j'} ] - \mathbb{E}[s_{\vec{i},j}] \mathbb{E}[s_{\vec{i'},j'}].  
\end{equation}

\end{proposition}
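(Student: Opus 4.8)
The plan is to treat $\widehat{\LC}$ as a bilinear form in the random sums $S_{\vec{i},j} := \sum_{k=1}^{s_{\vec{i},j}} r_{\vec{i},j,k}$, so that $\widehat{\LC} = \sum_{\vec{i},j} \frac{q_{\vec{i},j}}{\mathbb{E}[s_{\vec{i},j}]} S_{\vec{i},j}$ by Eq.~\eqref{eq:estimators}. Expanding the variance of a linear combination immediately gives $\Var(\widehat{\LC}) = \sum_{\vec{i},j,\vec{i'},j'} \frac{q_{\vec{i},j} q_{\vec{i'},j'}}{\mathbb{E}[s_{\vec{i},j}]\,\mathbb{E}[s_{\vec{i'},j'}]}\, {\rm Cov}(S_{\vec{i},j}, S_{\vec{i'},j'})$, which reduces the whole task to evaluating the covariance of two random sums. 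I would first make explicit the independence assumptions already implicit in the setup: the single-shot outcomes $r_{\vec{i},j,k}$ are i.i.d.\ with mean $\langle h_{\vec{i},j}(\thv)\rangle$ and second moment $\langle h^2_{\vec{i},j}(\thv)\rangle$, outcomes associated with distinct terms $(\vec{i},j)$ are independent, and all outcomes are independent of the (possibly correlated, random) shot counts $s_{\vec{i},j}$.

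Second, I would compute the off-diagonal covariance by conditioning on the shot counts. Using the tower rule, $\mathbb{E}[S_{\vec{i},j} S_{\vec{i'},j'}] = \mathbb{E}\big[\,\mathbb{E}[S_{\vec{i},j} S_{\vec{i'},j'} \mid s_{\vec{i},j}, s_{\vec{i'},j'}]\,\big]$; since the two families of outcomes are independent of each other and of the counts, the inner expectation factorizes as $s_{\vec{i},j}\, s_{\vec{i'},j'}\, \langle h_{\vec{i},j}\rangle \langle h_{\vec{i'},j'}\rangle$. Subtracting $\mathbb{E}[S_{\vec{i},j}]\mathbb{E}[S_{\vec{i'},j'}] = \mathbb{E}[s_{\vec{i},j}]\mathbb{E}[s_{\vec{i'},j'}]\, \langle h_{\vec{i},j}\rangle \langle h_{\vec{i'},j'}\rangle$ (Wald's identity, as already used in Prop.~\ref{prop-unbiased-mother-cost}) yields ${\rm Cov}(S_{\vec{i},j}, S_{\vec{i'},j'}) = \langle h_{\vec{i},j}\rangle \langle h_{\vec{i'},j'}\rangle\, {\rm Cov}(s_{\vec{i},j}, s_{\vec{i'},j'})$ whenever $(\vec{i},j)\neq(\vec{i'},j')$.

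Third, I would handle the diagonal terms separately, since there the factorization fails when the two shot indices coincide. Conditioning on $s_{\vec{i},j}$ and splitting the double index sum into its $k=k'$ and $k\neq k'$ contributions gives $\mathbb{E}[S_{\vec{i},j}^2 \mid s_{\vec{i},j}] = s_{\vec{i},j}\, \langle h^2_{\vec{i},j}\rangle + s_{\vec{i},j}(s_{\vec{i},j}-1)\,\langle h_{\vec{i},j}\rangle^2$. Taking the outer expectation and subtracting $\mathbb{E}[S_{\vec{i},j}]^2$ produces the compound-sum variance $\Var(S_{\vec{i},j}) = \mathbb{E}[s_{\vec{i},j}]\,\sigma_{\vec{i},j}^2 + \Var(s_{\vec{i},j})\,\langle h_{\vec{i},j}\rangle^2$, with $\sigma_{\vec{i},j}^2 = \langle h^2_{\vec{i},j}\rangle - \langle h_{\vec{i},j}\rangle^2$. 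Recognizing $\Var(s_{\vec{i},j}) = {\rm Cov}(s_{\vec{i},j}, s_{\vec{i},j})$ lets me fold the $\langle h_{\vec{i},j}\rangle^2\,\Var(s_{\vec{i},j})$ piece back into the full unrestricted double sum over all index pairs, while the $\mathbb{E}[s_{\vec{i},j}]\,\sigma_{\vec{i},j}^2$ piece, after division by $\mathbb{E}[s_{\vec{i},j}]^2$, collapses into the single sum $\sum_{\vec{i},j} \frac{q_{\vec{i},j}^2}{\mathbb{E}[s_{\vec{i},j}]}\,\sigma_{\vec{i},j}^2$. Together these reassemble exactly the stated expression.

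I expect the main obstacle to be the bookkeeping in the diagonal case: carrying the random upper limit $s_{\vec{i},j}$ through the square, performing the $k=k'$ versus $k\neq k'$ split correctly, and then verifying that the diagonal $\langle h_{\vec{i},j}\rangle^2$ contribution merges cleanly with the off-diagonal covariance terms into a single unrestricted double sum. The remainder is the routine law-of-total-variance and Wald machinery, whose validity rests entirely on the stated independence of the measurement outcomes from the shot-count random variables.
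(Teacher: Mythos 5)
Your proposal is correct and arrives at exactly the stated formula; at bottom it performs the same computation as the paper, but the packaging differs in a way worth noting. The paper expands $\mathbb{E}[\widehat{\LC}^2]-\LC(\thv)^2$ directly and splits the resulting quadruple sum over $(\vec{i},j,\vec{i'},j')$ into four index cases ($E_1$ through $E_4$), invoking Wald's equation to factorize the three mixed cases and performing the $k=k'$ versus $k\neq k'$ split only inside the fully diagonal term $E_4$. You instead use bilinearity of covariance, which collapses the paper's three mixed cases into a single off-diagonal computation $\mathrm{Cov}(S_{\vec{i},j},S_{\vec{i'},j'})=\langle h_{\vec{i},j}(\thv)\rangle\langle h_{\vec{i'},j'}(\thv)\rangle\,\mathrm{Cov}(s_{\vec{i},j},s_{\vec{i'},j'})$, and you recast $E_4$ as the compound-sum variance $\Var(S_{\vec{i},j})=\mathbb{E}[s_{\vec{i},j}]\,\sigma_{\vec{i},j}^2+\Var(s_{\vec{i},j})\,\langle h_{\vec{i},j}(\thv)\rangle^2$; your diagonal split $\mathbb{E}[S_{\vec{i},j}^2\mid s_{\vec{i},j}]=s_{\vec{i},j}\langle h^2_{\vec{i},j}(\thv)\rangle+s_{\vec{i},j}(s_{\vec{i},j}-1)\langle h_{\vec{i},j}(\thv)\rangle^2$ is precisely the paper's $E_{4,1}+E_{4,2}$ decomposition, and your final fold of the diagonal $\Var(s_{\vec{i},j})=\mathrm{Cov}(s_{\vec{i},j},s_{\vec{i},j})$ terms into the unrestricted double sum reproduces the proposition verbatim. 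Your route buys two things: it is more economical (two cases instead of four), and it makes explicit an assumption the paper leaves implicit — namely that the single-shot outcomes are independent of the (possibly mutually correlated) shot-count random variables. Indeed, strictly speaking Wald's equation governs a single random sum; for products of two sums with correlated random counts, as arise under multinomial shot allocation, the factorization the paper attributes to Wald is really the tower-rule conditioning argument you spell out, so your treatment of the off-diagonal terms is arguably the more rigorous of the two. What the paper's version buys in exchange is that its explicit intermediate expressions (notably the $E_{4,1}$ identity $\mathbb{E}[s_{\vec{i},j}(s_{\vec{i},j}-1)]\langle h_{\vec{i},j}(\thv)\rangle^2$) are reused directly in the later construction of the unbiased MSE estimator in Appendix~\ref{app:mse}.
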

\label{app:proof-prop-unbiased-simple-mother-cost}
\begin{proof}
Using the definition of the variance and recalling that  $\mathbb{E}[\widehat{\LC}] = \LC(\thv)$, we can write
\begin{align}\label{vareq}
    {\rm Var}(\widehat{\LC})
    &= \mathbb{E}[\widehat{\LC}^2] - \mathbb{E}[\widehat{\LC}]^2 \nonumber \\
    &=  \sum_{\vec{i}, j, \vec{i'}, j'} \frac{q_{\vec{i},j} q_{\vec{i'},j'}}{\mathbb{E}[s_{\vec{i},j}] \mathbb{E}[s_{\vec{i'},j'}]} \mathbb{E}[\sum_{k=1}^{s_{\vec{i},j}} \sum_{k'=1}^{s_{\vec{i}',j'}} r_{\vec{i},j,k} r_{\vec{i'},j',k'} ] \nonumber \\
    & \quad - \LC(\thv)^2\,.
\end{align}
We can then decompose the first term into its constituent parts where indices in the sum are different and equal 
\begin{align}
    & \sum_{\vec{i}, j, \vec{i'}, j'} \frac{q_{\vec{i},j} q_{\vec{i'},j'}}{\mathbb{E}[s_{\vec{i},j}] \mathbb{E}[s_{\vec{i'},j'}]} \mathbb{E}[\sum_{k=1}^{s_{\vec{i},j}} \sum_{k'=1}^{s_{\vec{i}',j'}} r_{\vec{i},j,k} r_{\vec{i'},j',k'} ] \nonumber\\ 
    &= \sum_{\vec{i}, j, \vec{i'}, j', \vec{i} \ne \vec{i'}, j \ne j'} \frac{q_{\vec{i},j} q_{\vec{i'},j'}}{\mathbb{E}[s_{\vec{i},j}] \mathbb{E}[s_{\vec{i'},j'}]} E_1 \nonumber\\
    &+ \sum_{\vec{i}, j, \vec{i'}, j', \vec{i} = \vec{i'}, j \ne j'} \frac{q_{\vec{i},j} q_{\vec{i},j'}}{\mathbb{E}[s_{\vec{i},j}] \mathbb{E}[s_{\vec{i},j'}]} E_2 \nonumber\\
    &+ \sum_{\vec{i}, j, \vec{i'}, j', \vec{i} \ne \vec{i'}, j = j'} \frac{q_{\vec{i},j} q_{\vec{i'},j}}{\mathbb{E}[s_{\vec{i},j}] \mathbb{E}[s_{\vec{i'},j}]} E_3 \nonumber\\
    &+ \sum_{\vec{i}, j} \frac{q_{\vec{i},j}^2}{\mathbb{E}^2[s_{\vec{i},j}]} E_4,
\end{align}
where 
\begin{align} \label{separationsumvar}
    & E_1 =  \mathbb{E}[\sum_{k=1}^{s_{\vec{i},j}} \sum_{k'=1}^{s_{\vec{i}',j'}} r_{\vec{i},j,k} r_{\vec{i'},j',k'} ] \\
  & E_2 =  \mathbb{E}[\sum_{k=1}^{s_{\vec{i},j}} \sum_{k'=1}^{s_{\vec{i},j'}} r_{\vec{i},j,k} r_{\vec{i},j',k'} ] \\
  & E_3 =  \mathbb{E}[\sum_{k=1}^{s_{\vec{i},j}} \sum_{k'=1}^{s_{\vec{i}',j}} r_{\vec{i},j,k} r_{\vec{i'},j,k'} ] \\
  & E_4 =  \mathbb{E}[\sum_{k=1}^{s_{\vec{i},j}} \sum_{k'=1}^{s_{\vec{i},j}} r_{\vec{i},j,k} r_{\vec{i},j,k'} ].
\end{align}
Note that since $\frac{1}{\mathbb{E}[s_{\vec{i},j}]}\sum_{k=1}^{s_{\vec{i},j}} r_{\vec{i},j,k}$ is an unbiased estimator of $\langle h_{\vec{i},j} (\thv) \rangle$, by application of Wald's equation we obtain  
\begin{align}\label{App:Varnotequalcase}
    E_1 &= \langle h_{\vec{i},j} (\thv) \langle h_{\vec{i'},j'} (\thv) \rangle \mathbb{E}[ s_{\vec{i},j} s_{\vec{i'},j'}] \\
    E_2 &= \langle h_{\vec{i},j} (\thv) \rangle \langle h_{\vec{i},j'} (\thv) \rangle \mathbb{E}[ s_{\vec{i},j} s_{\vec{i},j'}] \\
    E_3 &= \langle h_{\vec{i},j} (\thv) \rangle \langle h_{\vec{i'},j} (\thv) \rangle \mathbb{E}[ s_{\vec{i},j} s_{\vec{i'},j}].
\end{align}\label{App:Varnotequalcasebutequali}
For the case $\{\vec{i} = \vec{i'}, j = j'\}$, care needs to be taken when expanding the sum as the terms are not independent when $k' = k$. Therefore, we decompose the sum into the cases where $k \neq k'$ and $k =k'$,
\begin{align}\label{App:Varequalcase}
&E_4 = \mathbb{E}[\sum_{k=1}^{s_{\vec{i},j}} \sum_{k'=1, k' \ne k}^{s_{\vec{i},j}} r_{\vec{i},j,k} r_{\vec{i},j,k'} ] + \mathbb{E}[\sum_{k=1}^{s_{\vec{i},j}}  r_{\vec{i},j,k}^2 ] \nonumber \\
&= E_{4, 1} + E_{4, 2}.
\end{align}
Once again, using Wald's equation, but taking into account that the number of entries for the $k \ne k'$ is $(s_{\vec{i},j}-1)$, one can show
\begin{equation}\label{App:Varequalcasesquareoutside}
  E_{4, 1} =  \mathbb{E}[s_{\vec{i},j} (s_{\vec{i},j} - 1)] (\langle h_{\vec{i},j} (\thv) \rangle)^2. 
\end{equation}
Finally, using the fact that $ \frac{1}{\mathbb{E}[s_{\vec{i},j}]} \sum_{k=1}^{s_{\vec{i},j}}  r_{\vec{i},j,k}^2$ is an unbiased estimator of $\langle h^{2}_{\vec{i},j} (\thv) \rangle$, we obtain:
\begin{equation}\label{App:Varequalcasesquareinside}
    E_{4, 2} = \mathbb{E}[s_{\vec{i},j}] \langle h^{2}_{\vec{i},j} (\thv) \rangle\,.
\end{equation}
Bringing all the terms together and writing the expressions containing $E_1$, $E_2$ and $E_3$ in a single sum, we obtain

\begin{align}
    {\rm Var}(\widehat{\LC}) &= \sum_{\vec{i}, j} \frac{q_{\vec{i},j}^2}{\mathbb{E}^2[s_{\vec{i},j}]}  \mathbb{E}[s_{\vec{i},j} (s_{\vec{i},j} - 1)] (\langle h_{\vec{i},j} (\thv) \rangle)^2 \nonumber\\
    &+ \sum_{\vec{i}, j} \frac{q_{\vec{i},j}^2}{\mathbb{E}^2[s_{\vec{i},j}]}  \mathbb{E}[s_{\vec{i},j}] \langle h^{2}_{\vec{i},j} (\thv) \rangle - \LC(\thv)^2 \nonumber \\
    &+ \sum_{\substack{\vec{i}, j, \vec{i'}, j'  \\  \vec{i} \neq \vec{i'} \text{ if } j = j' \\ j \neq j' \text{ if } \vec{i} = \vec{i'}}}  \frac{q_{\vec{i},j} q_{\vec{i'},j'}}{\mathbb{E}[s_{\vec{i},j}] \mathbb{E}[s_{\vec{i'},j'}]} \langle h_{\vec{i},j} (\thv) \rangle \langle h_{\vec{i'},j'} (\thv) \rangle \nonumber \\
    & \mathbb{E}[ s_{\vec{i},j} s_{\vec{i'},j'} ]\,.
\end{align}

Expanding the second term and factorizing leads to,
\begin{align}
    {\rm Var}(\widehat{\LC}) 
    &= \sum_{\vec{i}, j}\frac{q_{\vec{i},j}^2}{\mathbb{E}[s_{\vec{i},j}]} [\langle h^{2}_{\vec{i},j} (\thv) \rangle - [\langle h_{\vec{i},j} (\thv) \rangle ]^2] \nonumber \\
     & + \sum_{\vec{i}, j, \vec{i'}, j'}  \frac{q_{\vec{i},j} q_{\vec{i'},j'}}{\mathbb{E}[s_{\vec{i},j}] \mathbb{E}[s_{\vec{i'},j'}]} \mathbb{E}[ s_{\vec{i},j} s_{\vec{i'},j'} ] \nonumber \\ 
     & \langle h_{\vec{i},j} (\thv) \rangle  \langle h_{\vec{i'},j'} (\thv) \rangle   
     - \LC(\thv)^2.
\end{align}
Which we can then rewrite to prove the desired result,
\begin{align}
&{\rm Var}(\widehat{\LC}) = \sum_{\vec{i}, j}  \frac{q_{\vec{i},j}^2}{\mathbb{E}[s_{\vec{i},j}]} \sigma_{\vec{i},j}^2 \nonumber \\
 & + \sum_{\vec{i}, j, \vec{i'}, j'}  \frac{q_{\vec{i},j} q_{\vec{i'},j'}}{\mathbb{E}[s_{\vec{i},j}] \mathbb{E}[s_{\vec{i'},j'}]} \langle h_{\vec{i},j} (\thv) \rangle \langle h_{\vec{i'},j'} (\thv) \rangle {\rm Cov}[ s_{\vec{i},j}, s_{\vec{i'},j'}]\,. 
\end{align}
\end{proof}

\section{Constructing an unbiased estimators of loss functions with polynomial dependence}
\label{app:unbiasedlosspolyD}

We consider here how to construct estimators for the loss functions with an arbitrary polynomial of degree $D$ in the observables, expressed as:
\begin{align} \label{costpolynomialintrace}
    \LC(\thv) &= \sum_{\vec{i}} \sum_{z=0}^D p_{\vec{i}} a_z (\sum_{j} c_{\vec{i},j} \langle h_{\vec{i}, j} (\thv)\rangle)^z \\
    &= \sum_{\vec{i},z} p_{\vec{i},z} (\sum_{j} c_{\vec{i},j} \langle h_{\vec{i}, j} (\thv)\rangle)^z,
\end{align}
where $ p_{\vec{i},z} = p_{\vec{i}} a_z, a_z \in \mathbb{R}$.
Using the multinomial theorem, given $J$ hamiltonian terms, we can develop the power of the sum into a sum of $\binom{z+J-1}{J}$ terms:
\begin{equation} \label{costpolynomialintracemultinomial}
     \LC(\thv) =  \sum_{\vec{i},z} p_{\vec{i},z} \sum_{\substack{b_1+ b_2+\\ \cdots +b_J=z}} \binom{z}{b_1, b_2, \cdots b_J} \prod_j (c_{\vec{i},j} \langle h_{\vec{i}, j} (\thv)\rangle)^{b_j}\,,
\end{equation}
where $\binom{z}{b_1, b_2, \cdots b_J} = \frac{z!}{b_1! b_2! \cdots b_J!}$ and $b_j$ are non-negative integers. First, one can generalize the formula for an unbiased estimator of $(\langle h_{\vec{i}, j} (\thv)\rangle)^{b_j}$:
\begin{align} \label{app:estimatorwithpowers}
        & \widehat{\mathcal{E}}_{\vec{i},j} = \frac{1}{  \mathbb{E}[ \prod_{t=0}^{b_j-1} (s_{\vec{i},j} - t)]} \prod_{\alpha=1}^{b_j}(\sum_{k_{\alpha}=1}^{s_{\vec{i},j}} r_{\vec{i}, j, k_{\alpha}})\,.
\end{align}
The case $b_j=2$ was elaborated in Eq.\ref{moment2tijest}.

For $ \prod_j \langle h_{\vec{i}, j} (\thv)\rangle^{b_j}$, we define an unbiased estimator as:
\begin{equation} \label{estimatorofmomentsforpowers}
    \widehat{\mathcal{R}}_{\vec{i},z} = \frac{1}{  \mathbb{E}[\prod_j \prod_{t=0}^{b_j-1} (s_{\vec{i},j} - t)]} \prod_j \prod_{\alpha=1}^{b_j} (\sum_{k_{j,\alpha}}^{s_{\vec{i},j}} r_{\vec{i}, j, k_{j, \alpha}})\,.
\end{equation}
This involves estimating $\binom{z+J-1}{J}J\sum\limits_{j = 1}^{J}b_{j}$ terms to construct the following estimator for the cost:
\begin{equation} \label{estimatorpolynomialintracemultinomial}
     \widehat{\LC} =  \sum_{\vec{i},z} p_{\vec{i},z} \sum_{\substack{b_1+ b_2+\\ \cdots +b_J=z}} \binom{z}{b_1, b_2, \cdots b_J} [\prod_j (c_{\vec{i},j})^{b_j}] \widehat{\mathcal{R}}_{\vec{i},z}\,,
\end{equation}
and establish the following proposition.
\begin{proposition}
\label{prop5}
$\widehat{\LC}$ is an unbiased estimator for $\LC$.
\end{proposition}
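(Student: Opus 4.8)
The plan is to prove unbiasedness by pushing the expectation through the outer sums, thereby reducing the entire claim to the single identity $\mathbb{E}[\widehat{\mathcal{R}}_{\vec{i},z}] = \prod_j \langle h_{\vec{i},j}(\thv)\rangle^{b_j}$. Since the prefactors $p_{\vec{i},z}$, the multinomial coefficients $\binom{z}{b_1,\dots,b_J}$, and the products $\prod_j c_{\vec{i},j}^{b_j}$ are all deterministic, linearity of expectation gives $\mathbb{E}[\widehat{\LC}] = \sum_{\vec{i},z} p_{\vec{i},z} \sum_{b_1+\cdots+b_J=z} \binom{z}{b_1,\dots,b_J}[\prod_j c_{\vec{i},j}^{b_j}]\,\mathbb{E}[\widehat{\mathcal{R}}_{\vec{i},z}]$, so that once the key identity is established, a term-by-term comparison of this expression with $\LC(\thv)$ in \eqref{costpolynomialintracemultinomial} closes the argument immediately.

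Next I would establish the identity for $\widehat{\mathcal{R}}_{\vec{i},z}$ defined in \eqref{estimatorofmomentsforpowers} by conditioning on the shot allocation $\{s_{\vec{i},j}\}$. Conditioned on these counts, the single-shot outcomes $r_{\vec{i},j,k}$ are mutually independent over all $(j,k)$, each with mean $\langle h_{\vec{i},j}(\thv)\rangle$. Interpreting each inner product $\prod_{\alpha=1}^{b_j}(\sum_{k_\alpha} r_{\vec{i},j,k_\alpha})$ as a sum over \emph{distinct} ordered indices — precisely the degree-$b_j$ U-statistic kernel of Appendix~\ref{app:U-statistics} for $\langle h_{\vec{i},j}(\thv)\rangle^{b_j}$ — the conditional expectation factorizes across $j$, because the measurements of different Hamiltonian terms are independent. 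Applying the U-statistic result underlying the proposition for $\widehat{\xi}_{\vec{i},j}$ to each factor yields, conditioned on the counts, $\prod_j \big[\prod_{t=0}^{b_j-1}(s_{\vec{i},j}-t)\big]\,\langle h_{\vec{i},j}(\thv)\rangle^{b_j}$ for the numerator.

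Finally I would remove the conditioning via the tower property. Because $\prod_j \langle h_{\vec{i},j}(\thv)\rangle^{b_j}$ is deterministic, the outer expectation over the shot counts pulls out exactly the factor $\mathbb{E}[\prod_j \prod_{t=0}^{b_j-1}(s_{\vec{i},j}-t)]$, which is precisely the normalizing denominator in \eqref{estimatorofmomentsforpowers}; the two cancel and leave $\mathbb{E}[\widehat{\mathcal{R}}_{\vec{i},z}] = \prod_j \langle h_{\vec{i},j}(\thv)\rangle^{b_j}$. I expect the main obstacle to be the bookkeeping in this last step: the counts $s_{\vec{i},j}$ produced by the multinomial allocation are generally correlated across $j$, so the denominator cannot be split into a product of per-term expectations and must be kept as a single joint expectation. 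Unbiasedness survives precisely because the denominator is \emph{defined} as this joint expectation, so the conditioning argument matches it term for term. Making this rigorous — in particular ensuring $s_{\vec{i},j}\geq b_j$ for every term so the distinct-index kernel is well defined, and confirming that the ``diagonal'' contributions with repeated shot indices are genuinely excluded (they would otherwise introduce $\langle h^2\rangle$-type bias) — is the crux, whereas the factorization across $j$ and the per-factor U-statistic evaluation are routine given the earlier propositions.
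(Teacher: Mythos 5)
Your proof is correct and has the same overall skeleton as the paper's: linearity of expectation reduces everything to the single identity $\mathbb{E}[\widehat{\mathcal{R}}_{\vec{i},z}] = \prod_j \langle h_{\vec{i},j}(\thv)\rangle^{b_j}$, after which a term-by-term comparison with \eqref{costpolynomialintracemultinomial} finishes the argument. Where you genuinely diverge is in how that identity is justified. The paper disposes of it in one line by invoking Wald's equation to assert $\mathbb{E}\big[\prod_j \prod_{\alpha=1}^{b_j} \sum_{k_{j,\alpha}}^{s_{\vec{i},j}} r_{\vec{i},j,k_{j,\alpha}}\big] = \mathbb{E}\big[\prod_j \prod_{t=0}^{b_j-1}(s_{\vec{i},j}-t)\big] \prod_j \langle h_{\vec{i},j}(\thv)\rangle^{b_j}$; but Wald's equation as stated earlier in the paper covers a single sum of a random number of i.i.d.\ variables, not a product of random-length sums whose lengths are correlated across $j$ under the multinomial allocation. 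Your conditioning-plus-tower-property argument is exactly the rigorous substitute for that step: conditioned on the counts, the distinct-index (U-statistic) sum factorizes across $j$ and has conditional expectation $\prod_j \big[\prod_{t=0}^{b_j-1}(s_{\vec{i},j}-t)\big] \langle h_{\vec{i},j}(\thv)\rangle^{b_j}$, and the outer expectation over the counts then reproduces precisely the joint normalizer appearing in \eqref{estimatorofmomentsforpowers}, which cancels. You are also right on the two points you flag as the crux: the denominator must be kept as a joint expectation (the $s_{\vec{i},j}$ are correlated, so it does not split into per-$j$ factors, and the definition indeed never asks it to), and the diagonal contributions with repeated shot indices must be excluded, since they would introduce $\langle h^2_{\vec{i},j}(\thv)\rangle$-type bias; the paper's notation for $\widehat{\mathcal{R}}_{\vec{i},z}$ is ambiguous here, but the normalizer $\mathbb{E}[\prod_j \prod_t (s_{\vec{i},j}-t)]$ — the expected number of ordered tuples of distinct indices — confirms the distinct-index reading, consistent with Eq.~\eqref{moment2tijest}. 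One small simplification over your write-up: you do not need $s_{\vec{i},j} \geq b_j$ pointwise, since when $s_{\vec{i},j} < b_j$ the distinct-index sum is empty while $\prod_{t=0}^{b_j-1}(s_{\vec{i},j}-t) = 0$, so the conditional identity holds with both sides zero; all that is required is that the normalizing denominator be strictly positive.
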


\begin{proof}
Taking the expectation over the above definitions:
\begin{align}
    & \mathbb{E} [\widehat{\LC}] =  \sum_{\vec{i},z} p_{\vec{i},z} \sum_{\substack{b_1+ b_2+\\ \cdots +b_J=z}} \binom{z}{b_1, b_2, \cdots b_J} [\prod_j (c_{\vec{i},j})^{b_j}] \mathbb{E} [\widehat{\mathcal{R}}_{\vec{i},z}] \nonumber \\
    &= \sum_{\vec{i},z} p_{\vec{i},z} \sum_{b_1+ b_2+ \cdots +b_J=z} \binom{z}{b_1, b_2, \cdots b_J} [\prod_j (c_{\vec{i},j})^{b_j}] \nonumber \\ 
    &\frac{1}{  \mathbb{E}[\prod_j \prod_{t=0}^{b_j-1} (s_{\vec{i},j} - t)]} \mathbb{E} [\prod_j \prod_{\alpha=1}^{b_j} \sum_{k_{j,\alpha}}^{s_{\vec{i},j}} r_{\vec{i}, j, k_{j, \alpha}}]\,.
\end{align}
Using again Wald's inequality, we obtain:
\begin{align*}
    \mathbb{E} [\prod_j \prod_{\alpha=1}^{b_j} \sum_{k_{j,\alpha}}^{s_{\vec{i},j}} r_{\vec{i}, j, k_{j, \alpha}}] = \mathbb{E}[\prod_j \prod_{t=0}^{b_j-1} (s_{\vec{i},j} - t)] \prod_j \langle h_{\vec{i}, j} (\thv)\rangle^{b_j}\,.
\end{align*}
for which, when replacing in the previous equation, we obtain $ \mathbb{E} [\widehat{\LC}] = \LC(\thv)$.
\end{proof}

\section{Constructing an unbiased estimator for the Mean Squared Error Loss Function}
\label{app:mse}

The mean squared error (MSE) is widely used in machine learning. Here, we show how to construct an unbiased estimator for the MSE as an illustrative example of previously-studied polynomial function of the observables in Appendix~\ref{app:estimatorwithpowers}. We start by considering a simplified case for the MSE, where we do not decompose each measurement operator $H_i$ into a linear combination of Hermitian matrices. We then tackle the more complex case where $H_i$ is of the form presented in Eq.~\eqref{measurement_decomposed}. 

\subsection{Step 1: Simplified MSE}

Let $\{ \rho_{\vec{i}}, y_{\vec{i}}\}$ be a dataset of $N$ points, where the task consists of minimizing the following cost:

\begin{equation}\label{eq:msecostnocompact}
    \LC_{MSE} = \sum\limits_{\vec{i}} p_{\vec{i}} (y_{\vec{i}} - \langle h_{\vec{i}}(\thv) \rangle)^2,
\end{equation}
This corresponds to a weighted MSE cost. The most common case is when $p_{\vec{i}} = 1 / N$. We will take a step-by-step approach to construct unbiased estimators for such cost.

If we expand the brackets, we obtain,
\begin{equation*}
    \LC_{MSE}= \sum_{\vec{i}} p_{\vec{i}} (y_{\vec{i}}^2 - 2 y_{\vec{i}} \langle h_{\vec{i}}(\thv) \rangle + \langle h_{\vec{i}}(\thv) \rangle^2).
\end{equation*}
We denote $\widehat{\mathcal{E}_{\vec{i}}} $ as the estimator for $ \langle h_{\vec{i}}(\thv) \rangle$, $\widehat{\mathcal{Q}}_{\vec{i}} $ as the estimator for $ \langle h_{\vec{i}}(\thv) \rangle^2$.

\begin{align}\label{eq:estimatorsmse}
    \widehat{\mathcal{E}_{\vec{i}}} &= \frac{1}{\mathbb{E}[s_{\vec{i}}]}\sum_{k=1}^{s_{\vec{i}}} r_{\vec{i},k}\,, \\
    \widehat{\mathcal{Q}}_{\vec{i}} &= \frac{1}{\mathbb{E}[s_{\vec{i}} (s_{\vec{i}} - 1)]}\sum_{k=1}^{s_{\vec{i}}} \sum_{k'=1, k' \ne k}^{s_{\vec{i}}} r_{\vec{i},k} r_{\vec{i},k'}\,,
\end{align}
where $\widehat{\mathcal{E}_{\vec{i}}} $ is an unbiased estimator which is straightforward from Wald's equation. $\widehat{\mathcal{Q}}_{\vec{i}} $ is also an unbiased estimator for $ \langle h_{\vec{i}}(\thv) \rangle^2$, following the reasoning used in Eq.~\eqref{App:Varequalcasesquareoutside} in the proof of Prop.~\ref{prop-var-mother-cost}. Then, defining $\widehat{\LC}_{MSE}$ as
\begin{equation}\label{eq:unbiasedestimatorsmse}
    \widehat{\LC}_{MSE}= \sum_{\vec{i}} p_{\vec{i}} (y_{\vec{i}}^2 - 2 y_{\vec{i}} \widehat{\mathcal{E}_{\vec{i}}} + \widehat{\mathcal{Q}}_{\vec{i}}),
\end{equation}
we establish the following proposition.
\begin{proposition}
\label{prop-unbiased-simple-mse-cost}
$\widehat{\LC}_{MSE}$ is an unbiased estimator for $\LC_{MSE}$.
\end{proposition}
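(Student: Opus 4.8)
The plan is to apply linearity of expectation to $\widehat{\LC}_{MSE}$ and verify term-by-term that its expectation reproduces the three pieces of the expanded cost. Since $p_{\vec{i}}$, $y_{\vec{i}}$ and $y_{\vec{i}}^2$ are deterministic constants, the only stochastic objects are the two estimators $\widehat{\mathcal{E}}_{\vec{i}}$ and $\widehat{\mathcal{Q}}_{\vec{i}}$, so it suffices to show $\mathbb{E}[\widehat{\mathcal{E}}_{\vec{i}}] = \langle h_{\vec{i}}(\thv)\rangle$ and $\mathbb{E}[\widehat{\mathcal{Q}}_{\vec{i}}] = \langle h_{\vec{i}}(\thv)\rangle^2$ for each $\vec{i}$, and then reassemble.

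First I would handle $\widehat{\mathcal{E}}_{\vec{i}}$, which is exactly the degree-one estimator already treated in Prop.~\ref{prop-unbiased-mother-cost}. Applying Wald's equation to $\sum_{k=1}^{s_{\vec{i}}} r_{\vec{i},k}$---the $r_{\vec{i},k}$ being i.i.d.\ single-shot outcomes with mean $\langle h_{\vec{i}}(\thv)\rangle$, and the random shot count $s_{\vec{i}}$ being independent of these outcomes---gives $\mathbb{E}[\sum_k r_{\vec{i},k}] = \mathbb{E}[s_{\vec{i}}]\langle h_{\vec{i}}(\thv)\rangle$, so normalizing by $\mathbb{E}[s_{\vec{i}}]$ yields unbiasedness immediately.

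The substantive step is $\widehat{\mathcal{Q}}_{\vec{i}}$, and this is where I expect the only real care to be needed. The double sum runs only over distinct indices $k \ne k'$, so each summand $r_{\vec{i},k} r_{\vec{i},k'}$ factorizes into a product of two independent single-shot variables, each with mean $\langle h_{\vec{i}}(\thv)\rangle$; this is precisely why the off-diagonal sum is used rather than a naive square, which would pick up the second moment $\langle h^2_{\vec{i}}(\thv)\rangle$ on the diagonal. Conditioning on $s_{\vec{i}}$, the number of ordered pairs $(k,k')$ with $k \ne k'$ equals $s_{\vec{i}}(s_{\vec{i}}-1)$, so by the same Wald-type argument used to obtain Eq.~\eqref{App:Varequalcasesquareoutside} in the proof of Prop.~\ref{prop-var-mother-cost} one gets $\mathbb{E}[\sum_{k \ne k'} r_{\vec{i},k} r_{\vec{i},k'}] = \mathbb{E}[s_{\vec{i}}(s_{\vec{i}}-1)]\,\langle h_{\vec{i}}(\thv)\rangle^2$. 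Dividing by $\mathbb{E}[s_{\vec{i}}(s_{\vec{i}}-1)]$ then gives $\mathbb{E}[\widehat{\mathcal{Q}}_{\vec{i}}] = \langle h_{\vec{i}}(\thv)\rangle^2$. I note that this implicitly requires $\mathbb{E}[s_{\vec{i}}(s_{\vec{i}}-1)]>0$, i.e.\ that at least two shots are allocated with positive probability, consistent with the degree-two U-statistic needing $s_{\vec{i}} \geq 2$.

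Finally I would reassemble the pieces: substituting the two expectations back gives $\mathbb{E}[\widehat{\LC}_{MSE}] = \sum_{\vec{i}} p_{\vec{i}}(y_{\vec{i}}^2 - 2 y_{\vec{i}}\langle h_{\vec{i}}(\thv)\rangle + \langle h_{\vec{i}}(\thv)\rangle^2) = \sum_{\vec{i}} p_{\vec{i}}(y_{\vec{i}} - \langle h_{\vec{i}}(\thv)\rangle)^2 = \LC_{MSE}$, which completes the proof. The main obstacle, such as it is, lies entirely in correctly counting the off-diagonal ordered pairs and in recognizing that the off-diagonal construction is exactly what kills the spurious diagonal contribution; everything else is bookkeeping and a direct appeal to results already established earlier in the excerpt.
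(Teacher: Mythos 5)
Your proposal is correct and follows essentially the same route as the paper's proof: linearity of expectation, Wald's equation for the linear term $\widehat{\mathcal{E}}_{\vec{i}}$, and the off-diagonal pair-counting argument (citing the same step used for Eq.~\eqref{App:Varequalcasesquareoutside} in Prop.~\ref{prop-var-mother-cost}) to show $\mathbb{E}[\widehat{\mathcal{Q}}_{\vec{i}}] = \langle h_{\vec{i}}(\thv)\rangle^2$, followed by recombination into $\LC_{MSE}$. Your explicit remark that unbiasedness of $\widehat{\mathcal{Q}}_{\vec{i}}$ requires $\mathbb{E}[s_{\vec{i}}(s_{\vec{i}}-1)]>0$ is a small but sensible addition the paper leaves implicit.
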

\begin{proof}
Using the definitions in Eq.~\eqref{eq:estimatorsmse} and Eq.~\eqref{eq:unbiasedestimatorsmse} we can write
\begin{align}
    & \mathbb{E}[\widehat{\LC}_{MSE}] = \mathbb{E}[\sum_{\vec{i}} p_{\vec{i}}   y_{\vec{i}}^2] - 2 \mathbb{E}[\sum_{\vec{i}} y_{\vec{i}} p_{\vec{i}} \frac{1}{\mathbb{E}[s_{\vec{i}}]}\sum_{k=1}^{s_{\vec{i}}} r_{\vec{i},k}] \nonumber\\ 
    &+ \mathbb{E}[ \sum_i p_i \frac{1}{\mathbb{E}[s_{\vec{i}} (s_{\vec{i}} - 1)]}\sum_{k=1}^{s_{\vec{i}}} \sum_{k'=1, k' \ne k}^{s_{\vec{i}}} r_{\vec{i},k} r_{\vec{i},k'}]\,. 
\end{align}
Using Wald's equation we can evaluate this expression to give
\begin{align}
    & \mathbb{E}[\widehat{\LC}_{MSE}] = \sum_{\vec{i}} p_i y_i^2 - 2 \sum_i y_i p_i \frac{\mathbb{E}[s_{\vec{i}}]}{\mathbb{E}[s_{\vec{i}}]} \langle h_{\vec{i}}(\thv) \rangle\nonumber\\
    &+ \sum_{\vec{i}} p_i \frac{\mathbb{E}[s_{\vec{i}} (s_{\vec{i}} - 1)]}{\mathbb{E}[s_{\vec{i}} (s_{\vec{i}} - 1)]} (\langle h_{\vec{i}}(\thv) \rangle)^{2}.
\end{align}
Finally, we can evaluate these terms and factorize, which leads to,
\begin{align}
     \mathbb{E}[\widehat{\LC}_{MSE}] &= \sum_{\vec{i}} p_i  (y_i^2 - 2 y_i \langle h_{\vec{i}}(\thv) \rangle + \langle h_{\vec{i}}(\thv) \rangle^2) \\
     & = \LC_{MSE}\,.
\end{align}
\end{proof}
This means that we can estimate this simplified MSE loss function by evaluating the first term exactly, sampling from the second term according to the probability distribution $p_{\vec{i}}y_{\vec{i}}$ and sampling the third term according to the distribution $p_{\vec{i}}$.

\subsection{Step 2: Full MSE}\label{hamiltoniantermsmsesection}

Now, we assume the Hamiltonians $H_{\vec{i}}$ can be decomposed into many terms such that $H_{\vec{i}} = \sum_j c_{\vec{i},j} H_{\vec{i},j}$. Hence the new full MSE cost can be expressed as
\begin{align}\label{eq:msecostnocompactwithhamiltonianterms}
    & \LC'_{MSE}= \sum_{\vec{i}} p_{\vec{i}} (y_{\vec{i}} - \sum_j c_{\vec{i},j} \langle h_{\vec{i}, j} (\thv)\rangle)^2 \nonumber\\
    &= \sum_{\vec{i}} p_{\vec{i}} y_{\vec{i}}^2 - 2 \sum_{\vec{i},j} p_{\vec{i}} y_{\vec{i}} c_{\vec{i},j} \langle h_{\vec{i}, j} (\thv)\rangle \nonumber \\
    &+ \sum_{\vec{i},j, j', j' \ne j} p_{\vec{i}} c_{\vec{i},j} c_{\vec{i}, j'} \langle h_{\vec{i}, j} (\thv)\rangle \langle h_{\vec{i}, j'} (\thv)\rangle+ \sum_{\vec{i},j} p_{\vec{i}} c_{\vec{i},j}^2 \langle h_{\vec{i}, j} (\thv)\rangle^2.
\end{align}
Defining $\widehat{\mathcal{E}}_{\vec{i},j} $ as the estimator for $ T_{\vec{i},j}$ as previously,
\begin{equation*}
    \widehat{\mathcal{E}}_{\vec{i},j} = \frac{1}{\mathbb{E}[s_{\vec{i},j}]}\sum_{k=1}^{s_{\vec{i},j}} r_{\vec{i},j,k}.
\end{equation*}
When $j \ne j'$ we define,
\begin{equation}
    \widehat{\mathcal{Q}}_{\vec{i},j,j'} = \frac{1}{\mathbb{E}[s_{\vec{i},j}s_{\vec{i},j'}]} \sum_{k=1}^{s_{\vec{i},j}} \sum_{k'=1}^{s_{\vec{i},j'}} r_{\vec{i},j,k} r_{\vec{i},j',k'}\,,
\end{equation}
and finally,
\begin{equation}\label{moment2tijest}
\widehat{\mathcal{Q}}_{\vec{i},j} = \frac{1}{\mathbb{E}[s_{\vec{i},j} (s_{\vec{i},j} - 1)]}\sum_{k=1}^{s_{\vec{i},j}} \sum_{k'=1, k' \ne k}^{s_{\vec{i},j}} r_{\vec{i},j,k} r_{\vec{i},j,k'}.
\end{equation}
Collecting these terms we define an estimator of the full MSE loss function as
\begin{align}\label{eq:unbiasedestimatorsmsewithhamiltonianterms}
    &\widehat{\LC}'_{MSE}= \sum_{\vec{i}} p_{\vec{i}} y_{\vec{i}}^2 - 2 \sum_{\vec{i},j} p_{\vec{i}} y_{\vec{i}} c_{\vec{i},j} \widehat{\mathcal{E}}_{\vec{i},j} \nonumber \\ 
    &+ \sum_{\vec{i}, j,j', j' \ne j} p_{\vec{i}} c_{\vec{i},j} c_{\vec{i},j'} \widehat{\mathcal{Q}}_{\vec{i},j,j'} + \sum_{\vec{i}, j} p_{\vec{i}} c_{\vec{i},j}^2 \widehat{\mathcal{Q}}_{\vec{i},j},
\end{align}
which can be used to establish the following proposition.
\begin{proposition}
\label{prop4}
$\widehat{\LC}'_{MSE}$ is an unbiased estimator for $\LC'_{MSE}$.
\end{proposition}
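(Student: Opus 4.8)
The plan is to prove unbiasedness by a direct computation of $\mathbb{E}[\widehat{\LC}'_{MSE}]$, exploiting linearity of expectation to reduce the claim to matching each of the four estimator blocks in Eq.~\eqref{eq:unbiasedestimatorsmsewithhamiltonianterms} against the corresponding block in the expansion of $\LC'_{MSE}$ in Eq.~\eqref{eq:msecostnocompactwithhamiltonianterms}. Since the prefactors $p_{\vec{i}}$, $y_{\vec{i}}$, and $c_{\vec{i},j}$ are deterministic constants, they pull out of the expectation, and the first (constant) term $\sum_{\vec{i}} p_{\vec{i}} y_{\vec{i}}^2$ passes through unchanged. It then remains to verify the three single-block identities $\mathbb{E}[\widehat{\mathcal{E}}_{\vec{i},j}] = \langle h_{\vec{i}, j}(\thv)\rangle$, $\mathbb{E}[\widehat{\mathcal{Q}}_{\vec{i},j,j'}] = \langle h_{\vec{i}, j}(\thv)\rangle \langle h_{\vec{i}, j'}(\thv)\rangle$ for $j\ne j'$, and $\mathbb{E}[\widehat{\mathcal{Q}}_{\vec{i},j}] = \langle h_{\vec{i}, j}(\thv)\rangle^2$.

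The first identity is immediate from Wald's equation, exactly as in Prop.~\ref{prop-unbiased-mother-cost}: each single-shot outcome $r_{\vec{i},j,k}$ has mean $\langle h_{\vec{i},j}(\thv)\rangle$, the shot count $s_{\vec{i},j}$ is independent of the outcomes, so the numerator evaluates to $\mathbb{E}[s_{\vec{i},j}]\langle h_{\vec{i},j}(\thv)\rangle$ and the normalization cancels. The cross term with $j \ne j'$ follows the reasoning used for $E_2$ in the proof of Prop.~\ref{prop-var-mother-cost}: shots measuring distinct operators $h_{\vec{i},j}$ and $h_{\vec{i},j'}$ are independent, so the double sum factorizes and Wald's equation applied to each index gives $\mathbb{E}[\sum_{k,k'} r_{\vec{i},j,k} r_{\vec{i},j',k'}] = \mathbb{E}[s_{\vec{i},j} s_{\vec{i},j'}]\langle h_{\vec{i},j}(\thv)\rangle \langle h_{\vec{i},j'}(\thv)\rangle$, which the prefactor $1/\mathbb{E}[s_{\vec{i},j} s_{\vec{i},j'}]$ cancels.

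The one place requiring care, and the main obstacle, is the diagonal square term $\widehat{\mathcal{Q}}_{\vec{i},j}$ estimating $\langle h_{\vec{i},j}(\thv)\rangle^2$. Here the two shots $r_{\vec{i},j,k}$ and $r_{\vec{i},j,k'}$ come from the same operator, so they are only independent when $k \ne k'$; this is precisely why the inner sum excludes $k'=k$. Restricting to $k \ne k'$ ensures $\mathbb{E}[r_{\vec{i},j,k} r_{\vec{i},j,k'}] = \langle h_{\vec{i},j}(\thv)\rangle^2$ rather than picking up the second moment $\langle h^2_{\vec{i},j}(\thv)\rangle$, which would introduce a bias. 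Counting the $s_{\vec{i},j}(s_{\vec{i},j}-1)$ ordered off-diagonal pairs and applying Wald's equation as in Eq.~\eqref{App:Varequalcasesquareoutside} yields $\mathbb{E}[\sum_{k\ne k'} r_{\vec{i},j,k} r_{\vec{i},j,k'}] = \mathbb{E}[s_{\vec{i},j}(s_{\vec{i},j}-1)]\langle h_{\vec{i},j}(\thv)\rangle^2$, again cancelled by the normalization; this is exactly the degree-$2$ U-statistic structure of Appendix~\ref{app:U-statistics}. Assembling the four blocks reproduces Eq.~\eqref{eq:msecostnocompactwithhamiltonianterms} term by term, giving $\mathbb{E}[\widehat{\LC}'_{MSE}] = \LC'_{MSE}$.
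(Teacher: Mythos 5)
Your proposal is correct and follows essentially the same route as the paper's proof: linearity of expectation reduces the claim to the three identities in Eq.~\eqref{givenestimatorsmseunbiased}, each established via Wald's equation, with the diagonal square term handled exactly as in Eq.~\eqref{App:Varequalcasesquareoutside} by excluding $k'=k$ to avoid picking up the second moment. Your treatment of the $j\ne j'$ cross term and the $s_{\vec{i},j}(s_{\vec{i},j}-1)$ pair count matches the paper's argument step for step.
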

\begin{proof}
Taking the above definition we can write the following result:
\begin{align}
    & \mathbb{E}[\widehat{\LC}'_{MSE}]= \sum_{\vec{i}} p_{\vec{i}} y_{\vec{i}}^2 - 2 \sum_{\vec{i},j} p_{\vec{i}} y_{\vec{i}} c_{\vec{i},j}  \frac{1}{\mathbb{E}[s_{\vec{i},j}]} \mathbb{E}[\sum_{k=1}^{s_{\vec{i},j}} r_{\vec{i},j,k}] \nonumber \\ 
    &+ \sum_{\vec{i}, j,j', j' \ne j} p_{\vec{i}} c_{\vec{i},j} c_{\vec{i},j'} \frac{1}{\mathbb{E}[s_{\vec{i},j}s_{\vec{i},j'}]} \mathbb{E}[\sum_{k=1}^{s_{\vec{i},j}} \sum_{k'=1}^{s_{\vec{i},j'}} r_{\vec{i},j,k} r_{\vec{i},j',k'}] \nonumber \\ 
    &+ \sum_{\vec{i}, j} p_{\vec{i}} c_{\vec{i},j}^2 \frac{1}{\mathbb{E}[s_{\vec{i},j} (s_{\vec{i},j} - 1)]} \mathbb{E}[\sum_{k=1}^{s_{\vec{i},j}} \sum_{k'=1, k' \ne k}^{s_{\vec{i},j}} r_{\vec{i},j,k} r_{\vec{i},j,k'}].
\end{align}
Given that
\begin{align}\label{givenestimatorsmseunbiased}
    &\mathbb{E}[\sum_{k=1}^{s_{\vec{i},j}} r_{\vec{i},j,k}] = \mathbb{E}[s_{\vec{i},j}] \langle h_{\vec{i}, j} (\thv)\rangle\\
    & \mathbb{E}[\sum_{k=1}^{s_{\vec{i},j}} \sum_{k'=1}^{s_{\vec{i},j'}} r_{\vec{i},j,k} r_{\vec{i},j',k'}] = \mathbb{E}[s_{i,j} s_{i,j'}] \langle h_{\vec{i}, j} (\thv)\rangle \langle h_{\vec{i}, j'} (\thv)\rangle \\
    & \mathbb{E}[\sum_{k=1}^{s_{\vec{i},j}} \sum_{k'=1, k' \ne k}^{s_{\vec{i},j}} r_{\vec{i},j,k} r_{\vec{i},j,k'}] = \mathbb{E}[s_{\vec{i},j} (s_{\vec{i},j} - 1)] (\langle h_{\vec{i}, j} (\thv)\rangle)^2\,.
\end{align}
We can combine the above expressions to obtain
\begin{align}
      \mathbb{E}[\widehat{\LC}'_{MSE}]&= \sum_{\vec{i}} p_{\vec{i}} y_{\vec{i}}^2  - 2 \sum_{\vec{i},j} p_{\vec{i}} y_{\vec{i}} c_{\vec{i},j} \langle h_{\vec{i}, j} (\thv)\rangle \nonumber \\
     & \quad + \sum_{\vec{i}, j,j', j' \ne j} p_{\vec{i}} c_{\vec{i},j} c_{\vec{i}, j'} \langle h_{\vec{i}, j} (\thv)\rangle \langle h_{\vec{i}, j'} (\thv)\rangle \nonumber \\
     & \quad + \sum_{\vec{i}, j} p_{\vec{i}} c_{\vec{i},j}^2  (\langle h_{\vec{i}, j} (\thv)\rangle)^2 \\
     &= \LC'_{MSE} \,.
\end{align}
\end{proof}

This shows we can estimate the full MSE cost function by evaluating 4 terms. One can be evaluated exactly and the others can be estimated by sampling from their corresponding probability distributions built with the weights as parameters.

\end{document}